\DeclarePairedDelimiter\bra{\langle}{\rvert}
\DeclarePairedDelimiter\ket{\lvert}{\rangle}
\DeclarePairedDelimiterX\braket[2]{\langle}{\rangle}{#1 \delimsize\vert #2}
\numberwithin{equation}{section}
\newcommand{\calO}{\mathcal{O}}
\newcommand{\calA}{\mathcal{A}}
\newcommand{\bE}{\mathbb{E}}
\DeclareMathOperator{\tr}{Tr}
\newtheorem{theorem}{Theorem}
\newtheorem{lemma}{Lemma}
\newtheorem{proposition}{Proposition}
\newtheorem{definition}{Definition}
\newtheorem{corollary}{Corollary}
\begin{document}
\title{Quantum Semidefinite Programming with Thermal Pure Quantum States}

%\date{\today}
\author{Oscar Watts}
\email{oscarahwatts@gmail.com}
\affiliation{Quantinuum, Partnership House, Carlisle Place, London SW1P 1BX, United Kingdom}
\author{Yuta Kikuchi}
\email{yuta.kikuchi@quantinuum.com}
\affiliation{Quantinuum K.K., Otemachi Financial City Grand Cube 3F, 1-9-2 Otemachi, Chiyoda-ku, Tokyo, Japan}
\affiliation{Interdisciplinary Theoretical and Mathematical Sciences Program (iTHEMS), RIKEN, Wako, Saitama 351-0198, Japan}
\author{Luuk Coopmans}
\email{luuk.coopmans@quantinuum.com}
\affiliation{Quantinuum, Partnership House, Carlisle Place, London SW1P 1BX, United Kingdom}

\begin{abstract}
Semidefinite programs (SDPs) are a particular class of convex optimization problems with applications in combinatorial optimization,  operational research, and quantum information science. Seminal work by Brand\~{a}o and Svore shows that a ``quantization'' of the matrix multiplicative-weight algorithm can provide approximate solutions to SDPs quadratically faster than the best classical algorithms by using a quantum computer as a Gibbs-state sampler. We propose a modification of this quantum algorithm and show that a similar speedup can be obtained by replacing the Gibbs-state sampler with the preparation of thermal pure quantum (TPQ) states.
While our methodology incurs an additional problem-dependent error, which decreases as the problem size grows, it avoids the preparation of purified Gibbs states, potentially saving a number of ancilla qubits. 
In addition, we identify a spectral condition which, when met, reduces the resources further, and shifts the computational bottleneck from Gibbs state preparation to ground-state energy estimation.
With classical state-vector simulations, we verify the efficiency of the algorithm for particular cases of Hamiltonian learning problems. We are able to obtain approximate solutions for two-dimensional spinless Hubbard and one-dimensional Heisenberg XXZ models for sizes of up to $N=2^{10}$ variables. For the Hubbard model, we provide an estimate of the resource requirements of our algorithm, including the number of Toffoli gates and the number of qubits.       
\end{abstract}

\maketitle

\newpage

\etocdepthtag.toc{mtchapter}
\etocsettagdepth{mtchapter}{subsection}
\etocsettagdepth{mtappendix}{none}

%%%%%%%%%%%%%%%%%%%%%%%%%%%%%%%%%%%%%%%%%%%%%%%%%%%%%%%%%%%%%%%%%%%%%%

\section{Introduction}

Semidefinite programs (SDPs) are a class of convex optimization problems concerned with optimizing a linear function over the space of positive semidefinite matrices of dimension $N$ subject to $m$ linear constraints. Many industry-relevant use cases can be cast as semidefinite programs, ranging from job-scheduling in operations research~\cite{Skutella2001} to autonomous navigation in robotics~\cite{mangelson2022guaranteed} and quantum channel tomography in quantum information science~\cite{Skrzypczyk2023}.
Approximate solutions can be obtained efficiently, i.e., with polynomial classical computational resources, but as the size of the semidefinite program increases, the need for faster SDP solvers grows.

Quantum algorithms promise improvements in time and space complexity compared to their classical counterparts.
The work by Brand\~{a}o and Svore~\cite{Brandao2017}, and subsequent works~\cite{vanapeldoorn2019, brandao2019, vanApeldoorn2020quantumsdpsolvers}, have proven that MMW-based quantum algorithms using Gibbs-state sampling can obtain approximate solutions to generic SDPs with query complexity $\tilde{\mathcal{O}}\big(\sqrt{m} +\sqrt{N}\big)$.\footnote{To improve readability, in this paper we use the $\tilde{\mathcal{O}}(\cdot)$ notation to hide factors poly-logarithmic in $N$, $m$, and polynomial in other parameters such as the SDP approximation error $\varepsilon$ defined in the main-body of the manuscript. The more precise $\mathcal{O}(\cdot)$ notation will be used later when all the relevant variables have been introduced. See for example Section~\ref{sec:conclusion} for a comparison between the complexity of our and other quantum SDP algorithms.  
}
This is a quadratic speedup over the best possible classical algorithms, which in general have the complexity lower bound of $\tilde{\mathcal{O}}(m+N)$~\cite{Brandao2017}.
In addition, a quantum algorithm based on the interior point method has been developed which exploits a quantum computer's ability to solve systems of linear equations~\cite{kerenidis2018quantum, augustino2022quantum}.
Both these quantum algorithms are expected to require a fault-tolerant quantum computer which is currently not available. Another class of algorithms, known as variational quantum SDP solvers~\cite{patel2021variational, Bharti2022, patti2022quantum}, do not have these limitations and can be run on near-term quantum devices. Yet it is unclear if, and what, speedup this method offers over classical algorithms due to problems such as barren plateaus in the optimization landscape~\cite{McClean2018}.

In this paper, we build upon the quantum SDP algorithm proposed in~\cite{brandao2019} and show that a quantum algorithm using thermal pure quantum (TPQ) states~\cite{Sugiura2012, Powers2023, Coopmans_2023} provides approximate solutions to SDPs with runtime $\tilde{\mathcal{O}}\big(\sqrt{m} +\sqrt{N}\big)$.
We refer to our algorithm as the TPQ-SDP solver.
While our algorithm still requires an error-corrected quantum computer to demonstrate a quantum speedup for generic SDPs, it reduces some of the resource requirements compared to algorithms using purified Gibbs states~\cite{vanapeldoorn2019}.\footnote{To avoid confusion about terminology, note that TPQ states are pure states that are different from Gibbs states and also their purifications. They are from an ensemble of random pure states, which on average approximates Gibbs state expectation values. The precise definition will be given in Definition~\ref{def:TPQ_state} in Section~\ref{sec:notations}.}
We give explicit circuit implementations of most of the quantum operations, leaving only the specific block encoding implementations for the constraint matrices. We rigorously upper bound the number of queries to block-encoding circuits, and lower bound the success probability of preparing TPQ states using the quantum-eigenvalue transformation (QET)~\cite{Gilyen2019, Martyn2021grand}. This bound may be of independent interest. Our construction may also be useful for the development of classical (quantum-inspired) TPQ-SDP solvers.

As TPQ states only approximate Gibbs-state expectation values~\cite{Sugiura2012, Coopmans_2023}, our algorithm incurs another approximation error. This error reduces the success probability of the algorithm, and is expected to decrease when the size $N$ of the SDP is increased. How fast the error decreases depends on the particular problem one tries to solve and the number of constraints $m$. 
We argue that for physical Gibbs states, with an extensive free energy, the error decreases exponentially as $\mathcal{O}(e^{-\alpha \log{N}})$, where $\alpha$ is a constant determined by the problem. For the generic scenario, we prove a spectral condition on the Gibbs-state Hamiltonian which ensures that the error scales with $N^{-(1-2\tilde{\nu})}$, where $0<\tilde{\nu}<1/2$ depends on the problem. This condition reduces the complexity of preparing the TPQ states, and shifts the computational bottleneck from Gibbs-state preparation to ground-state energy estimation.  At the same time, the approximation error may make the current version of our algorithm unsuitable for SDPs where the desired target precision decreases at least linearly with $N$, or where $m\sim N$, such as relaxed QUBO problems like maximum cut~\cite{Goemans1995}. However, many other quantum SDP solvers are also not able to provide a genuine quantum speedup for such problems, which remains an active area of research~\cite{G_S_L_Brand_o_2022, Augustino2023}. 

The remainder of this paper is organized as follows. 
In Section~\ref{sec:notations}, we start by introducing the notations and conventions that are frequently used throughout the paper. 
We formally define the SDP optimization problem and review the classical matrix-multiplicative weight method in Section~\ref{sec:SDP}. 
In Section~\ref{sec:QSDP}, we present our quantum algorithm based on TPQ states and the main theorem for its complexity. 
In Section \ref{sec:numerics}, we show that our SDP solver can be exploited for solving the Hamiltonian learning problem, in particular for geometrically local quantum Hamiltonians such as the Hubbard model and XXZ Heisenberg model. We verify this with classical numerical simulations of our algorithm for problems of up to $N=2^n=1024$ variables and provide a table with resource requirements to realize the speedup on a quantum device. 
Finally, we discuss our findings and provide an outlook in Section~\ref{sec:conclusion}.

%%%%%%%%%%%%%%%%%%%%%%%%%%%%%%%%%%%%%%%%%%%%%%%%%%%%%%%%%%%%%%%%%%%%%%
 
\section{Notations and Conventions}
\label{sec:notations}
%\label{sec:notations}

We summarize the notations and conventions used in this paper. 
We let $\ln{y}$ and $\log{y}$ denote the natural and binary logarithm of $y\in\mathbb{C}$, respectively.
For a Hermitian matrix $A$, the minimum and maximum eigenvalues of $A$ are denoted by $\lambda_\mathrm{min}(A)$ and $\lambda_\mathrm{max}(A)$, respectively. 
The spectral norm of $A$ is defined by 
\begin{equation}
    \lVert A\rVert = \max_{\{\ket{x}; \braket{x}{x}=1\}}\bra{x}A\ket{x}.
\end{equation}
The expression $A\succeq B$ means that the matrix $A-B$ is positive semidefinite, i.e., $\lambda_\mathrm{min}(A-B)\ge0$, and $A\succ B$ implies that $A-B$ is positive definite, i.e. $\lambda_\mathrm{min}(A-B)>0$. The negative definite/semidefinite cases, $\prec$, and $\preceq$, are defined similarly. 
For a Hermitian matrix $A$ and a real function $f\in\mathbb{R}[x]$, the function $f(A)$ of matrix input $A$ is defined by \begin{equation} 
    f(A)=\sum_{\lambda}f(\lambda)\ket{\lambda}\bra{\lambda},
\end{equation} 
where the sum is taken over all the eigenvalues $\{\lambda\}$ of $A$ and $\{\ket{\lambda}\}$ are associated eigenstates. 

We use $\mathcal{D(H)}$ for the space of density matrices on $N$(= $2^n$)-dimensional Hilbert space~$\mathcal{H}$, that is, positive semidefinite matrices with unit trace. Capital $N$ denotes the size of a matrix and lowercase $n$ denotes an integer number of qubits. For simplicity, we only consider matrices of size $N = 2^n$ without loss of generality because a matrix of any size can be trivially extended with zeros such that this is the case.

For a Hermitian operator (Hamiltonian) $H$ acting on $\mathcal{H}$ and a positive real value (inverse temperature) $\beta$, the Gibbs state $\sigma_\beta\in\mathcal{D(H)}$ is defined by
\begin{equation}
\label{eq:Gibbs}
    \sigma_\beta = \frac{e^{-\beta H}}{\tr[e^{-\beta H}]}.
\end{equation}

A thermal pure quantum (TPQ) state associated with the Gibbs state~[Eq.~\eqref{eq:Gibbs}] refers to an $n$-qubit pure state $\ket{\psi_i}$ drawn randomly from an ensemble $\{\ket{\psi_i}\}_i$ such that,
\begin{equation}
\label{eq:TPQ_exp_diff}
    \bra{\psi_i}O_j\ket{\psi_i} - \tr[O_j\sigma_{\beta}]
    \approx 0, 
\end{equation} 
holds with high probability for all $O_j$ from a set of Hermitian operators $\{O_j\}$. 
Importantly, the probability that Eq.~\eqref{eq:TPQ_exp_diff} fails decreases as the system size $n$ grows and vanishes in the asymptotic limit.
See Definition~\ref{def:TPQ_state} for a formal statement of the TPQ state.

%%%%%%%%%%%%%%%%%%%%%%%%%%%%%%%%%%%%%%%%%%%%%%%%%%%%%%%%%%%%%%%%%%%%%%
 
\section{Semidefinite Programming}
\label{sec:SDP}

%-----------------------------------------%
 
\subsection{Setup of the Problem}
\label{sec:setup}

We focus on primal semidefinite programs (SDPs) formulated as the maximization of an objective function
\begin{equation}
\label{eq:sdploss}
    L = \tr[CX]
\end{equation}
with respect to the variable $X$, subject to constraints of the form
\begin{equation}
\label{eq:sdpconstr}
\begin{aligned}
    \tr [A_j X]\leq b_j+\varepsilon,&
    \quad
    \forall j = 1,\dots,m
    \\
    X &\succeq 0.
\end{aligned}
\end{equation}
SDPs with $\varepsilon>0$ may be solved with polynomial resources~\cite{lee2015faster} and will be our focus. 
The matrices $C$ and $\{A_j\}_{j=1}^{m}$ have dimension $N\times N$ and are all bounded and Hermitian. In addition, we require the constraint matrices to satisfy $-I\preceq A_j\preceq I$. We label the optimal solution of the SDP as $X_{\mathrm{opt}}$. Other formulations of SDPs, such as where the $\leq$ sign is replaced with an equality sign, or where the maximization is replaced by a minimization, can be mapped to the formulation we consider here.

For the purpose of our quantum algorithm, we define an equivalent SDP by rescaling all the constraints by a constant $R$ such that $\tr[X_\mathrm{opt}/R]\leq 1$, which leads to the constraints of the form, $ \tr [A_jX/R]\leq (b_j+\varepsilon)/R$.  This is necessary since in our quantum algorithm the variable $X$ needs to be embedded in a density matrix, which can be prepared on a quantum computer. 
We then increase the dimension of all the constraint matrices by one
\begin{equation}
\label{eq:extend constraint matrices}
    [A_j]\mapsto
    \begin{bmatrix}
        A_j & 0 \\
        0 & 0 
    \end{bmatrix},
\end{equation}
and introduce a new variable $\omega\geq 0$ such that any solution to the SDP, $X$, can be stored inside a matrix $\rho$,
\begin{equation}
    \rho := 
    \begin{bmatrix}
        X & 0 \\
        0 & \omega 
    \end{bmatrix}.
\end{equation}
The variable $\omega$ is defined such that $\tr[\rho]=1$, and because $X\succeq0$, $\rho$ is a density matrix. The introduction of $\omega$ has no effect on the solution $X_\mathrm{opt}$, or the steps required to solve the SDP.

A consequence of the rescaling is that solving the original SDP with an error tolerance of $\varepsilon$ requires solving the rescaled SDP with an error tolerance of $\epsilon = \varepsilon/R$. This can make the quantum SDP solver unfavourable for problems where $R$ grows fast with problem size.\footnote{Requiring $-I\preceq A_j\preceq I$ and $\tr[X_\mathrm{opt}]\leq 1$ ensures that $-1\leq b_j \leq 1$ for all $j$.}
For the rest of the paper, we assume that the protocol above has been performed and focus on the rescaled SDP where $\tr[X_\mathrm{opt}]\leq 1$
and leave these considerations of specific problems for future work. For notational simplicity, we will leave out the factor of $R$ in the following sections and formulate the complexities in terms of the re-scaled error tolerance $\epsilon$. In the conclusion, sec.~\ref{sec:conclusion}, we will introduce $R$ again in order to compare the complexity of our algorithm to the complexity of other existing quantum SDP algorithms\footnote{A primal SDP also has a dual, which for this primal SDP takes the form
\begin{equation}
\begin{aligned}
    &\min b^Ty\\
    &\text{s.t. }\sum_{j=1}^m y_j A_j-C\succeq 0,\quad y\geq 0
\end{aligned}
\end{equation}
In analogy with the definition of $R$, one can define the constant $r$ as an upperbound on the 1-norm of an optimal solution $y_{\text{OPT}}$, $\lVert y_\text{OPT}\rVert_1\leq r$. An $\frac{\epsilon}{R}$-approximate solution to the rescaled primal SDP is an $\epsilon$-approximate solution to the original primal SDP, but not the dual SDP. To obtain an $\epsilon$-approximate solution to both the primal and dual SDPs, one must find an $\frac{\epsilon}{Rr}$-approximate solution to the rescaled primal SDP~\cite{vanApeldoorn2020quantumsdpsolvers}. For the rest of the paper, we focus on finding $\epsilon$-approximate solutions to the primal SDP only, but this may be changed easily by replacing $R\to Rr$ in Theorem~\ref{thm:total_complexity}.
}.

%------------------------------%
 
\subsection{Matrix Multiplicative Weights Algorithm and the Zero-Sum Approach}
\label{sec:MMW}

There are many classical and quantum algorithms to solve SDPs~(see e.g.~\cite{Anirudha2020, Augustino2023} and references therein). In our quantum algorithm, we focus on the matrix multiplicative weights (MMW)~\cite{Arora2005} method, whose computational bottleneck is calculating Gibbs-state expectation values. It is this subroutine where quantum computers offer a theoretical speedup. Here, we briefly review the MMW algorithm, and in the next section, we show how to quantise it by using thermal pure quantum states. 

The first step is to reduce the optimization problem to a binary search and feasibility problem. This involves replacing the objective function $L$ [Eq.~\eqref{eq:sdploss}] with a new constraint $\tr[CX]\geq a_{0}-\epsilon$, where $a_0$ is a guess for the optimal value of $L$. One then attempts to find a {\it feasible} matrix $X$, which satisfies all the constraints including the new one. If such an $X$ exists, we increase our guess and repeat because we know the optimal value is greater than $a_{0}-\epsilon$. If no feasible matrix is found, we decrease $a_0$ and repeat. We repeat this process until the desired accuracy is achieved. 
After $\lceil\log(1/\epsilon)\rceil$ iterations of the algorithm combined with a binary search, the final value of $a_0$ is a $\epsilon$-close approximation to the optimal value, $\tr[CX_\mathrm{opt}]$, of the SDP. 

Let $\mathcal{S}_\epsilon$ be the set of all positive semidefinite matrices which satisfy all the constraints, 
\begin{equation}\label{eq:set of satisfying matrices}
    \mathcal{S}_\epsilon 
    = 
    \{ \rho\in \mathcal{D}(\mathcal{H}) 
    |\tr[\rho A_j]\leq b_j+\epsilon
    \text{ for } j = 0,\dots,m\}. 
\end{equation}
Here, $A_0:=-C$ is the new constraint matrix from the objective function and $b_0:=-a_0$. 
The task is to find a feasible $\rho$ by deciding whether $\mathcal{S}_\epsilon$ is empty or not. 
In the MMW method, one iteratively constructs a so-called weight matrix $W_{\tau+1}=\exp(-\gamma\sum_{t=0}^{\tau} M_t)$, where the loss matrices $M_t$ are a particular linear combination of the constraints ${A_i}$, and $\gamma\in\mathbb{R}$ is a constant. At each step $\tau$, one computes a new loss $M_\tau$ by checking the constraints, updates $W_{\tau+1}$ by adding $M_\tau$ to the sum of the previous loss matrices, and constructs a new density matrix $\rho_{\tau+1}=W_{\tau+1}/\tr[W_{\tau+1}]$. If after $T$ iterations we have not found a density matrix that satisfies the constraints, the algorithm decides $\mathcal{S}_\epsilon$ is empty and outputs ``{\it infeasible}''.

There are multiple methods to calculate the loss matrix $M_\tau$, which all broadly have the same computational complexity. %For example, S.~Arora and S.~Kale introduced the Primal-Dual method~ \cite{Arora2005},\lc{is this ref correct?} which uses the relationship between the primal SDP in Eqs.~\eqref{eq:sdploss} and~\eqref{eq:sdpconstr} and its corresponding dual SDP. 
Here, we focus on the Zero-Sum approach as detailed in~\cite{Lee2015} and given in Algorithm~\ref{alg:zerosum}. In this approach, the loss matrix at step $\tau$ corresponds to the constraint matrix of a broken constraint. Specifically, one sets $\gamma = \frac{\epsilon}{4}$ and $M_\tau =  A_{j^*}$, where $j^*$ is defined as an index such that $\tr[\rho_\tau A_{j^*}]\geq b_{j^*}+\epsilon$. The sum over the loss matrices in the exponent of $W_\tau$ is therefore a parameterized linear combination of the constraints, $W_{\tau+1} = \exp(-\sum_{j = 0}^m \theta_j^{\tau+1} A_j)$, and at each iteration one updates the parameter $\theta_{j^*}^{\tau+1}$: $\theta_{j^*}^{\tau+1} = \theta_{j^*}^{\tau}+\frac{\epsilon}{4}$. One can prove, with approximate Jaynes principle~\cite{Jaynes1957}, that a maximum of $T=\lceil\frac{8}{\epsilon^2}\ln N\rceil$ parameter updates are needed to determine if $\mathcal{S}_\epsilon$ is empty or not. For completeness, we reproduce this proof in Appendix~\ref{app:Jaynes}.  

\begin{algorithm}
\caption{Zero-Sum Algorithm}
\begin{algorithmic}
    \Require $\epsilon>0$, a number of iterations $T = \lceil\frac{8}{\epsilon^2}\ln N\rceil$, $N\times N$ Hermitian matrices $\{A_j\}$ such that $-I\preceq A_j\preceq I$, and real numbers $-1\leq b_j\leq 1$, for $j = 0,\dots,m$.
    \State Let $\rho_0 =\frac{I}{N}$ and $\theta_j^0 = 0$ for all $j\in\{0,\dots,m\}$.
    \For{$\tau = 0,\dots, T$}
    \State Find an $j^*$ such that $\tr[\rho_{\tau}A_{j^*}]>b_{j^*}+\epsilon$
    \If{no $j^*$ exists}
    \State Halt and output ``\textit{feasible}''.
    \EndIf
    \State Set $\theta_{j^*}^{\tau+1}=\theta_{j^*}^{\tau}+\frac{\epsilon}{4}$, and $\theta_{i\neq j^*}^{\tau+1}=\theta_{i\neq j^*}^{\tau}$
    \State Update the weight matrix $W_{\tau+1} = \exp(-\sum_{j=0}^{m} \theta_j^{\tau+1} A_j)$.
    \State Update the density matrix $\rho_{\tau+1} = \frac{W_{\tau+1}}{\tr[W_{\tau+1}]}$
    \EndFor
    \State Output ``\textit{infeasible}''
\end{algorithmic}
\label{alg:zerosum}
\end{algorithm}

As pointed out in~\cite{Brandao2017}, by considering a trivial SDP where finding the optimal value is reduced to a search problem, no generic classical SDP solver including MMW can achieve better complexity than $\tilde{\mathcal{O}}(N+m)$ with constant success probability. For intuition, the factor of $N$ comes from computing the $N\times N$ weight matrix, and the factor of $m$ comes from checking the $m+1$ constraints. In the next section, we show how we can improve on this by replacing the computation of $W_\tau$ with the preparation of TPQ states and checking the constraints with the fast quantum OR lemma~\cite{brandao2019} on a quantum computer. 

%%%%%%%%%%%%%%%%%%%%%%%%%%%%%%%%%%%%%%%%%%%%%%%%%%%%%%%%%%%%%%%%%%%%%%%%%%%%%%%%%%
 
\section{Quantum MMW Algorithm with TPQ states}
\label{sec:QSDP}

In Ref.~\cite{Brandao2017}, Brand\~{a}o and Svore initiated the development of MMW-based quantum algorithms by observing that $\rho_\tau$ is identified with a Gibbs state [Eq.~\eqref{eq:Gibbs}], whose properties can theoretically be computed faster on a quantum computer~\cite{Yung_2012, Chowdhury_2016, Poulin_2009, zhang2023dissipative, chifang2023quantum}. At the beginning of step $\tau$ in the Zero-Sum algorithm (Algorithm~\ref{alg:zerosum}), 
the weight matrix $W_\tau$ can be written as
\begin{equation}
\label{eq:weight_gibbs}
    W_\tau 
    = 
    \exp\Bigg[-\sum_{j=0}^m\theta_j^\tau A_j\Bigg] 
    = 
    \exp\Bigg[-\frac{\tau \epsilon}{4}\sum_{j=0}^m\frac{4}{\tau \epsilon}\theta_j^\tau A_j\Bigg]
    = 
    \exp\left(-\beta^\tau H^\tau\right),
\end{equation}
where $H^\tau := \sum_{j=0}^m \frac{4}{\tau \epsilon}\theta_j^\tau A_j$ and $\beta^\tau := \tau\epsilon/4$ satisfy $-I\preceq H^\tau\preceq I$ and $\beta^\tau\leq \lceil\frac{2}{\epsilon}\ln N\rceil$, and can be interpreted as Hamiltonian and inverse temperature.
%where $H^\tau := \sum_{j=0}^m \frac{4}{\tau \epsilon}\theta_j^\tau A_j$ is the rescaled Hamiltonian, and the re-scaling parameter $\beta^\tau := \tau\epsilon/4$ can be interpreted as a fictitious inverse temperature. 
The corresponding Gibbs state $\rho_\tau=W_\tau/\tr[W_\tau]$ is then used to check whether any constraints are broken, i.e., if there exists an index $j^*\in\{0,\dots,m\}$ such that $\tr[\rho_\tau A_{j^*}]>b_{j^*}+\epsilon$.

We now show that we can carry out this task by replacing $\rho_\tau$ with a TPQ state~\cite{Sugiura2012,Coopmans_2023}. 
We first give an algorithm to prepare a TPQ state, which approximates the expectation values $\tr[\rho_\tau A_j]$ up to some problem-dependent error. 
Then, we present a way to find a broken constraint in order to update the weight matrix $W_\tau$ in the Zero-Sum approach. A pseudocode of the full TPQ state quantum SDP algorithm is provided in Algorithm~\ref{alg:TPQ_SDP}.  

\begin{algorithm}
\caption{Quantized MMW Algorithm with TPQ States}
\begin{algorithmic}
    \Require $\epsilon>0$, a number of iterations $T = \lceil\frac{8}{\epsilon^2}\ln N\rceil$, a set of $m+1$ $N\times N$ Hermitian matrices $\{A_j\}$ such that $-I\preceq A_j\preceq I$, and real numbers $-1\leq b_j\leq 1$, for $j = 0,\dots,m$.
    \State Let $\ket{\psi_0} =U\ket{0}$ with a random $k\geq2-$design $U$, and $\theta_j^0 = 0$ for all $j\in\{0,\dots,m\}$.
    \For{$\tau = 0,\dots T$}
    \State Find an $j^*$ such that $\bra{\psi_\tau}A_{j^*}\ket{\psi_\tau}\approx\tr[\rho_{\tau}A_{j^*}]>b_{j^*}+\epsilon$ \hspace{2cm} (Section~\ref{sec:broken_constraint})
    \If{no $j^*$ exists}
    \State Halt and output ``\textit{feasible}''.
    \EndIf
    \State Set $\theta_{j^*}^{\tau+1}=\theta_{j^*}^{\tau}+\frac{\epsilon}{4}$, and $\theta_{i\neq j^*}^{\tau+1}=\theta_{i\neq j^*}^{\tau}$.
    \State Update the SDP Hamiltonian $H^{\tau+1} = \frac{4}{(\tau+1)\epsilon}\sum_{j=0}^{m} \theta_j^{\tau+1}A_j$, and set $\beta^{\tau+1}=\frac{(\tau+1)\epsilon}{4}$.
    \State Update the TPQ state $\ket{\psi_{\tau+1}} = \frac{\exp(-\beta^{\tau+1} H^{\tau+1}/2)U\ket{0}}{\sqrt{\bra{0}U^\dagger \exp(-\beta^{\tau+1} H^{\tau+1})U\ket{0}}}$ \hspace{0.4cm} (Sections~\ref{subsec:TPQ} and~\ref{subsec:prepTPQ})
    \EndFor
    \State Output ``\textit{infeasible}''
\end{algorithmic}
\label{alg:TPQ_SDP}
\end{algorithm}

%------------------------------%
 
\subsection{Thermal Pure Quantum States}
\label{subsec:TPQ}

TPQ states are pure states that approximate a fixed set of properties of a mixed quantum state of some specific thermodynamic ensemble~\cite{Goldstein2006,Popescu2006,Reimann2007,Sugiura2012,Sugiura2013}. For the canonical ensemble (Gibbs states) we use the following definition. 

\begin{definition}[Thermal pure quantum state~\cite{Sugiura2013}]
\label{def:TPQ_state}
Given a Hamiltonian $H$ acting on $n$ qubits, a constant inverse temperature $\beta>0$ and a set of Hermitian operators $\{O_j\}$. A thermal pure quantum (TPQ) state associated with the Gibbs state $\sigma_\beta=e^{-\beta H}/\tr[e^{-\beta H}]$ is defined by an $n$-qubit pure state $\ket{\psi_i}$ drawn randomly from an ensemble $\{\ket{\psi_i}\}$ such that,
\begin{equation}
\label{eq:TPQ} 
    \mathrm{Pr}\big[\big|\bra{\psi_i}O_j\ket{\psi_i} - \tr[O_j\sigma_{\beta}]\big|
    \geq\xi\big] \leq C_\xi e^{-\alpha n}, 
\end{equation} 
for each $O_j$, and for some constants $C_\xi$ and $\alpha$. 
\end{definition} 

From the definition, it follows that with an exponentially small probability the expectation values of a TPQ state are more than $\xi$ away from the corresponding Gibbs state expectation values. Thus, when the size of the system $n$ increases, the TPQ approximation error vanishes exponentially. This means that for large enough $n$, and the set of operators $\{O_j\}$, we can replace the preparation of a mixed Gibbs state with the preparation of a single randomly sampled pure state (TPQ state). In contrast to a purification of the Gibbs state, which requires $2n$ qubits, a TPQ state is an $n$-qubit state sampled from an ensemble that in the limit $n\to\infty$ matches the canonical Gibbs ensemble for the observables $\{O_j\}$. This property, together with the fact that finite-size pure states require less classical memory (a pure state-vector requires $\mathcal{O}(2^n)$ memory whereas a mixed density matrix requires  $\mathcal{O}(4^n)$ memory), makes TPQ states useful for classically studying physical properties of relatively large Gibbs states~\cite{Sugiura2012,Sugiura2013}. In particular, they have been extensively used in classical numerical simulations of thermodynamic condensed-matter systems, see e.g.~\cite{Jin_2021} and references therein. Here we use them in our quantum SDP solver, which can reduce the required number of qubits compared to preparing purified Gibbs states. 

Inspired by Refs.~\cite{Sugiura2013, Coopmans_2023} for thermodynamic systems, we prove that under some condition on $H^\tau$ (Proposition~\ref{prop:sdp_pur_cond}) the imaginary time evolved states, 
\begin{equation}
\label{eq:psi_beta}
    \ket{\psi_\tau} = \frac{e^{-\beta^\tau H^\tau/2}U\ket{0}}{\sqrt{\bra{0}U^\dagger e^{-\beta^\tau H^\tau}U\ket{0}}},
\end{equation} 
satisfies Def.~\ref{def:TPQ_state}, hence can be used to approximate the Gibbs state expectation values in our quantum SDP solver. Here $U$ is randomly drawn from a unitary $k$-design for an integer $k\ge2$. For example, $U$ can be uniformly drawn from the $n$-qubit Clifford group, which can be done classically efficiently~\cite{PhysRevA.70.052328}.

We sketch the idea of the proof here and delegate the detailed discussion to Appendix~\ref{app:tpq_deriv}. 
First, by averaging over the unitary $k$-design, we can bound the mean-squared error in the expectation value of the constraint matrices $A_j$,  
\begin{equation}
\label{eq:TPQ_mse_main}
\begin{aligned}
\mathbb{E}_U[(\bra{\psi_\tau}A_j\ket{\psi_\tau}-\tr[\rho_\tau A_j])^2]
    \le
    \frac{105}{2}(\tr[\rho_{\tau}^2])^{1/2}
    + \calO[(\tr[\rho_{\tau}^2])^{2/3}].
\end{aligned}
\end{equation} 
To the best of our knowledge, this is the first rigorous bound on the error in canonical TPQ-state expectation values. 
From a Markov inequality, we obtain 
\begin{align}
\label{eq:sdptpq}
\begin{split}
\mathrm{Pr}\big[|\bra{\psi_\tau}A_j\ket{\psi_\tau} 
    - \tr[A_j\rho_\tau]|\ge \xi\big]
&\leq\frac{\mathbb{E}_U[(\bra{\psi_\tau}A_j\ket{\psi_\tau}-\tr[\rho_\tau A_j])^2]}{\xi^2} \\
    & \leq \frac{\frac{105}{2}(\tr[\rho_{\tau}^2])^{1/2}
    + \calO\left[(\tr[\rho_{\tau}^2])^{2/3})\right]}{\xi^2}.
\end{split}
\end{align}

Comparing  Eq.~\eqref{eq:TPQ} to Eq.~\eqref{eq:sdptpq}, we find that $\ket{\psi_\tau}$ satisfies the definition with $C_\xi=\calO(1/\xi^2)$ if $\tr[\rho_{\tau}^2]=\calO(e^{-\alpha n})$.
In other words, $\ket{\psi_\tau}$ is a TPQ state when the purity $\tr[\rho_{\tau}^2]$ of the Gibbs state vanishes exponentially with $n$. For physical Hamiltonians $H^\tau$, i.e., those for which the Gibbs state has an extensive free energy, we can adopt the arguments from~\cite{Sugiura2013} to show that the purity vanishes with system size. For completeness, we reproduce this argument in Appendix~\ref{sec:free-energy}. For generic SDPs we can, however, not guarantee that $H^\tau$ satisfies the property. For this scenario, we propose the following sufficient condition on the spectrum of $H^\tau$.

\begin{proposition}[a spectral condition for vanishing purity]
\label{prop:sdp_pur_cond}
Let $c\in[0,1]$ and $\nu\in(0,\epsilon\frac{\ln 2}{4})$ with $\epsilon>0$.
Given a Hermitian matrix $H^\tau$ of size $2^n\times2^n$ that has $c2^n$ eigenvalues in the range $[\lambda^\tau_{\mathrm{min}}, \lambda^\tau_{\mathrm{min}} + \nu ]$, the purity of the Gibbs state $\rho_\tau = e^{-\beta^\tau H^\tau}/\tr[e^{-\beta^\tau H^\tau}]$ with $\beta^\tau=\tau\epsilon/4$ can be upper bounded by $\tr[\rho_{\tau}^2]\leq\frac{2^{-\left(1-\frac{4\nu}{\epsilon\ln2}\right)n}}{c^2}$ for all $0\le\tau\leq \frac{8n}{\epsilon^2}$.
\end{proposition}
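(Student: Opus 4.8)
The plan is to work directly with the eigendecomposition of $H^\tau$ and bound the purity, written as a ratio of partition-function-like traces. Let $\lambda_1 \le \lambda_2 \le \dots \le \lambda_{2^n}$ denote the eigenvalues of $H^\tau$, so that $\lambda_1 = \lambda^\tau_{\mathrm{min}}$. Since $\rho_\tau = e^{-\beta^\tau H^\tau}/\tr[e^{-\beta^\tau H^\tau}]$, the purity is
\begin{equation}
    \tr[\rho_\tau^2] = \frac{\tr[e^{-2\beta^\tau H^\tau}]}{(\tr[e^{-\beta^\tau H^\tau}])^2} = \frac{\sum_i e^{-2\beta^\tau\lambda_i}}{\big(\sum_i e^{-\beta^\tau\lambda_i}\big)^2}.
\end{equation}
I would then bound the numerator from above and the denominator from below separately.

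For the numerator, every eigenvalue satisfies $\lambda_i \ge \lambda^\tau_{\mathrm{min}}$, hence $e^{-2\beta^\tau\lambda_i} \le e^{-2\beta^\tau\lambda^\tau_{\mathrm{min}}}$, which gives $\tr[e^{-2\beta^\tau H^\tau}] \le 2^n e^{-2\beta^\tau\lambda^\tau_{\mathrm{min}}}$. For the denominator, this is where the spectral hypothesis enters: by assumption there are $c2^n$ eigenvalues lying in the low-energy window $[\lambda^\tau_{\mathrm{min}}, \lambda^\tau_{\mathrm{min}}+\nu]$, and for each of these $e^{-\beta^\tau\lambda_i} \ge e^{-\beta^\tau(\lambda^\tau_{\mathrm{min}}+\nu)}$. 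Keeping only this block of the sum gives $\tr[e^{-\beta^\tau H^\tau}] \ge c2^n e^{-\beta^\tau(\lambda^\tau_{\mathrm{min}}+\nu)}$, and squaring produces a lower bound on the denominator.

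Combining the two estimates, the factors of $e^{-2\beta^\tau\lambda^\tau_{\mathrm{min}}}$ cancel and I obtain
\begin{equation}
    \tr[\rho_\tau^2] \le \frac{2^n e^{-2\beta^\tau\lambda^\tau_{\mathrm{min}}}}{c^2\, 4^n e^{-2\beta^\tau(\lambda^\tau_{\mathrm{min}}+\nu)}} = \frac{e^{2\beta^\tau\nu}}{c^2\, 2^n}.
\end{equation}
The last step is to control $\beta^\tau$. Since $\tau \le 8n/\epsilon^2$ and $\beta^\tau = \tau\epsilon/4$, we have $\beta^\tau \le 2n/\epsilon$, so $2\beta^\tau\nu \le 4n\nu/\epsilon$. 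Rewriting in base two via $e^{x}=2^{x/\ln 2}$ converts $e^{4n\nu/\epsilon}$ into $2^{4n\nu/(\epsilon\ln 2)}$, and collecting the powers of two yields exactly $\tr[\rho_\tau^2] \le 2^{-(1-\frac{4\nu}{\epsilon\ln 2})n}/c^2$. The restriction $\nu < \epsilon\ln 2/4$ is precisely what makes the exponent $1 - \frac{4\nu}{\epsilon\ln 2}$ positive, so that the bound genuinely decays with $n$.

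I do not anticipate a real obstacle here: the argument is just a two-sided estimate of a ratio of traces. The only mild care needed is to apply the low-energy-window hypothesis to the denominator rather than the numerator (using it on the numerator would waste the gap $\nu$ and fail to produce the decaying exponent), and to track the base-$e$ versus base-$2$ conversion so that the constant $\frac{4\nu}{\epsilon\ln 2}$ in the final exponent comes out correctly.
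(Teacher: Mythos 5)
Your proof is correct and uses essentially the same two-sided estimate as the paper: numerator bounded by $2^n e^{-2\beta^\tau\lambda^\tau_{\mathrm{min}}}$, denominator bounded below by the contribution of the $c2^n$ low-lying eigenvalues, and the base-$2$ conversion with $\beta^\tau\le 2n/\epsilon$. The paper merely packages this as a generic proposition (window $\nu n$, constant $\beta$, projector $\Pi_\nu$) and then rescales and invokes monotonicity of the purity in $\beta$; your direct route reaches the same bound without needing that monotonicity claim.
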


We provide the proof together with a more generic form of the condition in Appendix~\ref{app:exponential decaying purity}. The condition shows that if a constant fraction $c$ of the eigenvalues of the rescaled Hamiltonian $H^\tau$ are within a constant $\nu$ from the lowest eigenvalue, then the purity vanishes exponentially with system size. In other words, the condition is met when there is a high density of low-energy states. 
While this might appear abstract, we conjecture it to be satisfied for a large class of matrices $H^\tau$. This is because the gaps between the eigenvalues of any matrix $H^\tau$ are on average  exponentially small in $n$ (it has $2^n$ eigenvalues in the interval $[-1,1]$). In order to support our claim, in Appendix~\ref{sec:app_cond_hams} we show that with probability $1-\mathcal{O}(e^{-n})$ a random matrix drawn from the generalized unitary ensemble (GUE)~\cite{erdos2017} satisfies the condition. These matrices share some properties with chaotic quantum systems and random Pauli-string Hamiltonians~\cite{Chen2023}. We also verify the spectral property for the Hamiltonians we use in the Hamiltonian learning experiments in Section~\ref{sec:numerics}.\footnote{
In this case, we can also resort to the free-energy argument for physical Hamiltonians.
}

With the condition, or for a physical SDP Hamiltonian, we can use the expectation values of the TPQ states $\ket{\psi_\tau}$ as approximators of the Gibbs-state expectation values in the MMW algorithm. 
Importantly, the success probability in Eq.~\eqref{eq:sdptpq} depends on the specific form of $H^\tau$ (the density of low energies in its spectrum), the target SDP error $\epsilon\sim\xi$, and the norm and size $N$ of the constraint matrices (which we take to be at most 1 in our problem formulation~\ref{sec:SDP}). 
This means that a priori we cannot deterministically bound the error in all cases.
However, for problems whose Hamiltonians satisfy the condition in Proposition \ref{prop:sdp_pur_cond}, we know that the purity, and hence the success probability decreases with system size as $N^{-\alpha}$ for some constant $\alpha>0$. This is particularly relevant for the Hamiltonian learning problem in the regime of problem sizes that are too hard to solve classically.

%-----------------------------------------%
 
\subsection{Preparing Thermal Pure Quantum States on a Quantum Device}
\label{subsec:prepTPQ}

Having obtained a set of TPQ states that can be used in our quantum SDP solver, we now present a quantum algorithm to prepare them. To this end, we employ quantum eigenvalue transformation (QET)~\cite{Martyn2021grand, Gilyen2019}. QET is a framework for performing matrix arithmetic on quantum computers\footnote{We remark on closely related notions, quantum signal processing (QSP) and the quantum singular value transformation (QSVT). QSP refers to the single-qubit version of QET and QSVT is the generalization of QET to non-diagonalizable matrices.}, which recently has been demonstrated in small-scale quantum hardware experiments~\cite{Dong_2022, Kikuchi2023, debry2023experimental}. QET allows us to construct a quantum circuit for a polynomial approximation to the matrix exponential $e^{-\beta^\tau H^\tau/2}$ given access to a block-encoding circuit for the Hermitian matrix $H^\tau$. It can be combined with a random circuit $U$, such as a random Clifford circuit, to obtain an end-to-end circuit implementation of $\ket{\psi_\tau}$. 
We first introduce a block encoding of a bounded matrix.  
\begin{definition}[Block encoding of a square matrix]
\label{def:block_encoding} 
    For an $2^n\times2^n$ square matrix $A$ such that $\lVert A\rVert\le1$, a $(n+a)$-qubit unitary $U_A$ is a block encoding of $A$ if
    \begin{equation}
    \label{eq:block encoding}
        (I\otimes\bra{0^a})U_A(I\otimes\ket{0^a}) = A.
    \end{equation}
\end{definition} 
In other words, a block-encoding is an embedding of an arbitrary bounded matrix $A$ into a subblock of a larger unitary matrix $U_A$.

In our quantum algorithm, we make use of the block-encoding circuit $U_{K^\tau}$ of the shifted Hamiltonian 
\begin{equation}
\label{eq:block_encoding_Ktau}
    K^\tau := H^\tau-(1+\Xi^\tau)I,
\end{equation}
where $\Xi^\tau$ is an approximation of the lowest eigenvalue of $H^\tau$ satisfying $\lambda_\mathrm{min}^\tau-1/2\beta^T\leq\Xi^\tau\leq\lambda_\mathrm{min}^\tau$ when $\lambda_\mathrm{min}\geq -1+1/2\beta^T$ and $\Xi^\tau=0$ otherwise.\footnote{
We shift the smallest eigenvalue of the Hamiltonian to $\lambda_{\mathrm{min}}\geq -1 + 1/2\beta^T$ in order to lower bound the success probability of our algorithm. This is explained in Appendix~\ref{app:prob_lower_bound}.
}
Refs.~\cite{Gilyen2019,Low2019hamiltonian,Chakraborty2019power,Camps:2022jnx,Zhang2022,Sunderhauf:2023xrz} provide block-encoding methods for a wide range of problems applicable to our setting. For any matrix satisfying the conditions in Definition \ref{def:block_encoding}, a block encoding always exists, but as in other block-encoding-based algorithms, to realize the quantum speedup the block encoding of $K^\tau$ must be efficient.\footnote{By efficient we mean that the circuit depth is at most $\mathcal{O}[$polylog$(N)]$ and requires at most $\mathcal{O}[$polylog$(N)]$ ancillary qubits.} See \cite{Dalzell2023} for a comprehensive survey.

The block-encoding $U_{K^\tau}$ can then be used to construct a circuit for $e^{-\beta^\tau H^\tau}$ with QET.

\begin{lemma}[Quantum eigenvalue transformation of indefinite parity, restatement of Theorem 56 in Ref.~\cite{Gilyen2019}]\label{lem:QET}
    Let $U_A$ be a $(n+a)$-qubit block encoding of a Hermitian matrix $A$. Let $P\in \mathbb{R}[x]$ be a real polynomial of degree $d$ such that $\max_{x\in[-1,1]}|P(x)|\le\frac{1}{2}$. Then there exists a block encoding $U^{\vec{\phi}}_\mathrm{QET}$ of $P(A)$ that uses $2d$ queries to $U_A$ and $U_A^\dag$, a single application of controlled-$U_A$, and $\mathcal{O}((a+1)d)$ other elementary gates. Moreover, a set of $2d+1$ angles $\vec{\phi}=\{\phi_0,\dots,\phi_{2d}\}$ parameterising $U^{\vec{\phi}}_\mathrm{QET}$ can be computed classically efficiently.
\end{lemma}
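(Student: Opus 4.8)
The plan is to reduce this matrix statement to scalar quantum signal processing (QSP) via \emph{qubitization}, and then obtain the indefinite-parity polynomial by a linear-combination-of-unitaries (LCU) of two definite-parity blocks. I would not rederive QSP from scratch; instead I treat the scalar achievability theorem as the engine and concentrate on assembling the block encoding of $P(A)$ from it. The first step is qubitization. Writing $\Pi = I\otimes\ket{0^a}\bra{0^a}$ for the projector onto the encoding subspace, Definition~\ref{def:block_encoding} gives $\Pi U_A\Pi = A\otimes\ket{0^a}\bra{0^a}$. Diagonalising the Hermitian matrix $A=\sum_\lambda\lambda\ket{\lambda}\bra{\lambda}$, I would show that $U_A$ together with the reflection $2\Pi-I$ leaves invariant the (at most) two-dimensional subspaces $\mathcal{H}_\lambda=\mathrm{span}\{\ket{\lambda}\ket{0^a},\,U_A\ket{\lambda}\ket{0^a}\}$, and that within each $\mathcal{H}_\lambda$ the pair $(U_A,2\Pi-I)$ acts, in a suitable basis, exactly as the single-qubit signal rotation $W(\lambda)$ whose rotation angle has cosine $\lambda$. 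Hermiticity of $A$ is precisely what makes each $\lambda$ real and the subspaces genuinely two-dimensional; the degenerate cases $|\lambda|=1$ are handled as one-dimensional invariant spaces separately.

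Next I lift scalar QSP to a definite-parity eigenvalue transformation. For a real polynomial $p$ whose parity matches its degree and with $\max_{x\in[-1,1]}|p(x)|\le1$, the scalar QSP theorem supplies phases $\{\phi_k\}$ such that alternating $W(x)$ with $e^{i\phi_k Z}$ produces a unitary whose $\ket{0}\bra{0}$ entry equals $p(x)$. Lifting through the qubitization of the previous step, I replace each $e^{i\phi_k Z}$ by the projector-controlled phase $e^{i\phi_k(2\Pi-I)}$, realised with $\calO(a+1)$ elementary gates as a phase applied conditionally on the ancilla register being in $\ket{0^a}$, and interleave $d$ of these with alternating applications of $U_A$ and $U_A^\dagger$. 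Since the qubitised operator reproduces $W(\lambda)$ on every eigenspace simultaneously, the resulting circuit is a block encoding of $p(A)$ at a cost of $d$ queries and $\calO((a+1)d)$ additional gates.

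To obtain the indefinite-parity $P$, I split $P=P_{\mathrm e}+P_{\mathrm o}$ into its even and odd parts, both of degree at most $d$; since $P_{\mathrm e}(x)=\tfrac12(P(x)+P(-x))$ and $P_{\mathrm o}(x)=\tfrac12(P(x)-P(-x))$, the hypothesis $\max|P|\le\tfrac12$ yields $\max|P_{\mathrm e}|,\max|P_{\mathrm o}|\le\tfrac12$. Applying the definite-parity construction to the rescaled polynomials $2P_{\mathrm e}$ and $2P_{\mathrm o}$ (now bounded by $1$) produces block encodings of $2P_{\mathrm e}(A)$ and $2P_{\mathrm o}(A)$, which I then combine with one control qubit prepared in $\ket{+}$ and a pair of Hadamards. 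The $\ket{+}\bra{+}$ post-selection contributes a factor $\tfrac12$, so the combined circuit block-encodes $\tfrac12\big(2P_{\mathrm e}(A)+2P_{\mathrm o}(A)\big)=P(A)$ exactly. The parity offset between the two sequences---the odd block carries one extra signal call---is exactly what forces the single application of controlled-$U_A$ in the statement, and the counting gives $d+d=2d$ queries and $(d+1)+d=2d+1$ phase angles. The angles themselves are produced by any of the numerically stable QSP angle-finding routines, which run in time polynomial in $d$, establishing the ``classically efficient'' clause.

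The genuine technical heart, and the step I expect to be the main obstacle, is the scalar QSP achievability theorem invoked above: the claim that every bounded real definite-parity polynomial $p$ is realisable by such an alternating product. Its proof requires constructing a complementary polynomial $q$ satisfying $|p(x)|^2+(1-x^2)|q(x)|^2=1$ on $[-1,1]$---a Fej\'er--Riesz-type factorisation---and then inductively peeling off the phases $\phi_k$ one degree at a time. Everything else (the qubitization decomposition into invariant two-dimensional subspaces, the LCU bookkeeping, and the parity and normalisation factors) is routine once that engine is in hand, so in a self-contained write-up I would either prove the achievability theorem in an appendix or cite it directly and devote the main argument to the qubitization and parity-combination assembly.
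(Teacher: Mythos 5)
The paper itself offers no proof of this lemma: it is stated as a verbatim restatement of Theorem~56 of Ref.~\cite{Gilyen2019}, and your proposal reconstructs precisely the proof strategy of that cited theorem --- scalar QSP achievability (via the Fej\'er--Riesz-type complementary-polynomial construction) as the engine, lifted to the block-encoded matrix, with the indefinite-parity polynomial assembled by an LCU of the even and odd parts. Your bookkeeping also matches the statement as given here: $\max|P|\le\tfrac12$ forces $\max|P_{\mathrm e}|,\max|P_{\mathrm o}|\le\tfrac12$, rescaling by $2$ and the $\tfrac12$ from the $\ket{+}$ post-selection recover $P(A)$ exactly, the counts $d+d=2d$ queries and $(d+1)+d=2d+1$ phases agree with the lemma, and your explanation of the single controlled-$U_A$ as arising from the one-query offset between the two parity branches is the correct mechanism.

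One step, as you literally wrote it, would fail: for a general block encoding $U_A$, the two-dimensional subspace $\mathrm{span}\{\ket{\lambda}\ket{0^a},\,U_A\ket{\lambda}\ket{0^a}\}$ is \emph{not} invariant under $U_A$ --- nothing constrains the action of $U_A$ on the vector $U_A\ket{\lambda}\ket{0^a}$ beyond its component in the encoding subspace, so a second application of $U_A$ can leave the span. Genuine qubitization, with a single iterate whose invariant subspaces are two-dimensional, requires either that $U_A$ be self-inverse or an extra ancilla together with controlled-$U_A$ and controlled-$U_A^\dagger$ to Hermitize the block encoding first (the Low--Chuang route). Gilyén et al.\ avoid this by working directly with the \emph{alternating} product of $U_A$ and $U_A^\dagger$ interleaved with the projector-controlled phases $e^{i\phi_k(2\Pi-I)}$: the correct structural statement, from the cosine--sine decomposition, is that $U_A$ and $U_A^\dagger$ map a \emph{pair} of two-dimensional subspaces to each other, which for Hermitian $A$ reproduces $W(\lambda)$ on each eigenvalue and yields the definite-parity eigenvalue transformation. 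Since your circuit already alternates $U_A$ and $U_A^\dagger$, the repair is confined to the justification, not the construction. A final cosmetic remark: the original Theorem~56 achieves $d$ (not $2d$) total queries by running both parity phase sequences controlled on the LCU qubit while \emph{sharing} the signal unitaries between the branches; your naive $d+d$ count only establishes the weaker $2d$ bound, but that is exactly what this paper's restatement claims, so nothing is lost here.
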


Therefore, we need to find a suitable polynomial $P^\mathrm{exp}_\beta(x)$ which approximates $e^{-\beta x}$ for $x\in[-1,1]$. 
We start with the following Lemma for a polynomial approximation of an exponential function. 

\begin{lemma}[Polynomial approximation of exponential~\cite{Sachdeva2014}, Lemma~4.2]
\label{lemma:exp_poly_approx}
Let $\beta\in\mathbb{R}_+$ and $\mu\in(0,\frac{1}{2}]$. There exists a polynomial $P^\mathrm{exp}_\beta\in\mathbb{R}[x]$ satisfying
\begin{equation}
    \max_{x\in[-1,1]}|P^\mathrm{exp}_\beta(x) - e^{-\beta(x+1)}|
    \leq\mu,
\end{equation}
and the degree of $P^\mathrm{exp}_\beta$ is $\mathcal{O}\big(\sqrt{\beta}\log\frac{1}{\mu}\big)$. 
\end{lemma}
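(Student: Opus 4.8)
The plan is to build $P^{\exp}_\beta$ explicitly as a truncated Chebyshev expansion of the exponential and to control the truncation error by a concentration bound. First I would factor out the worst-case growth: since $e^{-\beta(x+1)}=e^{-\beta}\,e^{-\beta x}$, it suffices to find a polynomial $P$ of the claimed degree with $|e^{-\beta x}-P(x)|\le\mu e^{\beta}$ uniformly on $[-1,1]$, and then set $P^{\exp}_\beta(x):=e^{-\beta}P(x)$, so that multiplying by $e^{-\beta}$ rescales the relative error $\mu e^{\beta}$ back to the target additive error $\mu$. This reduction converts the task into one of \emph{relative} accuracy, which is the quantity naturally controlled below.

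The main tool is the Jacobi--Anger (modified Bessel) expansion
\[
e^{-\beta x} = I_0(\beta) + 2\sum_{k=1}^{\infty}(-1)^k I_k(\beta)\,T_k(x),
\]
valid for $x\in[-1,1]$, where $T_k$ is the degree-$k$ Chebyshev polynomial of the first kind and $I_k$ the modified Bessel function of order $k$. I would take $P$ to be the degree-$d$ truncation of this series. Because $|T_k(x)|\le1$ on $[-1,1]$, the truncation error is bounded by the Bessel tail $2\sum_{k>d}I_k(\beta)$, so the whole problem reduces to showing this tail is at most $\mu e^{\beta}$ once $d=\mathcal{O}(\sqrt{\beta}\log(1/\mu))$.

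The key step, and the place where the $\sqrt{\beta}$ scaling (rather than a naive $\beta$) enters, is this tail estimate. The cleanest route is the probabilistic identity $e^{-\beta}I_k(\beta)=\Pr[N_1-N_2=k]$, where $N_1,N_2$ are independent Poisson variables of mean $\beta/2$; summing over $k$ gives total mass $e^{\beta}$ for $\{I_k(\beta)\}$, and $2\sum_{k>d}I_k(\beta)=e^{\beta}\Pr[|N_1-N_2|>d]$. Since $N_1-N_2$ has mean zero and variance $\beta$, a Chernoff/Bernstein bound gives $\Pr[|N_1-N_2|>d]\le 2\exp\!\bigl(-d^2/(2(\beta+d/3))\bigr)$, which falls below $\mu$ as soon as $d\gtrsim\sqrt{\beta\log(1/\mu)}+\log(1/\mu)$. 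Under the binary-logarithm convention, the hypothesis $\mu\le\tfrac12$ forces $\log(1/\mu)\ge1$, so in the regime $\beta\gtrsim1$ this is dominated by $\mathcal{O}(\sqrt{\beta}\log(1/\mu))$; the small-$\beta$ cases are benign, as a constant or $\mathcal{O}(\log(1/\mu))$-degree Taylor polynomial already suffices there. I would note in contrast that the elementary estimate $I_k(\beta)\le(\beta/2)^k/k!$ with Stirling only yields rapid decay for $k\gtrsim e\beta$, hence degree $\mathcal{O}(\beta)$; it is precisely the concentration argument, which sees the standard deviation $\sqrt{\beta}$ rather than the mean $\beta$, that furnishes the quadratic improvement.

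Assembling the pieces gives $|e^{-\beta(x+1)}-e^{-\beta}P(x)|=e^{-\beta}|e^{-\beta x}-P(x)|\le\mu$, with $\deg(e^{-\beta}P)=d=\mathcal{O}(\sqrt{\beta}\log(1/\mu))$, so $P^{\exp}_\beta:=e^{-\beta}P$ is the desired polynomial. The main obstacle throughout is obtaining the Bessel tail bound with the correct $\sqrt{\beta}$ dependence and clean constants; once that concentration estimate is in hand, the remaining steps are routine bookkeeping, and the precise constants (together with the small-$\beta$ and near-$\tfrac12$ edge cases) are exactly what the cited reference~\cite{Sachdeva2014} tracks.
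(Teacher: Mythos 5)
Your argument is correct, but note that the paper itself does not prove this lemma at all --- it is imported verbatim from the cited monograph of Sachdeva and Vishnoi, so the only meaningful comparison is with the proof given there. That proof proceeds by first truncating the Taylor series of the exponential and then replacing each monomial $x^{s}$ by a low-degree surrogate using the identity $x^{s}=\mathbb{E}\big[T_{|D_{s}|}(x)\big]$, where $D_{s}$ is a simple random walk of length $s$; the $\sqrt{\beta}$ saving comes from the concentration of $|D_{s}|$ at scale $\sqrt{s}$. You instead expand $e^{-\beta x}$ directly in the Chebyshev basis via the Jacobi--Anger identity and control the tail $2\sum_{k>d}I_{k}(\beta)$ through the Skellam representation $e^{-\beta}I_{k}(\beta)=\Pr[N_{1}-N_{2}=k]$ with a Bernstein bound on the difference of two Poissons of mean $\beta/2$. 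Both routes are concentration arguments in disguise, but yours is more direct (one expansion, one tail bound, no monomial-by-monomial bookkeeping) and is the form standard in the Hamiltonian-simulation literature; the random-walk route of the reference is more modular, since the monomial lemma is reused for other functions such as $x^{-1}$ and $\sqrt{x}$. Your handling of the edge cases is also consistent with the source: the precise statement there has degree $\mathcal{O}\big(\sqrt{\max\{\beta,\log\frac{1}{\mu}\}\cdot\log\frac{1}{\mu}}\big)$, which for $\beta\gtrsim 1$ collapses to the $\mathcal{O}\big(\sqrt{\beta}\log\frac{1}{\mu}\big)$ quoted in the paper, and your separate treatment of small $\beta$ by a Taylor truncation covers the remaining regime. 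The one place to be careful if you write this out in full is the Bernstein step: $N_{1}-N_{2}$ is unbounded, so you should derive the bound from the explicit moment generating function $\mathbb{E}\big[e^{\lambda(N_{1}-N_{2})}\big]=e^{\beta(\cosh\lambda-1)}$ together with $\cosh\lambda-1\le\frac{\lambda^{2}/2}{1-\lambda/3}$, rather than invoking a bounded-increment version of Bernstein's inequality.
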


Combining this with the Lemma~\ref{lem:QET}, we arrive at the following lemma.
\begin{lemma}[Approximating the matrix exponential with QET]
\label{lem:qet_matrixexp}
    Given a block encoding $U_{M}$ of a Hermitian matrix $M$ such that $\lVert M\rVert\le1$, one can construct a block encoding $U_{P^\mathrm{exp}_{\beta}(M)/2}$ of $P^\mathrm{exp}_{\beta}(M)/2$ such that
    \begin{equation}
    \label{eq:QET_error}
        \big\lVert P^\mathrm{exp}_\beta(M)
        - e^{-\beta (M+I)} \big\rVert
        \leq 
        \mu
    \end{equation}
    with $\calO(\sqrt{\beta}\log\frac{1}{\mu})$ uses of $U_{M}$ and $U_{M}^\dag$ and other elementary gates.
\end{lemma}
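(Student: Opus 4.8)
The plan is to chain Lemmas~\ref{lem:QET} and~\ref{lemma:exp_poly_approx}: first extract from Lemma~\ref{lemma:exp_poly_approx} a bounded real polynomial that approximates $e^{-\beta(x+1)}$ on $[-1,1]$ with controlled degree, then realize this polynomial as a block encoding through the QET construction of Lemma~\ref{lem:QET}, and finally upgrade the pointwise scalar approximation to an operator-norm bound using the spectral calculus.

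First I would invoke Lemma~\ref{lemma:exp_poly_approx} with the given $\beta$ and a fixed $\mu\in(0,\tfrac12]$ to obtain a polynomial $P^{\mathrm{exp}}_\beta\in\mathbb{R}[x]$ of degree $d=\mathcal{O}(\sqrt{\beta}\log\tfrac1\mu)$ with $\max_{x\in[-1,1]}|P^{\mathrm{exp}}_\beta(x)-e^{-\beta(x+1)}|\le\mu$. To feed $P^{\mathrm{exp}}_\beta$ into Lemma~\ref{lem:QET} I must check its normalization hypothesis, i.e.\ that the polynomial actually handed to the QET routine is bounded by $\tfrac12$ in magnitude on $[-1,1]$. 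Because $0<e^{-\beta(x+1)}\le1$ on $[-1,1]$ (with the maximum $1$ attained at $x=-1$), the candidate $P^{\mathrm{exp}}_\beta/2$ satisfies $\max_{x\in[-1,1]}|P^{\mathrm{exp}}_\beta(x)/2|\le(1+\mu)/2$, which narrowly exceeds the required ceiling $\tfrac12$.

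Once the normalization is arranged I would apply Lemma~\ref{lem:QET} to the (suitably rescaled) polynomial $P^{\mathrm{exp}}_\beta/2$. Since $P^{\mathrm{exp}}_\beta$ is real of degree $d$ and of indefinite parity, the lemma directly yields a block encoding $U^{\vec\phi}_{\mathrm{QET}}$ of $P^{\mathrm{exp}}_\beta(M)/2$ that consumes $2d$ queries to $U_M$ and $U_M^\dagger$, one controlled-$U_M$, and $\mathcal{O}((a+1)d)$ further elementary gates, with the angles $\vec\phi$ computable classically. The query count is then $2d=\mathcal{O}(\sqrt\beta\log\tfrac1\mu)$, matching the claim. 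For the error bound I would pass to the eigenbasis of $M$: as $M$ is Hermitian with $\lVert M\rVert\le1$ its spectrum lies in $[-1,1]$, and $P^{\mathrm{exp}}_\beta(M)$ and $e^{-\beta(M+I)}$ are simultaneously diagonalized, so by the functional calculus $\lVert P^{\mathrm{exp}}_\beta(M)-e^{-\beta(M+I)}\rVert=\max_{\lambda\in\mathrm{spec}(M)}|P^{\mathrm{exp}}_\beta(\lambda)-e^{-\beta(\lambda+1)}|\le\max_{x\in[-1,1]}|P^{\mathrm{exp}}_\beta(x)-e^{-\beta(x+1)}|\le\mu$, which is exactly Eq.~\eqref{eq:QET_error}.

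The one place requiring care---and the main obstacle---is the normalization gap above: the polynomial can overshoot its target near $x=-1$ by up to $\mu$, so $P^{\mathrm{exp}}_\beta/2$ can cross the $\tfrac12$ ceiling of Lemma~\ref{lem:QET}. I would resolve this in the standard way, e.g.\ by rescaling $P^{\mathrm{exp}}_\beta$ by a factor $1/(1+\mu)$ so that the polynomial sent to QET is genuinely bounded by $\tfrac12$; a short computation shows this only replaces the error $\mu$ by $2\mu/(1+\mu)$ and leaves the degree bound untouched, so the constants can be reabsorbed by halving the input tolerance. Crucially, any such global subnormalization is harmless downstream: the block encoding is only ever used to prepare the \emph{normalized} TPQ state of Eq.~\eqref{eq:psi_beta}, in which the overall proportionality constant cancels.
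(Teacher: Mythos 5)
Your proposal is correct and follows essentially the same route as the paper, which proves this lemma simply by combining Lemma~\ref{lemma:exp_poly_approx} with Lemma~\ref{lem:QET} and passing to the spectral calculus for the operator-norm bound. Your additional care with the normalization hypothesis of Lemma~\ref{lem:QET} (the possible overshoot of $P^{\mathrm{exp}}_\beta/2$ past $1/2$ near $x=-1$, fixed by a harmless $1/(1+\mu)$ rescaling) is a detail the paper glosses over, and your resolution is the standard and correct one.
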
 

Setting $M= K^\tau$ and $\beta=\beta^\tau/2$ in Lemma~\ref{lem:qet_matrixexp} gives a block-encoding circuit $U_{P^\mathrm{exp}_{\beta^\tau/2}(K^\tau)/2}$ for the matrix exponential  $e^{-\frac{\beta^\tau}{2} (K^\tau +I)}/2$. Successful implementation of $U_{P^\mathrm{exp}_{\beta^\tau/2}(K^\tau)/2}$ requires measuring the ancillary register in the all-zero state, which has probability
\begin{equation}
\label{eq:prob_exp}
    p_\text{exp} 
    \approx
    \frac{1}{4}\bra{0}U^\dagger e^{-\beta^\tau (K^\tau+I)} U\ket{0}
    =
    \frac{1}{4}\bra{0}U^\dagger e^{-\beta^\tau(H^\tau-\Xi^\tau I)} U\ket{0}.
\end{equation}
The approximate equality is due to the QET error in Eq.~\eqref{eq:QET_error}, which can be bounded as we elaborate in Appendix~\ref{app:tpq_complexity}.
This probability can be boosted to $\Omega(1)$ by performing $\calO(p_\text{exp}^{-1/2})$ rounds of amplitude amplification~\cite{Martyn2021grand}. The gate complexity of the circuit is therefore $\calO(p_\text{exp}^{-1/2}) \times \mathcal{O}(\sqrt{\beta^\tau}\log{\frac{1}{\mu}})\times \mathcal{T}_{K^\tau}$, where $\mathcal{T}_{K^\tau}$ is the gate complexity of the block encoding of $K_\tau$ [Eq.~\eqref{eq:block_encoding_Ktau}]. By bounding $p_{\mathrm{exp}}$ and the approximation errors, as shown in Appendix~\ref{app:tpq_complexity}, we arrive at the following theorem for the preparation of $\ket{\psi_\tau}$. 

\begin{theorem}[Circuit implementation of a TPQ state]
\label{thm:tpq_complexity}
    Let $\tau\in[0, T]$ with $T\le \lceil\frac{8\ln N}{\epsilon^2}\rceil$, and $U$ a random unitary drawn from a unitary $k\ (\ge2)$-design. Also, let $\xi\in[0,1]$ and assume that the maximum purity over all $\tau$ of the Gibbs state $\rho_\tau=e^{-\beta^\tau H^\tau}/\tr[e^{-\beta^\tau H^\tau}]$ decays exponentially as $\tr[\rho_\tau^2]\propto e^{-\alpha \log{N}}$ for some positive constant $\alpha$ as the size of the program $N$ increases.  Then, there exists a quantum operation $V_{\beta^\tau}(H^\tau)$ to create the state
    \begin{equation}
    \label{eq:approx_TPQ}
        \ket{\tilde{\psi}_\tau} 
        =  V_{\beta^\tau}(H^\tau)U\ket{0},
    \end{equation}
    which acts as a TPQ state, i.e., for a predefined set of operators $\{T_j\}$ with $\lVert T_j\rVert \leq 1$,
    \begin{equation}
    \label{eq:TPQ_with_QET}
        \mathrm{Pr}_{U}
        \big[|\bra{\tilde{\psi}_\tau}T_j\ket{\tilde{\psi}_\tau}
        -\tr[\rho_\tau T_j]|\geq \xi\big]
        \leq C_\xi e^{-\alpha n},
    \end{equation}
    for some $\xi$-dependent constant $C_\xi$.
    This quantum operation can be implemented using at most
    \begin{equation}
    \label{eq:TPQ_complexity}
        \mathcal{O}\left(\frac{(N\log{N})^{1/2}}{\xi^{1/2}}\log{\frac{N}{\xi}} \right)
    \end{equation}
     elementary gates and block encodings $U_{K^\tau}$ and $U_{K^\tau}^\dag$ of the shifted Hamilontian operator defined in Eq.~\ref{eq:block_encoding_Ktau}, and $\mathcal{O}(\log N)$ qubits.
\end{theorem}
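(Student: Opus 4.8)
The plan is to realize $V_{\beta^\tau}(H^\tau)$ as the composition of three ingredients already assembled in the preceding lemmas: (i) the QET block encoding $U_{P^{\mathrm{exp}}_{\beta^\tau/2}(K^\tau)/2}$ of the approximate matrix exponential from Lemma~\ref{lem:qet_matrixexp} applied to the input state $U\ket{0}$, (ii) post-selection of the block-encoding ancilla register onto $\ket{0^a}$, which projects onto $e^{-\frac{\beta^\tau}{2}(K^\tau+I)}/2 = e^{-\frac{\beta^\tau}{2}(H^\tau-\Xi^\tau I)}/2$ up to the polynomial error $\mu$, and (iii) $\calO(p_{\mathrm{exp}}^{-1/2})$ rounds of amplitude amplification to deterministically boost the success probability to $\Omega(1)$. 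Because the shift obeys $K^\tau+I=H^\tau-\Xi^\tau I$, the post-selected unnormalized state equals $e^{-\beta^\tau H^\tau/2}U\ket{0}$ up to an irrelevant global factor $e^{\beta^\tau\Xi^\tau/2}$ and the error $\mu$, so that after normalization it reproduces $\ket{\psi_\tau}$ of Eq.~\eqref{eq:psi_beta}. This furnishes $\ket{\tilde\psi_\tau}$ as the $\mu$-perturbed version of the exact TPQ state.

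Next I would establish the TPQ property, Eq.~\eqref{eq:TPQ_with_QET}, for $\ket{\tilde\psi_\tau}$. Under the standing assumption $\tr[\rho_\tau^2]\propto e^{-\alpha\log N}$, the exact state $\ket{\psi_\tau}$ already satisfies the bound via the Markov inequality Eq.~\eqref{eq:sdptpq} with $C_\xi=\calO(1/\xi^2)$, so it suffices to transfer the property to $\ket{\tilde\psi_\tau}$ by controlling $\lVert\ket{\tilde\psi_\tau}-\ket{\psi_\tau}\rVert\le\delta$. Since $\lVert T_j\rVert\le1$, a state-vector deviation $\delta$ changes any expectation value by at most $2\delta$, which I would absorb into the threshold $\xi$ at the cost of a constant in $C_\xi$. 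The delicate point is that the operator-norm error $\mu$ on the \emph{unnormalized} exponential is divided by the state norm $\sqrt{p_{\mathrm{exp}}}=\Theta(1/\sqrt{N})$, so the normalized error scales as $\calO(\mu/\sqrt{p_{\mathrm{exp}}})=\calO(\mu\sqrt{N})$; demanding $\delta=\calO(\xi)$ therefore forces $\mu=\calO(\xi/\sqrt{N})$, i.e. $\log\tfrac1\mu=\calO(\log\tfrac{N}{\xi})$. I would carry out this perturbation estimate in the appendix.

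With the construction fixed, the complexity follows by multiplying the three costs. Lemma~\ref{lem:qet_matrixexp} gives a QET degree, hence a number of block-encoding queries, of $\calO(\sqrt{\beta^\tau}\log\tfrac1\mu)$; using $\beta^\tau\le\frac{2}{\epsilon}\ln N=\calO(\tfrac{\log N}{\xi})$ (with $\epsilon\sim\xi$) together with the choice $\log\tfrac1\mu=\calO(\log\tfrac{N}{\xi})$ above, this is $\calO\big(\sqrt{\log N/\xi}\,\log\tfrac{N}{\xi}\big)$, and the $\calO((a+1)d)$ elementary-gate overhead of Lemma~\ref{lem:QET} is absorbed into the same big-$\calO$. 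Multiplying by the $\calO(p_{\mathrm{exp}}^{-1/2})=\calO(\sqrt{N})$ amplitude-amplification rounds yields the stated count $\calO\big((N\log N)^{1/2}\xi^{-1/2}\log\tfrac{N}{\xi}\big)$. The qubit count is $n$ system qubits plus the $a=\calO(\log N)$ (poly-logarithmic) ancillas of an efficient block encoding and the $\calO(1)$ QET and amplitude-amplification flag qubits, i.e. $\calO(\log N)$ in total.

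The crux of the argument---and the bound advertised as being of independent interest---is the lower bound $p_{\mathrm{exp}}=\Omega(1/N)$ needed to certify that only $\calO(\sqrt{N})$ amplification rounds suffice. In expectation over a $1$-design this is immediate, since $\bE_U[p_{\mathrm{exp}}]\approx\frac{1}{4N}\tr[e^{-\beta^\tau(H^\tau-\Xi^\tau I)}]$, and the shift $\Xi^\tau$, chosen so that $0\le\lambda_{\mathrm{min}}^\tau-\Xi^\tau\le 1/(2\beta^T)$, keeps the ground-state Boltzmann weight at $e^{-\beta^\tau(\lambda_{\mathrm{min}}^\tau-\Xi^\tau)}\ge e^{-1/2}$, whence $\tr[e^{-\beta^\tau(H^\tau-\Xi^\tau I)}]=\Omega(1)$. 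The main obstacle is upgrading this first-moment statement to a high-probability lower bound on $p_{\mathrm{exp}}$ for the \emph{particular} drawn $U$; here I would invoke the $k\,(\ge2)$-design property to control the second moment $\bE_U[p_{\mathrm{exp}}^2]$ and apply a Paley--Zygmund or Chebyshev-type concentration inequality, in the same spirit as the mean-squared-error estimate Eq.~\eqref{eq:TPQ_mse_main}. Without the shift, $\tr[e^{-\beta^\tau H^\tau}]$ could be exponentially large and $p_{\mathrm{exp}}$ correspondingly tiny, which is exactly why the shifted block encoding $K^\tau$ of Eq.~\eqref{eq:block_encoding_Ktau} is introduced.
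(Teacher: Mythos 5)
Your proposal is correct and follows essentially the same route as the paper's proof in Appendix~\ref{app:tpq_complexity}: QET block encoding of $P^{\mathrm{exp}}_{\beta^\tau/2}(K^\tau)$ applied to $U\ket{0}$, a Chebyshev-type second-moment bound over the $2$-design (using the shift $\Xi^\tau$ to keep $\tr[e^{-\beta^\tau(H^\tau-\Xi^\tau I)}]=\Omega(1)$) to certify $p_{\mathrm{exp}}=\Omega(1/N)$ with probability $1-\calO(\tr[\rho_\tau^2])$, and $\calO(\sqrt{N})$ rounds of amplitude amplification multiplied by the $\calO(\sqrt{\beta^\tau}\log\frac{1}{\mu})$ QET degree. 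The only cosmetic difference is that you transfer the TPQ property via a normalized state-vector perturbation bound (giving $\mu=\calO(\xi/\sqrt{N})$), whereas the paper manipulates the ratio of expectation values directly and requires $\mu=\calO(\xi/N)$; both choices enter only through $\log\frac{1}{\mu}=\calO(\log\frac{N}{\xi})$ and yield the identical complexity.
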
 

The complexity of estimating the Gibbs-state expectation values in our quantized MMW method therefore scales with $\tilde{\mathcal{O}}(\sqrt{N})$. This is a quadratic improvement over the classical lower bound, which holds for generic cases (without vanishing purity). Compared to the best previous quantum SDP solver~\cite{vanapeldoorn2019}, our TPQ state approach reduces the number of ancillary qubits and has an improved dependence on the SDP error with a factor $\epsilon^{-1/2}$. However, we remark that the improved  $\epsilon$-dependence stems from our assumption on an available block-encoding circuit $U_{K^\tau}$. 
In practice, finding such a block encoding can be non-trivial. In Ref.~\cite{vanapeldoorn2019,vanApeldoorn2020quantumsdpsolvers,Gilyen2019} the authors claim that using a different block encoding and polynomial approximation one could get an implementation of $e^{-\beta^\tau H^\tau/2}$ using $\sim\epsilon^{-1}$ queries to the block encoding. In appendix~\ref{app:alternative_QETexp} we provide an explicit LCU-based implementation which uses $\sim\epsilon^{-3/2}$ queries.

We can further improve the complexity in $N$ when the spectral condition in Proposition~\ref{prop:sdp_pur_cond} is satisfied. 

\begin{corollary}
\label{cor:tpqcomplex}
If we additionally assume that the Hamiltonian $H^\tau$ has $c2^{n}$ eigenstates within the interval $[\lambda_\mathrm{min}(H^\tau),\lambda_\mathrm{min}(H^\tau)+n\nu]$ with some constants $c\in[0,1]$ and $\nu\in(0,\epsilon \ln{2}/4)$. Then the quantum operation $V_{\beta^\tau}(H^\tau)$  can be implemented using at most \begin{equation}
    \label{eq:improved_TPQ_complexity}
        \mathcal{O}\left(\frac{(N\log^2{N})^{1/4}}{\epsilon^{1/2}}\log{\frac{N}{\xi}} \right)
    \end{equation}
     elementary gates and block encodings $U_{K^\tau}$ and $U_{K^\tau}^\dag$ of the shifted Hamilontian operator defined in Eq.~\ref{eq:block_encoding_Ktau}, and $\mathcal{O}(\log N)$ qubits.
\end{corollary}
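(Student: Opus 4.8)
The plan is to re-run the proof of Theorem~\ref{thm:tpq_complexity} and change only the single factor that the spectral hypothesis improves. The number of queries to $U_{K^\tau}$ and $U_{K^\tau}^\dagger$ used by $V_{\beta^\tau}(H^\tau)$ factorizes (with a matching number of elementary gates) as
\[
\underbrace{\calO(p_{\mathrm{exp}}^{-1/2})}_{\text{amplitude amplification}}\ \times\ \underbrace{\calO\!\left(\sqrt{\beta^\tau}\,\log\tfrac{1}{\mu}\right)}_{\text{QET degree (Lemma~\ref{lem:qet_matrixexp})}},
\]
where $p_{\mathrm{exp}}$ is the post-selection probability of Eq.~\eqref{eq:prob_exp}. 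With the choice $\mu\sim\xi/N$ inherited from Theorem~\ref{thm:tpq_complexity} and $\beta^\tau\le\tfrac{2n}{\epsilon}$, the degree factor is $\calO(\epsilon^{-1/2}\log^{1/2}N\,\log\tfrac N\xi)$ and the qubit count is $\calO(\log N)$; neither depends on the spectrum of $H^\tau$. Hence the whole improvement must come from a better lower bound on $p_{\mathrm{exp}}$, i.e.\ from a larger shifted partition function.

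First I would express $p_{\mathrm{exp}}$ through the shifted Hamiltonian. Averaging over $U$ from a $k\,(\ge2)$-design gives $\bE_U[\bra{0}U^\dagger e^{-\beta^\tau(H^\tau-\Xi^\tau I)}U\ket{0}]=\tfrac1N\tr[e^{-\beta^\tau(H^\tau-\Xi^\tau I)}]$. Writing $G:=H^\tau-\lambda^\tau_{\mathrm{min}}I\succeq0$ and using that the ground-state-energy estimate obeys $\lambda^\tau_{\mathrm{min}}-\tfrac{1}{2\beta^T}\le\Xi^\tau\le\lambda^\tau_{\mathrm{min}}$, the prefactor $e^{\beta^\tau(\Xi^\tau-\lambda^\tau_{\mathrm{min}})}$ is $\Theta(1)$ for all $\tau\le T$, so
\[
\bE_U[p_{\mathrm{exp}}]=\Theta\!\left(\frac{\tr[e^{-\beta^\tau G}]}{N}\right).
\]
The shift is precisely what removes the otherwise fatal factor $e^{-\beta^\tau\lambda^\tau_{\mathrm{min}}}$, which for $\lambda^\tau_{\mathrm{min}}=\Theta(1)$ and $\beta^\tau=\Theta(n)$ would be exponentially small; this is the reason accurate ground-state energy estimation becomes the new bottleneck.

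The key step is to lower bound $\tr[e^{-\beta^\tau G}]$ exactly as in Proposition~\ref{prop:sdp_pur_cond}, by retaining only the dense cluster of low-lying eigenvalues. Since $G$ has at least $cN$ eigenvalues in a constant window $[0,\nu]$ of its ground state,
\[
\tr[e^{-\beta^\tau G}]\ \ge\ cN\,e^{-\beta^\tau\nu}\ \ge\ cN\,N^{-2\nu/(\epsilon\ln2)}\ =\ cN^{1-\tilde\nu},\qquad \tilde\nu:=\frac{2\nu}{\epsilon\ln2}\in\left(0,\tfrac12\right),
\]
using $\beta^\tau\le\tfrac{2n}{\epsilon}$, $N=2^n$, and $\nu<\tfrac{\epsilon\ln2}{4}$. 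Therefore $\bE_U[p_{\mathrm{exp}}]=\Omega(N^{-\tilde\nu})=\Omega(N^{-1/2})$, so $p_{\mathrm{exp}}^{-1/2}=\calO(N^{\tilde\nu/2})=\calO(N^{1/4})$. Substituting into the factorization,
\[
\calO(N^{1/4})\times\calO\!\left(\frac{\log^{1/2}N}{\epsilon^{1/2}}\log\tfrac N\xi\right)=\calO\!\left(\frac{(N\log^2N)^{1/4}}{\epsilon^{1/2}}\log\tfrac N\xi\right),
\]
which is the claimed bound. The guarantee Eq.~\eqref{eq:TPQ_with_QET} survives because the same clustering forces $\tr[\rho_\tau^2]\le c^{-2}N^{-(1-2\tilde\nu)}\to0$ by Proposition~\ref{prop:sdp_pur_cond}, so $\ket{\tilde\psi_\tau}$ remains a valid TPQ state, and the qubit count stays $\calO(\log N)$.

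The main obstacle is promoting the expectation bound on $p_{\mathrm{exp}}$ to a statement about the single random $U$ actually drawn: one must control the variance of $\bra{0}U^\dagger e^{-\beta^\tau G}U\ket{0}$ with the $2$-design property and verify the estimate holds simultaneously for every $\tau\le T$, so that amplitude amplification genuinely terminates in $\calO(N^{1/4})$ rounds. A second, more conceptual subtlety is that the lower bound on $p_{\mathrm{exp}}$ is only as good as the accuracy of $\Xi^\tau$ as a proxy for $\lambda^\tau_{\mathrm{min}}$; once TPQ preparation is cheap, the cost of obtaining $\Xi^\tau$ to additive precision $\calO(1/\beta^T)$ dominates, which is the announced shift of the bottleneck to ground-state energy estimation. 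Finally, one should keep the $n$-versus-$\ln N$ and constant-factor bookkeeping consistent so that $\tilde\nu<1/2$ is genuinely respected across the full range $0\le\tau\le T$.
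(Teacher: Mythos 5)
Your proposal is correct and follows essentially the same route as the paper's Appendix~D.2: you lower-bound the shifted partition function $\tr[e^{-\beta^\tau(H^\tau-\Xi^\tau I)}]$ by the contribution of the $cN$ low-lying eigenvalues, which lifts $p_{\mathrm{exp}}$ from $\Omega(N^{-1})$ to $\Omega(N^{-1/2})$ and hence the amplitude-amplification rounds to $\mathcal{O}(N^{1/4})$, with the QET degree unchanged and the same Chebyshev/$2$-design concentration step (the paper's Eq.~\eqref{eq:pcheby}) that you correctly flag as the remaining ingredient. The only cosmetic difference is that you take a constant-width window $\nu<\epsilon\ln 2/4$ as in Proposition~\ref{prop:sdp_pur_cond}, whereas the corollary writes the window as $n\nu$; the paper's appendix reconciles these by choosing $\nu=\ln 2/(2\beta^\tau)$, so the two formulations coincide and your bookkeeping of $\tilde\nu<1/2$ is the internally consistent one.
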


The spectral condition (Proposition~\ref{prop:sdp_pur_cond}) leads to the vanishing purity that is assumed in Theorem~\ref{thm:tpq_complexity} as discussed in Appendix~\ref{app:exponential decaying purity}.
In this case, the number of queries to the block encoding scales with $N^{1/4}$, which results from our lower bound on the success probability given by Eq.~\eqref{eq:prob_exp} for Hamiltonians satisfying the spectral condition. This may imply a quartic quantum speedup over the classical complexity lower bound of $\tilde{\mathcal{O}}(N)$ for computing Gibbs state expectation values. However, there are two open questions. Firstly, in order to make this rigorous one needs to prove a classical lower bound in the presence of the spectral condition. We leave this question open for future work, but highlight that naive classical algorithms which make use of sparse matrix-vector multiplication have complexity $\tilde{\mathcal{O}}(N)$. Secondly, our algorithm requires an estimate of the ground state energy. There are numerous algorithms for this~\cite{Poulin2009,Ge2019,vanApeldoorn2020quantumsdpsolvers,Lin2020,Lin2022}, which have a generic worst case complexity of $\tilde{\mathcal{O}}(\sqrt{N})$. It is unclear if this can be improved for cases where the spectral condition is satisfied. Hence, the overall complexity still scales with $\sqrt{N}$, which is the same as the quantum lower bound $\tilde{\mathcal{O}}(\sqrt{N})$ for the generic case. Nevertheless, the ground state estimation only needs to be done once per iteration of the MMW algorithm, whereas the Gibbs state expectation value computation many times, which may make our algorithm more efficient in practice despite the same asymptotic scalings.

%------------------------------%
 
\subsection{Broken constraint check}
\label{sec:broken_constraint}

The next step in the algorithm is finding a broken constraint if one exists, or correctly determining that all constraints are satisfied (recall Algorithm~\ref{alg:TPQ_SDP}). One could measure all $m+1$ expectation values and check the constraints one by one, which would require at least $\calO(m)$ preparations of the TPQ state and measurements. In this section, we show we can achieve a quadratic speedup with respect to $m$. This requires combining the fast quantum OR lemma~\cite{brandao2019} with our construction for the TPQ states described in Section~\ref{subsec:TPQ}. 
We remark that this part of the algorithm is independent of the specific estimator of Gibbs-state expectation values. For instance, it can also be combined with purified Gibbs states. In fact, our algorithm is inspired by the constraint check in~\cite{vanApeldoorn2020quantumsdpsolvers}. We provide an explicit implementation and combine it with the TPQ states from the previous section. 

We start by introducing the quantum OR lemma and giving a high-level outline of our proof and techniques.  
\begin{lemma}[Fast quantum OR lemma~\cite{brandao2019,Gilyen2019}]
\label{lemma:fastOR_QET}
Let $\eta$ be an unknown density matrix, $\Pi_1,\dots,\Pi_k$ be a sequence of projection operators and $\upsilon,\zeta\in[0,1/2)$. 
Suppose we are promised that either
\begin{enumerate}
	\item[(i)] there exists a $j\in\{1,\dots,k\}$ such that $\tr[\Pi_j\eta]\ge 1-\upsilon$, or
	\item[(ii)] $\frac{1}{k}\sum_{j=1}^{k}\tr[\Pi_j\eta]\le \phi$.
\end{enumerate}
Then, there is a test that uses one copy of $\eta$, and that accepts with probability at least $(1-\upsilon)^2/4-\zeta$ in case $(i)$ and with probability at most $5k\phi+\zeta$ in case $(ii)$.
The test requires $\mathcal{O}(\log(1/\zeta)\sqrt{k})$ uses of the operator $\sum_{j=1}^{k}\mathrm{C}_{\Pi_j}\mathrm{NOT} \otimes \ket{j}\bra{j}$, $\mathcal{O}(\log(1/\zeta)\sqrt{k}\log k)$ other gates, and $\calO(\log k)$ ancillary qubits.
\end{lemma}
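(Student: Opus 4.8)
The plan is to recast the test as a single application of quantum eigenvalue transformation (Lemma~\ref{lem:QET}) to a block encoding of the averaged projector $\bar{\Pi}:=\frac{1}{k}\sum_{j=1}^{k}\Pi_j$, followed by a measurement of the block-encoding ancillas. First I would build the block encoding: preparing the index register in the uniform superposition $\frac{1}{\sqrt{k}}\sum_j\ket{j}$ with Hadamard gates, applying the given operator $W:=\sum_j\mathrm{C}_{\Pi_j}\mathrm{NOT}\otimes\ket{j}\bra{j}$ once, and uncomputing the index register realizes a block encoding of $\bar{\Pi}$ in the sense of Definition~\ref{def:block_encoding}, using $\mathcal{O}(\log k)$ ancillary qubits (the index register plus a flag qubit) and a single query to $W$. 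The relevant quantity is $\tr[\bar{\Pi}\eta]=\frac{1}{k}\sum_j\tr[\Pi_j\eta]$, the average of the individual acceptance probabilities, which the promise bounds below by $(1-\upsilon)/k$ in case (i) and above by $\phi$ in case (ii). Because the construction treats $\eta$ as a fixed input and the amplification below reflects only about the $\mathcal{O}(\log k)$ ancillas --- never about the system holding $\eta$ --- the whole test consumes a single copy of $\eta$.

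Second, I would amplify this $\Theta(1/k)$-versus-$\phi$ signal. Applying QSVT with a real polynomial $P$ of degree $d=\mathcal{O}(\sqrt{k}\log(1/\zeta))$ that (i) obeys the normalization $\max_{x\in[-1,1]}|P(x)|\leq 1/2$ required by Lemma~\ref{lem:QET}, and (ii) approximates a smoothed step to additive error $\zeta$, with $P(x)\approx 1/2$ for eigenvalues $x\gtrsim 1/k$ and $P(x)\approx 0$ below a cutoff somewhat smaller than $1/k$, produces a block encoding of $P(\bar{\Pi})$ using $2d=\mathcal{O}(\sqrt{k}\log(1/\zeta))$ queries to $W$ and $W^\dagger$, together with $\mathcal{O}(\sqrt{k}\log(1/\zeta)\log k)$ other gates. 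Measuring the block-encoding ancillas then accepts with probability $\tr[P(\bar{\Pi})^2\eta]$. The $\sqrt{k}$ in the degree is forced by the transition scale $\sim 1/k$, the $\log(1/\zeta)$ by the sharpness needed to reach additive error $\zeta$, and the $\log k$ factor is the QET gate overhead $\mathcal{O}((a+1)d)$ with $a=\mathcal{O}(\log k)$; a step is neither even nor odd, so the indefinite-parity form of Lemma~\ref{lem:QET} is exactly what is needed.

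Third, I would bound the two cases. In case (ii), $\tr[\bar{\Pi}\eta]\leq\phi$ forces, by Markov's inequality, the weight of $\eta$ on $\bar{\Pi}$-eigenvalues above the cutoff $\sim 1/k$ to be $\mathcal{O}(k\phi)$; combined with $P^2\leq 1/4$ on the accepted region and the $\zeta$ polynomial error this yields acceptance $\leq 5k\phi+\zeta$, the constant $5$ and the $\zeta$ absorbing the transition region and the approximation error. In case (i) the crux is the lower bound. The operator inequality $\bar{\Pi}\succeq\frac{1}{k}\Pi_{j^*}$ shows that the compression of $\bar{\Pi}$ to $\mathrm{supp}\,\Pi_{j^*}$ is at least $1/k$, so that the part of $\eta$ supported on $\Pi_{j^*}$ places its $\bar{\Pi}$-spectral weight at eigenvalues $\gtrsim 1/k$, where $P^2\approx 1/4$; transferring the promise $\tr[\Pi_{j^*}\eta]\geq 1-\upsilon$ to this spectral weight (losing a square through a fidelity/Cauchy--Schwarz step) and subtracting the $\zeta$ error gives acceptance $\geq(1-\upsilon)^2/4-\zeta$. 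The factor $1/4$ is precisely the QET normalization cap $\max|P|\leq 1/2$.

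The hard part will be the case-(i) lower bound. The inequality $\bar{\Pi}\succeq\frac{1}{k}\Pi_{j^*}$ is immediate, but it only controls the \emph{compression} of $\bar{\Pi}$ to $\mathrm{supp}\,\Pi_{j^*}$; converting this into a statement about the spectral weight that $\eta$ places on the high-eigenvalue subspace of the full $\bar{\Pi}$ is delicate, since $\eta$ is only promised mass $1-\upsilon$ on that support and need not commute with $\bar{\Pi}$, so some leakage below $1/k$ must be tracked. Pinning down the clean constant $(1-\upsilon)^2/4$ requires controlling this leakage together with the interplay between the polynomial's transition width, its normalization, and the degree that fixes the $\sqrt{k}\log(1/\zeta)$ query count. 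An equivalent route replaces the smoothed-step polynomial by the amplitude-amplification polynomial $\sin^2\!\big((2R+1)\arcsin\sqrt{x}\big)$ applied through the same block encoding with a \emph{randomly} chosen number $R=\mathcal{O}(\sqrt{k})$ of iterations, averaging over $R$ to obtain the constant lower bound; I expect the spectral bookkeeping to be essentially the same either way.
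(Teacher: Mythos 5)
Your architecture matches the paper's: block-encode the averaged projector via Hadamards on the index register and one query to $W=\sum_j\mathrm{C}_{\Pi_j}\mathrm{NOT}\otimes\ket{j}\bra{j}$, invoke the Harrow--Lin--Montanaro spectral facts (weight $\ge(1-\upsilon)^2/4$ on eigenvalues $\ge\lambda=\frac{1-\upsilon}{2k}$ in case $(i)$, weight $\le 5k\phi$ above $4\lambda/5$ in case $(ii)$ by Markov), and finish with a QSVT-based threshold. The paper's appendix is in fact terser than your sketch: it cites~\cite{Harrow2017} wholesale for the case-$(i)$ spectral inequality rather than re-deriving it, so your Cauchy--Schwarz sketch for that step is extra work the paper does not attempt.

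The genuine gap is in your degree claim for the main-line polynomial. You block-encode $\bar\Pi$ and assert a polynomial of degree $\mathcal{O}(\sqrt{k}\log(1/\zeta))$, bounded on $[-1,1]$, that transitions from $\approx 0$ to $\approx 1/2$ over a window of width $\Theta(1/k)$ located at $x\approx 1/k$ --- i.e.\ near the \emph{center} of the QSVT domain $[-1,1]$. Bernstein's inequality, $|P'(x)|\le d\max|P|/\sqrt{1-x^2}$, forces $d=\Omega(k)$ for such a polynomial, so your stated justification (``the $\sqrt{k}$ is forced by the transition scale $1/k$'') is backwards: a $1/k$-wide interior transition naively costs degree $k$, and the $\sqrt{k}$ is a \emph{savings} that must be earned. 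The paper earns it by block-encoding $\frac{1}{k}\sum_j(I-\Pi_j)=I-\bar\Pi$, placing the thresholds at $a=1-\lambda$, $b=1-4\lambda/5$ near the endpoint of the spectrum, where the eigenvalue-discrimination complexity $\mathcal{O}\big((\max[b-a,\sqrt{1-a^2}-\sqrt{1-b^2}])^{-1}\log(1/\delta')\big)$ of Lemma~\ref{lemma:eigenvalue_disc} degrades gracefully: $\sqrt{1-a^2}-\sqrt{1-b^2}=\Theta(\sqrt{\lambda})=\Theta(1/\sqrt{k})$, yielding degree $\mathcal{O}(\sqrt{k}\log(1/\zeta))$. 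Your closing alternative --- amplitude amplification with $\sin^2\big((2R+1)\arcsin\sqrt{x}\big)$ and a random $R=\mathcal{O}(\sqrt{k})$ --- exploits the same square-root structure and is the original route of~\cite{Harrow2017}, so it would repair the argument; but as written, the main construction does not close.
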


The OR lemma can be used to distinguish between two scenarios depending on the input state $\eta$ and the projection operators $\Pi_j$. In the first scenario, the expectation value of one of the projectors is greater than some fixed value. In the second scenario, the average expectation value of all projectors is below some other fixed value. In both cases, the test outputs ``1'' (acceptance) with a certain probability. By choosing the right parameters $\nu, \zeta$ the probability of acceptance in scenario $(i)$ is larger than the probability in scenario $(ii)$. Hence, by repeating the test enough times, one can distinguish between the two cases with high probability. 

For our SDP algorithm, we wish to distinguish between the case when there is a broken constraint, i.e., there is an index $j$ such that $\tr[\rho_\tau A_j]-b_j\geq \epsilon$, and the case $\tr[\rho_\tau A_j]-b_j\leq \epsilon$ for all $j\in\{0,\dots,m\}$. In the following, we construct $k=m+1$ projectors $\Pi_j$ which combined with the OR lemma can be used to achieve this task. We first do this for the Gibbs state $\rho_\tau$, then substitute our TPQ states and derive how many times we need to repeat the OR lemma test. This leads to the total circuit complexity of our TPQ-SDP solver in terms of queries to the block encoding of $K^\tau$ [Eq.~\eqref{eq:block_encoding_Ktau}] and other elementary gates. An abstract circuit diagram of the broken constraint check is given in Fig.~\ref{fig_projector}. For the interested readers, in Appendix~\ref{app:orlemma} we also include the test and proof of Lemma~\ref{lemma:fastOR_QET}. 

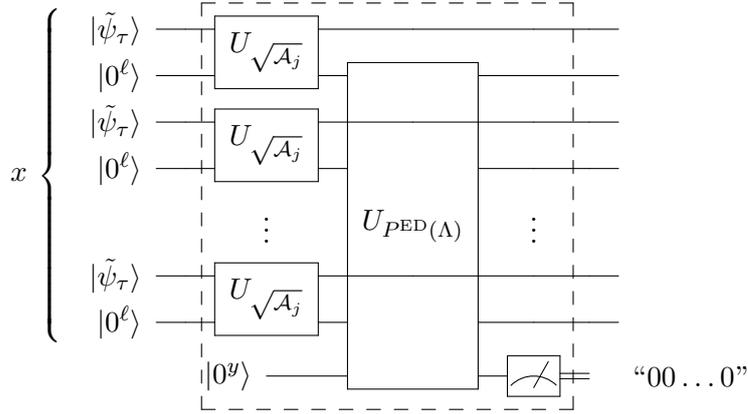
\begin{figure}
\begin{align}
\nonumber
\label{eq:ctrlU}
    \begin{array}{c}
        \Qcircuit @C=1.em @R=0.7em {
        &&&\lstick{\ket{\tilde{\psi}_{\tau}}}
        & \qw
        & \multigate{1}{U_{\sqrt{\calA_j}}}
        & \qw 
        & \qw
        & \qw
        & \qw
        \\
        &&&\lstick{\ket{0^\ell}}
        & \qw
        & \ghost{U_{\sqrt{\calA_j}}}
        & \multigate{8}{U_{P^{\text{ED}}(\Lambda)}}
        & \qw
        & \qw
        & \qw
        \\
        &&&\lstick{\ket{\tilde{\psi}_{\tau}}}
        & \qw
        & \multigate{1}{U_{\sqrt{\calA_j}}}
        & \qw 
        & \qw
        & \qw
        & \qw
        \\
        &&&\lstick{\ket{0^\ell}}
        & \qw
        & \ghost{U_{\sqrt{\calA_j}}}
        & \ghost{U_{P^{\text{ED}}(\Lambda)}}
        & \qw
        & \qw
        & \qw
        \\
        \\
        &&&
        &
        & \vdots
        &
        & \vdots
        \\
        \\
        &&&\lstick{\ket{\tilde{\psi}_{\tau}}}
        & \qw
        & \multigate{1}{U_{\sqrt{\calA_j}}}
        & \qw
        & \qw
        & \qw
        & \qw
        \\
        &&&\lstick{\ket{0^\ell}}
        & \qw
        & \ghost{U_{\sqrt{\calA_j}}}
        & \ghost{U_{P^{\text{ED}}(\Lambda)}}
        & \qw
        & \qw
        & \qw
        \\
        &&&
        &
        & \lstick{\ket{0^y}}
        & \ghost{U_{P^{\text{ED}}(\Lambda)}}
        & \meter{}
        & \cw
        & \rstick{``00\dots0"}
        \gategroup{1}{6}{10}{8}{.9em}{--}
        \inputgroupv{1}{9}{1.3em}{5.em}{x}
        }
    \end{array}
\end{align}
\caption{\label{fig_projector} 
The circuit diagram of the broken constraint check. The input state is $\tilde{\eta}$~[Eq.~\eqref{eq:eta_tilde}] is a tensor product of $x$ copies of the TPQ state $\ket{\tilde{\psi}_{\tau}}$ and $y$ additional ancillary qubits. Acting on this state with the circuit surrounded by the dashed box and post-selecting with ``$00\dots0$'' in the bottom register correspond to applying the projector $\Pi_j$ [Eq.~\eqref{eq:amp_proj}]. This consists of first applying the block encoded constraint matrices $U_{\sqrt{\calA_j}}$ followed by eigenvalue discrimination $U_{P^{\text{ED}}(\Lambda)}$ of the majority-voting operator~$\Lambda$. The post-selection succeeds with probability $\tr[\tilde{\eta}\Pi_j]\approx 1$ when a constraint $j$ is broken [case $(i)$] and $\tr[\tilde{\eta}\Pi_j]\approx 0$ when constraint $j$ is not broken [case $(ii)$].}
\end{figure}

%--%
 
\subsubsection{Constructing the OR Lemma Projectors}

In order to construct the projectors $\Pi_j$ in Lemma~\ref{lemma:fastOR_QET} that correspond to the constraint matrices $A_j$ of the SDP, we first shift and rescale them
\begin{equation}
\label{eq:calA}
    \calA_j:=\frac{(2-b_j)I+A_j}{4},
\end{equation}
such that $0\preceq \calA_j\preceq I$ for each $j\in\{0,\dots,m\}$ and the constraints can be written as $\tr[\rho_\tau \mathcal{A}_j]\leq 1/2+\epsilon/4$.
This allows us to construct a block encoding $U_{\sqrt{\calA_j}}$ of the square root of the constraint matrix $\sqrt{\calA_j}$ using $\ell$ ancillary qubits, i.e., $(I\otimes\bra{0^\ell})U_{\sqrt{\calA_j}}(I\otimes\ket{0^\ell}) = \sqrt{\calA_j}$~\cite{Somma2013,Chowdhury_2016}. 

Applying $U_{\sqrt{\calA_j}}$ to the state $\rho_\tau\otimes \ket{0^\ell}\bra{0^\ell}$ and measuring the ancillary qubits in $\ket{0^\ell}\bra{0^\ell}$, we find
\begin{equation}
    \tr\Big[
    U_{\sqrt{\calA_j}}
    \big(\rho_\tau \otimes \ket{0^\ell}\bra{0^\ell}\big)
    U_{\sqrt{\calA_j}}^\dagger 
    \big(I^{\otimes n}\otimes \ket{0^\ell}\bra{0^\ell}\big) 
    \Big] 
    = 
    \tr\left[ \rho_\tau \mathcal{A}_j \right]. 
\end{equation} 
The probability that the ancillary register is found in state $\ket{0^\ell}$ is equal to expectation value $\tr\left[\rho_\tau \mathcal{A}_j\right]$. Making use of the Chernoff bound,\footnote{
Chernoff bounds (one-sided Chernoff-Hoeffding inequalities) are given by
\begin{equation}
    \mathrm{Pr}\left(X-\mu \geq \delta\right)\le e^{-2\delta^2/n},
    \qquad
    \mathrm{Pr}\left(X-\mu \geq -\delta\right)\ge 1-e^{-2\delta^2/n}, 
\end{equation}
where $X$ is the sum of $n$ independent random variables taking values in $[0, 1]$, $\mu$ is the sum's expectation value, and $\delta>0$.
}
we can show that, by measuring the ancillary register $x=\calO(\frac{1}{\epsilon_\text{gap}^2}\ln\frac{m}{\delta})$ times, we can determine if $\tr\left[\rho_\tau \mathcal{A}_j\right]\leq 1/2 +\epsilon/4$ up to an error $\epsilon_\text{gap}$ with failure probability at most $\delta/m$. 
% In particular, from the Chernoff bound we can determine how many copies, $x$, we need to tell whether a broken constraint exists with high enough probability.
% In App.~\ref{app:copies} we do this explicitly for TPQ states, and find that we need $x=\frac{3}{\kappa^2}\ln{\frac{2m}{\delta}}$ copies for precision $\kappa$ with probability $1-\delta/m$.

%Instead of measuring the $x$ copies in series, 
To turn this into a quantum algorithm, we construct a single projector for each constraint matrix that acts on $x$ copies of the state $\rho_\tau\otimes \ket{0^\ell}\bra{0^\ell}$,
\begin{equation}
\label{eq:eta_tilde}
    \tilde{\eta} 
    = \big(\rho_\tau \otimes \ket{0^\ell}\bra{0^\ell}\big)^{\otimes x}.  
\end{equation} 
We then introduce the operator
\begin{equation}
\label{eq:amp_projector1}
    \Lambda=\frac{1}{x}\sum_{i=0}^{x-1}\Lambda_i,
\end{equation} 
where $\Lambda_i=\ket{0^\ell}\bra{0^\ell}$ acts on the ancillary register of $i^{\text{th}}$ copy. The operator $\Lambda$ computes the number of times the ancillary registers are projected onto $\ket{0^\ell}$ divided by the number of copies~$x$. By applying the block-encoding $U_{\sqrt{A_j}}$ to each copy of $\tilde{\eta}$ and measuring $\Lambda$, we obtain the expectation value 
\begin{equation}
    \tr\Big[
    \big(U_{\sqrt{A_j}}\big)^{\otimes x}\, \tilde{\eta}\,
    \big(U_{\sqrt{A_j}}^\dagger\big)^{\otimes x} \Lambda
    \Big]
    =
    \tr\left[\rho_\tau \mathcal{A}_j\right].
\end{equation}

We can determine if this expectation value is greater than or below the threshold, $1/2+\epsilon/4$, up to an error $\epsilon_\text{gap}$ by performing eigenvalue projection on the input state $\tilde{\eta}$. This is stated in the following Lemma, which is proven with another QET routine in Appendix~\ref{app:eigendisclemma}. 

\begin{lemma}[Eigenvalue projection]
\label{lemma:eigenvalue_disc_density}
    Let $b>a>0$, $\delta'\in(0,1)$, $p_a,p_b \in[0,1]$, and $U_\Lambda$ be a block encoding of a Hermitian matrix $\Lambda$. Suppose a quantum state~$\rho$ is promised to satisfy either $\tr[\rho \Pi_{\lambda\ge b}]\ge p_b$ or $\tr[\rho \Pi_{\lambda\le a}]\ge p_a$ where $\Pi_{\lambda\geq b}$ is a projector onto the eigenspace of $\Lambda$ with eigenvalues $\geq b$, and $\Pi_{\lambda\leq a}$ is defined similarly.
    Then, there is an algorithm that accepts $\rho$ with probability $p_b (1-\delta')^2$ in case $\tr[\rho \Pi_{\lambda\ge a}]\ge p_b$ holds, and accepts it with probability $\delta'^2 + (1-p_a)$ in case $\tr[\rho \Pi_{\lambda\le a}]\ge p_a$ holds. The algorithm is implemented with $\mathcal{O}(\frac{1}{b-a}\log\frac{1}{\delta'})$ uses of $U_\Lambda$ and other elementary gates.
\end{lemma}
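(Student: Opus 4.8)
\emph{Proof plan.} The strategy is to realize an approximate projector onto the high-eigenvalue subspace of $\Lambda$ by applying a polynomial approximation of a step function to $\Lambda$ via QET, and to define ``acceptance'' as the event that the block-encoding ancilla of $U_{P^{\text{ED}}(\Lambda)}$ is measured in the all-zero state (the dashed box in Fig.~\ref{fig_projector}). Concretely, I would fix the threshold $c=(a+b)/2$ and seek a real polynomial $P^{\text{ED}}$ of degree $d$ that behaves like the indicator of $\{\lambda\ge c\}$ on the spectrum: it should satisfy $P^{\text{ED}}(x)\ge 1-\delta'$ for $x\ge b$, $|P^{\text{ED}}(x)|\le\delta'$ for $x\le a$, and $|P^{\text{ED}}(x)|\le 1$ on all of $[-1,1]$. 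The interval $[a,b]$ is the transition (``don't-care'') region, and its width $b-a$ is exactly what controls the required degree.

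The key input is the approximation-theory fact that such a $P^{\text{ED}}$ exists with $d=\mathcal{O}\big(\frac{1}{b-a}\log\frac{1}{\delta'}\big)$. I would obtain it from the standard smoothed sign/step-function constructions used throughout QSVT, e.g.\ a shifted and rescaled error-function approximation \cite{Gilyen2019,Low2019hamiltonian,Martyn2021grand}: shifting by $c$ and rescaling reduces the task to approximating $\operatorname{sign}$ to error $\mathcal{O}(\delta')$ outside an interval of half-width $\propto(b-a)$, whose optimal degree is $\mathcal{O}\big(\frac{1}{b-a}\log\frac{1}{\delta'}\big)$. Applying the QET construction of Lemma~\ref{lem:QET} to this polynomial then produces the block encoding $U_{P^{\text{ED}}(\Lambda)}$ using $2d=\mathcal{O}\big(\frac{1}{b-a}\log\frac{1}{\delta'}\big)$ queries to $U_\Lambda$ and $U_\Lambda^\dagger$ together with $\mathcal{O}(d)$ further elementary gates, which already matches the claimed complexity.

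The probability analysis is then essentially a one-line spectral computation, which I expect to be routine. Applying $U_{P^{\text{ED}}(\Lambda)}$ to $\rho\otimes\ket{0}\bra{0}$ and post-selecting the ancilla on $\ket{0}$ succeeds with probability $\tr[\rho\,P^{\text{ED}}(\Lambda)^2]$. In the case $\tr[\rho\,\Pi_{\lambda\ge b}]\ge p_b$, the operator inequality $P^{\text{ED}}(\Lambda)^2\succeq(1-\delta')^2\,\Pi_{\lambda\ge b}$ (valid because $P^{\text{ED}}(\mu)^2\ge(1-\delta')^2$ on that eigenspace and $P^{\text{ED}}(\mu)^2\ge0$ everywhere) gives acceptance probability at least $p_b(1-\delta')^2$. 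In the case $\tr[\rho\,\Pi_{\lambda\le a}]\ge p_a$, I would split $\tr[\rho\,P^{\text{ED}}(\Lambda)^2]$ into the contribution on $\Pi_{\lambda\le a}$, bounded by $\delta'^2$ using $|P^{\text{ED}}|\le\delta'$ there, plus the contribution on the complement, bounded by $\tr[\rho\,(I-\Pi_{\lambda\le a})]\le 1-p_a$ using $|P^{\text{ED}}|\le1$; together these yield at most $\delta'^2+(1-p_a)$, matching the statement.

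The main obstacle is not the bookkeeping above but the normalization/parity interface with Lemma~\ref{lem:QET}, whose restatement carries a $\max_{[-1,1]}|P|\le\tfrac12$ hypothesis, whereas the clean probabilities require a block-encoded function bounded by $1$ at unit subnormalization. I would resolve this by building the projector as $P^{\text{ED}}(\Lambda)=\tfrac12\big(I+S(\Lambda-cI)\big)$, where $S$ is an odd polynomial approximating $\operatorname{sign}$ (implemented by definite-parity QSVT with no factor-$\tfrac12$ loss), and combining the two terms through a linear combination of unitaries whose coefficients sum to $1$; this realizes a block encoding of a Hermitian $P^{\text{ED}}(\Lambda)$ with $\|P^{\text{ED}}(\Lambda)\|\le1$ at unit subnormalization, so the acceptance probability is exactly $\tr[\rho\,P^{\text{ED}}(\Lambda)^2]$ and the bounds go through verbatim. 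Finally I would note that nothing in the argument needs $\rho$ pure or $\Lambda$ directly accessible: the success probability is linear in $\rho$, and the QET output is a genuine block encoding, so the spectral decomposition of $\Lambda$ used above is legitimate.
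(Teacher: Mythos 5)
Your proposal is correct and follows essentially the same route as the paper: the paper likewise constructs an approximate step-function polynomial $P^{\mathrm{ED}}$ (via the eigenvalue-discrimination lemma of Gily\'en et al., which is the same sign-function construction you describe) and then bounds $\tr[P^{\mathrm{ED}}(\Lambda)\rho P^{\mathrm{ED}}(\Lambda)]$ by exactly the two spectral estimates you give, $p_b(1-\delta')^2$ from below in case $(i)$ and $\delta'^2+(1-p_a)$ from above in case $(ii)$. Your explicit handling of the $\max|P|\le\tfrac12$ normalization in Lemma~\ref{lem:QET} via a definite-parity sign polynomial is a detail the paper leaves implicit, but it does not change the argument.
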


We implement this Lemma by constructing an approximate projector, $P^{\mathrm{ED}}(\Lambda)$,\footnote{
The projector is constructed with the approximate step function $P^\text{ED}(x)$ that satisfies $|P^\text{ED}(x)|\le1$ for all $x\in[-1,1]$ and
\begin{equation}
\left\{
\begin{array}{ll}
    P^\text{ED}(x)\ge 1- \delta'
    & x\in[-1,-b]\cup[b,1],
    \\[.2em]
    |P^\text{ED}(x)|\le \delta'
    & x\in[-a,a].
\end{array}
\right.
\end{equation}
The superscript ``ED'' stands for eigenvalue discrimination.
} 
on the eigenvalues of the operator $\Lambda$. 
Setting $a=1/2+\epsilon/4$, $b=1/2+\epsilon/4+\epsilon_{\text{gap}}$, $\delta'=\delta/(m+1)$, and using a block-encoding $U_\Lambda$ of $\Lambda$ (Appendix~\ref{app:majorblock}), we can construct a circuit $U_{P^{\mathrm{ED}}(\Lambda)}$ to distinguish between $\tr\left[\rho_\tau \mathcal{A}_j\right]\geq 1/2+\epsilon/4+\epsilon_\text{gap}$ and $\tr\left[\rho_\tau \mathcal{A}_j\right]< 1/2 +\epsilon/4$ with failure probability $\calO(\delta/m)$. The projectors
\begin{equation}
\label{eq:amp_proj}
    \Pi_j 
    =
    \big(U_{\sqrt{\calA_j}^\dag}\,^{\otimes x}\otimes I^{y}\big)
    U_{P^{\text{ED}}(\Lambda)}^\dag
    (I\otimes\ket{0^y}\bra{0^y})
    U_{P^{\text{ED}}(\Lambda)}
    \big(U_{\sqrt{\calA_j}}\,^{\otimes x}\otimes I^{y}\big),
\end{equation}
are then suitable for use in the quantum OR Lemma (Lemma~\ref{lemma:fastOR_QET}). Here, the state $\ket{0^y}$ is the $y=\lceil \log{x}\rceil +1$-qubit ancillary register for the block-encoding of $\Lambda$. 
Case $(i)$ in the OR Lemma corresponds to $\tr\left[\rho_\tau \mathcal{A}_j\right]\geq 1/2+\epsilon/4 +\epsilon_\text{gap}$, and case $(ii)$ to $\tr\left[\rho_\tau \mathcal{A}_j\right]< 1/2 +\epsilon/4$. 

The projector $\Pi_j$ can be implemented with the gate complexity,
\begin{equation}
\label{eq:Vi_complexity}
    \mathcal{T}_{\Pi_j} 
    = 
    \mathcal{O}\left(
    \frac{\log (m/\delta)}{\epsilon_\text{gap}^2}
    \left(\mathcal{T}_{\sqrt{\calA_j}}
    +\frac{\log (m/\delta)}{\epsilon_\text{gap}}\right)
    \right),
\end{equation} 
where $\mathcal{T}_{\sqrt{\calA_j}}$ is the gate complexity of block encoding the matrix $\sqrt{\calA_j}$. 
Hence, the operator $\sum_{j=0}^{m}\mathrm{C}_{\Pi_j}\mathrm{NOT} \otimes \ket{j}\bra{j}$ required for Lemma~\ref{lemma:fastOR_QET} is constructed with $\calO\big(\frac{1}{\epsilon_\text{gap}^2}\log \frac{m}{\delta}\big)$ uses of the operator $\sum_{j=0}^{m}{U_{\sqrt{\mathcal{A}_j}}}\otimes \ket{j}\bra{j}$ and $\mathcal{O}\big(\frac{1}{\epsilon_\text{gap}^3}\big(\log\frac{m}{\delta}\big)^2\big)$ other gates.

In the following subsection, we derive the gate complexity of the projectors for the scenario in which the Gibbs state $\rho_\tau$ is replaced with a TPQ state. The OR lemma combined with binary search enables one to check the $m+1$ constraints and search for a broken one if it exists. 

%------------------------------%
 
\subsubsection{OR Lemma with TPQ states}
\label{sec:constructing projectors}

In order to combine the projectors $\Pi_j$ [Eq.~\eqref{eq:amp_proj}] in the OR lemma with our TPQ states $\ket{\tilde{\psi}}$, we create the state
\begin{equation}
    \tilde{\rho}
    := 
    \bigotimes_{i=0}^{x-1} 
    \big(\ket{\tilde{\psi}_{\tau}}\bra{\tilde{\psi}_{\tau}}
    \otimes \ket{0^\ell}\bra{0^\ell}\big), 
\end{equation} 
as the input state.\footnote{For simplicity in our proof we consider the scenario in which each copy is an identical TPQ state, i.e. each TPQ state is generated with the same random circuit $U$. One might expect that using a different random circuit for each TPQ state could lead to a smaller error in practice.} 
To incorporate the error, $\xi$, that stems from the use of TPQ states (Definition~\ref{def:TPQ_state}), we set $\epsilon_\text{gap}=\xi$ and $x=\frac{1}{\xi^2}\ln\frac{m+1}{\delta}$ as discussed in Appendix~\ref{app:copies}.
Under this setup, in case~$(i)$ of Lemma~\ref{lemma:fastOR_QET} we have 
\begin{align}
\label{eq:projector_case1}
\begin{split}
    &\tr[(\tilde{\rho}\otimes\ket{0^y}\bra{0^y})\Pi_j]
    \ge
    1-3\delta/(m+1)-C_\xi e^{-\alpha n},
\end{split}
\end{align}
and in case $(ii)$ we have 
\begin{align}
\label{eq:projector_case2}
\begin{split}
    &\tr[(\tilde{\rho}\otimes\ket{0^y}\bra{0^y})\Pi_j]
    \le 
    2\delta/(m+1) + C_\xi e^{-\alpha n},
\end{split}
\end{align}
which can be shown using Lemma~\ref{lemma:eigenvalue_disc_density} (see Appendix~\ref{app:projector_inequality} for the derivation).

Hence, the set of operators $\{\Pi_j\}_{j=0}^{m}$ serve as projectors in Lemma~\ref{lemma:fastOR_QET} with $\upsilon = 3\delta/(m+1) + C_\xi e^{-\alpha n}$ and $\phi = 2\delta/(m+1)+ C_\xi e^{-\alpha n}$.
The lemma tells us that in case $(i)$, the algorithm accepts with probability at least $P_1:=(1-3\delta/(m+1)-C_\xi e^{-\alpha n})^2/4-\zeta$, and in case $(ii)$ the algorithm accepts with probability at most $P_2:=10\delta+ 5(m+1)C_\xi e^{-\alpha n}+\zeta$. 

To increase the probability of successfully distinguishing the two cases, we amplify the gap $P_1-P_2$ by repeating the test $K$ times. 
We assume $P_1-P_2\ge \Omega(1)$ by taking $\delta$ and $\zeta$ to be sufficiently small.\footnote{The probabilities $P_1$ and $P_2$ depend on the TPQ success probability $C_\xi e^{-\alpha n}$, which we do not have precise control over. We carefully consider this in  Appendix~\ref{app:amplify_OR_gap}.}
For instance, if we set $\delta = 1/184$ and $\zeta = 1/32$, then $P_1-P_2\ge1/8-\calO(mC_\xi e^{-\alpha n})$.
We let $X_i = 1$ if the $i^\mathrm{th}$ repetition of the test accepts, and $X_i = 0$ if it does not. 
Given the count of acceptances $\sum_{i=1}^{K}X_i$, we conclude that there is a broken constraint if $\sum_{i=1}^K X_i \geq K\frac{P_1+P_2}{2}$, and all the constraints are satisfied if $\sum_{i=1}^K X_i \leq K\frac{P_1+P_2}{2}$. 
Setting $K = \calO\big(\ln\frac{\log_2m}{\tilde{\delta}}\big)$ and making use of the Chernoff bound, we find the test yields a correct outcome with probability at least $1-\tilde{\delta}^{1-\calO(mC_\xi e^{-\alpha n})}/\log_2m$ (see Appendix~\ref{app:amplify_OR_gap}).

Case $(i)$ in the OR Lemma only tells us if there exists a broken constraint, not which index it has. In order to search for the index, we perform a binary search by repeating the test $\log_2 m$ times on different subsets of the constraints. We then have an index $j$ such that $\tr\left[\rho_\tau \mathcal{A}_j\right]\geq 1/2+\epsilon/4$ with probability at least $1-\tilde{\delta}^{1-\calO(mC_\xi e^{-\alpha n})}$ (Lemma 15 in~\cite{aaronson2018shadow}).
Thus, in total we require $K\times\log_2 m=\calO(\log m \log\frac{\ln m}{\tilde{\delta}})$ repetitions of the test. Together with the complexity of implementing the OR lemma, we end up with the following theorem for the resource requirements of the broken constraint check.

\begin{theorem}[Broken constraint check with gap promise]
\label{thm:broken_constraint}
    Let $\epsilon, \xi$ and $\tilde{\delta}\in(0,1)$.
    Also, let $\{A_j\}_{j=0}^{m}$ be a set of constraint matrices, and $\ket{\tilde{\psi}_{\tau}}$ be a TPQ state [Eq.~\eqref{eq:approx_TPQ}] corresponding to the Gibbs state at the $\tau^\mathrm{th}$ step of the quantized MMW algorithm with 
    $\mathrm{Pr}_{U}[|\bra{\tilde{\psi}_\tau}A_j\ket{\tilde{\psi}_\tau}-\tr[\rho_\tau A_j]|\geq \xi]\leq C_\xi e^{-\alpha n}$ 
    for all $j=0,\dots,m$. 
    Suppose either
    \begin{enumerate}
        \item[(i)] there exists a $j$ such that $\tr[\rho_\tau A_j] \ge b_j + \epsilon+2\xi$, or
        \item[(ii)] $\tr[\rho_\tau A_j] \le b_j + \epsilon$ for all $j$,
    \end{enumerate}
    is promised to hold.
    Then there is an algorithm that in case $(i)$ outputs an index $j$ such that $\tr[\rho_\tau A_j]\geq b_j+\epsilon+2\xi$, or in case $(ii)$ correctly concludes $\tr[\rho_\tau A_j] \le b_j+\epsilon$ for all $j$ with failure probability at most $\tilde{\delta}^{1-\calO(mC_\xi e^{-\alpha n})}$. 
    
    The test makes
    \begin{equation}
        \mathcal{O}\left(
        \frac{1}{\xi^2}\sqrt{m}(\log m)^2
        \Big(\log\log m + \log\frac{1}{\tilde{\delta}}\Big)
        \right)
    \end{equation}
    uses of the operator $\sum_{j=0}^m U_{\sqrt{\mathcal{A}_j}}\otimes \ket{j}\bra{j}$,
    \begin{equation}
        \calO\left(
        \frac{1}{\xi^3}\sqrt{m}(\log m)^3
        \Big(\log\log m + \log\frac{1}{\tilde{\delta}}\Big)
        \right)
    \end{equation}
    other gates,
    and $\calO\big(\frac{\log m}{\xi^2}\big)$ ancillary qubits.
\end{theorem}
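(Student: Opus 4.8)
The plan is to assemble the theorem from three ingredients already established: the projector $\Pi_j$ of Eq.~\eqref{eq:amp_proj}, built from the block encodings $U_{\sqrt{\calA_j}}$ and the eigenvalue-discrimination circuit $U_{P^{\text{ED}}(\Lambda)}$; the fast quantum OR lemma (Lemma~\ref{lemma:fastOR_QET}); and the TPQ approximation guarantee of Theorem~\ref{thm:tpq_complexity}. First I would translate the gap promise on $\tr[\rho_\tau A_j]$ into a threshold-discrimination problem for the majority-voting operator $\Lambda$ of Eq.~\eqref{eq:amp_projector1}. Using the affine rescaling $\calA_j=((2-b_j)I+A_j)/4$ of Eq.~\eqref{eq:calA}, for which $\tr[\rho_\tau\calA_j]=\tfrac14\big((2-b_j)+\tr[\rho_\tau A_j]\big)$, the two branches of the promise map to $\tr[\rho_\tau\calA_j]$ lying above versus below the threshold $1/2+\epsilon/4$, separated by a margin proportional to $\xi$; the factor-of-four rescaling, the TPQ substitution error, and the discrimination window $\epsilon_\text{gap}$ must be chosen consistently so that in each case the (TPQ-perturbed) value sits strictly on the correct side. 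I would then fix the copy count $x=\tfrac1{\xi^2}\ln\tfrac{m+1}{\delta}$ so that, by a Chernoff bound, $\Lambda$ concentrates around $\tr[\rho_\tau\calA_j]$, and assemble the $\Pi_j$ accordingly.

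The core step is establishing the two projector inequalities, Eqs.~\eqref{eq:projector_case1} and~\eqref{eq:projector_case2}, in which three independent error sources combine additively. I would bound (a) the sampling error of estimating $\tr[\rho_\tau\calA_j]$ from the $x$ ancilla measurements via the Chernoff bound, choosing $x$ so this stays below $\delta/(m+1)$; (b) the eigenvalue-discrimination error by invoking Lemma~\ref{lemma:eigenvalue_disc_density} with $\delta'=\delta/(m+1)$, separating $\tr[\rho\,\Pi_{\lambda\ge b}]$ from $\tr[\rho\,\Pi_{\lambda\le a}]$; and (c) the TPQ substitution error, which contributes an additive $C_\xi e^{-\alpha n}$ per projector through Definition~\ref{def:TPQ_state} applied to the operator realised by the dashed box of Fig.~\ref{fig_projector}. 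Collecting these yields case~$(i)$: $\tr[(\tilde\rho\otimes\ket{0^y}\bra{0^y})\Pi_j]\ge 1-3\delta/(m+1)-C_\xi e^{-\alpha n}$, and case~$(ii)$: $\le 2\delta/(m+1)+C_\xi e^{-\alpha n}$, which I would feed into Lemma~\ref{lemma:fastOR_QET} as $\upsilon$ and $\phi$ to obtain the single-test acceptance probabilities $P_1$ (case $i$) and $P_2$ (case $ii$).

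With $P_1,P_2$ in hand I would amplify their gap: choosing small constants $\delta$ and $\zeta$ (e.g.\ $\delta=1/184$, $\zeta=1/32$) forces $P_1-P_2\ge\Omega(1)-\calO(mC_\xi e^{-\alpha n})$, after which repeating the OR test $K=\calO(\ln\tfrac{\log_2 m}{\tilde\delta})$ times and thresholding the acceptance count $\sum_i X_i$ at $K(P_1+P_2)/2$ by a final Chernoff bound certifies the correct branch with failure probability $\tilde\delta^{1-\calO(mC_\xi e^{-\alpha n})}/\log_2 m$. Running this decision inside a binary search over $\log_2 m$ nested subsets of the constraints then returns an explicit broken index in case~$(i)$, with the per-round failures accumulating to the claimed $\tilde\delta^{1-\calO(mC_\xi e^{-\alpha n})}$. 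The resource counts follow by multiplication: each controlled-$\Pi_j$ costs $\calO(\xi^{-2}\log\tfrac{m}{\delta})$ queries to $\sum_j U_{\sqrt{\calA_j}}\otimes\ket{j}\bra{j}$ and $\calO(\xi^{-3}(\log\tfrac{m}{\delta})^2)$ other gates by Eq.~\eqref{eq:Vi_complexity}; the OR lemma inserts a factor $\calO(\sqrt{m})$; the $K\times\log_2 m$ repetitions insert $\calO(\log m\,(\log\log m+\log\tfrac1{\tilde\delta}))$; and the $x$ copies account for the $\calO(\xi^{-2}\log m)$ ancillas.

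The main obstacle I anticipate is \emph{not} any single estimate but controlling the OR-lemma gap $P_1-P_2$ in the presence of the uncontrolled TPQ failure term. The union bound over all $m+1$ constraints multiplies $C_\xi e^{-\alpha n}$ by a factor $\calO(m)$, so one must verify that $P_1-P_2$ stays bounded away from zero and track exactly how this degrades the amplification; this is precisely what produces the unusual exponent $1-\calO(mC_\xi e^{-\alpha n})$ rather than a clean $\tilde\delta$. Making this rigorous without assuming precise control over $C_\xi e^{-\alpha n}$ is the delicate part, and is where I would concentrate the detailed argument (deferred to Appendices~\ref{app:copies},~\ref{app:projector_inequality}, and~\ref{app:amplify_OR_gap}). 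The affine rescaling factors relating the $2\xi$ promise margin on $A_j$ to the discrimination window on $\calA_j$ are routine but must be tracked consistently so that the $\Omega(1)$ gap survives the TPQ and sampling perturbations.
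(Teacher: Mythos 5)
Your proposal follows the paper's own proof essentially verbatim: the same rescaled matrices $\calA_j$, the same copy count $x=\tfrac{1}{\xi^2}\ln\tfrac{m+1}{\delta}$ with Chernoff concentration, the same projector inequalities fed as $\upsilon,\phi$ into the fast quantum OR lemma, the same constants for the gap $P_1-P_2$, the same $K=\calO(\ln\tfrac{\log_2 m}{\tilde\delta})$-fold amplification and binary search, and the same multiplicative resource accounting. You also correctly identify the genuinely delicate point—the uncontrolled $\calO(mC_\xi e^{-\alpha n})$ degradation of the OR-lemma gap that produces the exponent $1-\calO(mC_\xi e^{-\alpha n})$—which is exactly where the paper concentrates its appendix-level care.
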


%-------------------------------%

\subsection{Total complexity of the TPQ-SDP solver}
\label{sec:errors}

Putting all the protocols discussed so far together, we discuss the complexity of the entire quantum SDP protocol using TPQ states (Algorithm~\ref{alg:TPQ_SDP}).

The broken-constraint search assumes a probability gap between two cases as stated in Theorem~\ref{thm:broken_constraint}. This gap turns into another source of error. 
If the algorithm returns an index $j^*$, then we know with probability at least $1-\tilde{\delta}^{1-\calO(mC_\xi e^{-\alpha n})}$ that $\tr[\rho_\tau A_{j^*}]>b_{j^*}+\epsilon$. If the algorithm returns no index, then we know with the same confidence that $\tr[\rho_\tau A_j]<b_j+\epsilon+2\xi$ for all $j=0,\dots,m$. 
Therefore, if the broken-constraint-check primitive keeps returning an index until $T = \frac{8\ln N}{\epsilon^2}$ steps of the Zero-Sum algorithm, then we can conclude that $\mathcal{S}_{\epsilon} = \emptyset$ as defined in Eq.~\eqref{eq:set of satisfying matrices} with probability at least $1-T\tilde{\delta}^{1-\calO(mC_\xi e^{-\alpha n})}$.
On the other hand, if the primitive does not return index at step $\tau$ then we can conclude that $\rho_\tau\in\mathcal{S}_{\epsilon+2\xi}$ with probability at least $1-\tau\tilde{\delta}\geq 1-T\tilde{\delta}^{1-\calO(mC_\xi e^{-\alpha n})}$. 

The final step is to ensure the probability of any error in checking for broken constraints to be low enough such that over the $T\log\frac{1}{\epsilon} = \frac{8}{\epsilon^2}\ln N\log\frac{1}{\epsilon}$ broken constraint checks with high probability no failure occurs. 
Setting $\xi=\epsilon$, we arrive at the following theorem:

\vspace{2mm}
\begin{theorem}[Complexity of the TPQ-SDP solver]
\label{thm:total_complexity}
    Let $\cal{T}_{K^\tau}$ be the gate complexity of block-encoding of the shifted Hamiltonian $K_\tau$ and $\cal{T}_{\sqrt{\calA}}$ be the gate complexity of the operation $\sum_{j=0}^{m}U_{\sqrt{\mathcal{A}_j}}\otimes \ket{j}\bra{j}$. 
    Assume that the maximum  purity over all $\tau$ of the Gibbs state $\rho_\tau=e^{-\beta^\tau H^\tau}/\tr[e^{-\beta^\tau H^\tau}]$ decays exponentially with $n=\log{N}$ as $\tr[\rho_\tau^2]\propto e^{-\alpha n}$ for some constant $\alpha$. % and all $\tau$.
    Then, we can find the solution of the primal SDP within an error $4R\epsilon$ with probability at least $1-T\tilde{\delta}^{1-\calO(C_\epsilon me^{-\alpha n})}\log\frac{1}{\epsilon}$. 
    The algorithm operates with the gate complexity
    \begin{equation}
        \mathcal{O}\left(
        \left(
        \frac{\sqrt{N}}{\epsilon^{9/2}}\mathcal{T}_{K^\tau}
        +\sqrt{m}\left(
        \frac{\mathcal{T}_{\sqrt{\mathcal{A}}}}{\epsilon^4}+\frac{1}{\epsilon^5}
        \right)
        \right)
        \cdot
        \mathrm{polylog}\left(N, m, \frac{1}{\epsilon}, \frac{1}{\tilde{\delta}}\right)
        \right),
    \end{equation}
    and maximum circuit depth
    \begin{equation}
        \mathcal{O}\left(
        \left(\frac{\sqrt{N}}{\epsilon^{1/2}}\mathcal{T}_{K^\tau}+\sqrt{m}\left(\mathcal{T}_{\sqrt{\mathcal{A}}}+\frac{1}{\epsilon^3}\right)\right)
        \cdot
        \mathrm{polylog}\left(N, m, \frac{1}{\epsilon}, \frac{1}{\tilde{\delta}}\right)
        \right).
    \end{equation}
\end{theorem}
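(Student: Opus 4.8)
The plan is to assemble Theorem~\ref{thm:total_complexity} by carefully composing the three building blocks already established---the per-iteration TPQ-state preparation (Theorem~\ref{thm:tpq_complexity}), the broken-constraint check (Theorem~\ref{thm:broken_constraint}), and the iteration/binary-search structure of the Zero-Sum algorithm (Algorithm~\ref{alg:TPQ_SDP})---and then tracking how the errors and gate counts multiply through the nested loops. First I would fix the bookkeeping: the outer binary search over the guess $a_0$ contributes a factor $\log(1/\epsilon)$, and within each guess the MMW loop runs for at most $T=\lceil\frac{8\ln N}{\epsilon^2}\rceil$ iterations. At each of these $T\log(1/\epsilon)$ steps we must (a) prepare the TPQ state and (b) run the broken-constraint check, which itself consumes $x=\calO(\xi^{-2}\log(m/\delta))$ copies of the TPQ state. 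The central accounting identity is therefore that the gate complexity equals (number of steps) $\times$ (copies per step) $\times$ (TPQ-preparation cost) plus (number of steps) $\times$ (broken-constraint-check cost), with the substitution $\xi=\epsilon$ made at the end.

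Next I would substitute the explicit costs. From Theorem~\ref{thm:tpq_complexity} each TPQ preparation costs $\tilde{\calO}(\sqrt{N}/\xi^{1/2})$ queries to $U_{K^\tau}$; multiplying by the $x\sim\xi^{-2}$ copies and the $T\sim\epsilon^{-2}$ iterations and $\log(1/\epsilon)$ binary-search rounds, and setting $\xi=\epsilon$, produces the $\sqrt{N}\,\epsilon^{-9/2}\,\mathcal{T}_{K^\tau}$ term after collecting the powers $\epsilon^{-1/2}\cdot\epsilon^{-2}\cdot\epsilon^{-2}$. For the constraint-check contribution I would take the query counts from Theorem~\ref{thm:broken_constraint}, namely $\tilde{\calO}(\xi^{-2}\sqrt{m})$ uses of the controlled block-encoding operator $\sum_j U_{\sqrt{\calA_j}}\otimes\ket{j}\bra{j}$ and $\tilde{\calO}(\xi^{-3}\sqrt{m})$ other gates, multiply again by the $T\log(1/\epsilon)$ factor, and set $\xi=\epsilon$ to obtain the $\sqrt{m}(\mathcal{T}_{\sqrt{\calA}}\epsilon^{-4}+\epsilon^{-5})$ term. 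All polylogarithmic dependence on $N$, $m$, $1/\epsilon$, and $1/\tilde{\delta}$---including the $\ln N$ inside $T$, the $\log m$ factors from binary search, and the amplification factor $K$---is absorbed into the stated $\mathrm{polylog}$ envelope. The circuit-depth bound follows from the same substitutions but dropping the $x$ and $T$ multiplicities that arise from sequential repetitions rather than circuit length, leaving only the per-circuit depth $\sqrt{N}\epsilon^{-1/2}\mathcal{T}_{K^\tau}$ and $\sqrt{m}(\mathcal{T}_{\sqrt{\calA}}+\epsilon^{-3})$.

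The success-probability and error-propagation argument I would carry out in parallel with the counting. Each broken-constraint check fails with probability at most $\tilde{\delta}^{1-\calO(mC_\epsilon e^{-\alpha n})}$ (Theorem~\ref{thm:broken_constraint}); a union bound over all $T\log(1/\epsilon)$ checks yields the overall success probability $1-T\tilde{\delta}^{1-\calO(C_\epsilon m e^{-\alpha n})}\log(1/\epsilon)$ quoted in the statement. The factor of the exponent $1-\calO(mC_\epsilon e^{-\alpha n})$ is precisely where the TPQ approximation error $C_\xi e^{-\alpha n}$ enters---here the assumption that the purity decays as $e^{-\alpha n}$ is essential, since it keeps this correction subdominant for large $n$. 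The accuracy claim comes from the gap promise: a successful run certifies $\rho_\tau\in\mathcal{S}_{\epsilon+2\xi}$, so with $\xi=\epsilon$ the rescaled SDP is solved to tolerance $\calO(\epsilon)$, and undoing the rescaling by $R$ gives the stated $4R\epsilon$ error on the original primal SDP.

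The main obstacle I anticipate is not any single inequality but the correct propagation of the TPQ error $C_\xi e^{-\alpha n}$ through the compounded loops without letting it blow up. Because the broken-constraint check consumes $x\sim\xi^{-2}$ copies of the \emph{same} TPQ state, the per-copy failure contributions must be combined carefully---the analysis in Appendix~\ref{app:amplify_OR_gap} shows the failure exponent degrades to $1-\calO(mC_\xi e^{-\alpha n})$ rather than staying at $1$, and one must verify that $P_1-P_2\ge\Omega(1)$ survives after choosing $\delta$ and $\zeta$ small. Ensuring that this gap remains bounded away from zero simultaneously across all $T\log(1/\epsilon)$ checks, while the per-check error picks up the $m$-dependent amplification, is the delicate bookkeeping step; the raw gate-count multiplication is comparatively mechanical once the $\xi=\epsilon$ substitution and the powers of $\epsilon$ are tallied.
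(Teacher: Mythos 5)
Your proposal is correct and takes essentially the same route as the paper: the paper's own argument in Section~\ref{sec:errors} is precisely this composition of Theorem~\ref{thm:tpq_complexity} (TPQ preparation), Theorem~\ref{thm:broken_constraint} (broken-constraint check with gap promise), a union bound over the $T\log\frac{1}{\epsilon}$ constraint checks, and the substitution $\xi=\epsilon$, with the same tally of $\epsilon$-powers for both the gate count and the success probability. The only minor looseness is in your depth argument---the $x\sim\epsilon^{-2}$ copies enter as a parallel tensor product (circuit width) rather than a sequential repetition, and the $\epsilon^{-3}$ depth term comes from the eigenvalue discrimination on the majority-vote operator---but this does not affect the conclusion.
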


%%%%%%%%%%%%%%%%%%%%%%%%%%%%%%%%%%%%%%%%%%%%%%%%%%%%%%%%%%%%%%%%%%%%%%
 
\section{Application of the TPQ-SDP solver: Hamiltonian learning}
\label{sec:numerics}

We verify the efficiency of our TPQ-SDP algorithm by applying it to the Hamiltonian learning problem. In Hamiltonian learning, one attempts to infer a Hamiltonian from experiments on some physical system~\cite{Wiebe_2014, Kokail2021, dutkiewicz2023advantage}. 
Specifically, we consider the setting in which the target system is assumed to be in some thermal equilibrium state $\rho_{\mathrm{target}}=e^{-\beta H_{\theta_{\mathrm{target}}}}/Z$, where $H_{\theta_\mathrm{target}} = \sum_{j=1}^m \theta_{ \mathrm{target},j} O_j$. We have access to expectation values $\tr[\rho_{\mathrm{target}}O_j]$ for $j=1,\cdots, m$~\cite{Anshu_2021, Haah_2022} and are tasked with learning the Hamiltonian $H_\theta=\sum_{j=1}^{m} \theta_j O_j$ such that $\lVert\theta-\theta_{\mathrm{target}}\rVert\leq \epsilon'$ with $\theta:=(\theta_1,\dots,\theta_m)^T$ and $\theta_\text{target}:=(\theta_{\text{target},1},\dots,\theta_{\text{target},m})^T$. Here $\lVert\cdot\rVert$ could be any vector norm, but is typically chosen to be the L2 or L-$\infty$ norm.

This problem arises in the context of the verification and certification of quantum technologies, for example, when one wishes to verify that one has implemented a particular Hamiltonian, or Gibbs state, correctly on a quantum device~\cite{Wang_2017}. This is generally hard due to the exponentially large Hilbert space. In addition, it addresses the following fundamental questions: 1) can we validate our theoretical models of physical systems within a reasonable amount of time, and 2) how much can we ever learn about nature?

We can massage the Hamiltonian learning problem into our SDP framework of Eq.~\eqref{eq:sdpconstr}. Setting the target expectation values to $b_j= \tr[\rho_{\mathrm{target}}O_j]$, we can define the following feasibility problem \begin{equation}
\label{eq:hamlearn}
\begin{aligned}
    b_j-\epsilon \leq \tr [O_j \rho]\leq b_j+\epsilon,
    & \quad
    \forall j = 1,\dots,m
    \\
    \rho \succeq 0,
    & \quad \tr{\rho}=1.
\end{aligned}
\end{equation} 
This can be solved directly with the quantized MMW method (Algorithm~\ref{alg:TPQ_SDP}). Note that, since we do not have an objective function $L$, we do not need to use binary search. 

An important parameter in the Hamiltonian learning feasibility problem is the error~$\epsilon$. In general, knowing that the expectation values are within the error $\epsilon$ does not imply that the parameters are close to the target as well, i.e., $\lVert\theta-\theta_{\mathrm{target}}\rVert\leq \chi\neq \epsilon$. Therefore, in order to solve the Hamiltonian learning problem within a certain error $\chi$, one needs to have a relation between the difference $\epsilon$ in expectation values and the difference $\chi$ in parameters.
This problem has been extensively studied, and for a specific family of geometrically local target Hamiltonians such a relation has been derived by showing that the relative entropy is strongly convex~\cite{Anshu_2021, Haah_2022}. 
The target Hamiltonians we use in our numerical experiments in Section~\ref{sec:numerics} satisfy these assumptions, and we can relate $\epsilon$ to $\chi$. For more general cases, one can aim to make $\epsilon$ small such that the problem can still be solved in a reasonable amount of time. In particular, for $\epsilon\to 0$ one can recover the exact parameters of any target Hamiltonian. 

We solve the feasibility problem via classical state-vector simulations of our TPQ-SDP solver. To this end, at step $\tau$, we create a TPQ state $\ket{\tilde{\psi}_\tau}$ for a Hamiltonian $H^\tau$ [Eq.~\eqref{eq:TPQ}] by approximating the matrix exponential $e^{-\beta^\tau H^\tau}$ with a Lanczos diagonalization of $H^\tau$~\cite{Jakli1994, trefethen1997numerical}, and applying it to a random Clifford state $U_{\mathrm{Cliff}}\ket{0}$. We compute the expectation values $\bra{\tilde{\psi}_\tau}O_j\ket{\tilde{\psi}_\tau}$ and average over a few random Clifford states to ensure that with high probability we are close to the exact Gibbs state expectation values (Definition~\ref{def:TPQ_state}). We then check the constraints and update the parameter vector $\theta^\tau$, following our quantized MMW Algorithm~\ref{alg:TPQ_SDP}. This process is repeated until a feasible Gibbs state $\rho$ is found. Due to the setup of the problem, we will always find a feasible Gibbs state.
The classical simulation complexity of this method is lower bounded by $\tilde{\mathcal{O}}(N+m)$ for sparse input Hamiltonians $H^\tau$, which matches the theoretical lower bound for solving generic SDPs classically\footnote{Note that this bound is for generic sparse Hamiltonians. One may improve this bound greatly if e.g. the terms in the Hamiltonian have low support~\cite{Haah_2022,bakshi2023learning}.}. These simulations are sufficient to validate our approach. In particular, it allows us to show that TPQ states can be used to provide approximate solutions for SDPs. In practice, one would run the TPQ-SDP solver on a quantum device.
The data shown in Fig.~\ref{fig:Numdata} is obtained by calculating the true Gibbs state expectation values using exact diagonalization, but all updates to $H^\tau$ were calculated using simulations of the TPQ states.

In the first numerical experiment, we aim to learn the two-dimensional (2D) spinless Hubbard model, which is extensively studied in condensed matter physics~\cite{Hubbard1963, Arovas_2022}. 
This model is described by the Hamiltonian,
\begin{equation}
\label{eq:spinlesshubbard}
    H_{\mathrm{Hubbard}} 
    = \sum_{i=1}^{n_x\times n_y}\mu 
    \Big(c^\dagger_ic_i - \frac{1}{2}\Big) 
    + \sum_{\langle i,j\rangle} 
    \omega(c^\dagger_i c_j - c_ic_j^\dagger) 
    + U\Big(c^\dagger_ic_i-\frac{1}{2}\Big)\Big(c^\dagger_jc_j-\frac{1}{2}\Big),
\end{equation} 
where $\langle i,j\rangle$ is the set of nearest neighbour indices on a 2D square lattice of size $(n_x\times n_y)$ with open boundary conditions. The spinless Fermionic creation and annihilation operators, $c^\dagger_i$ and $c_i$, satisfy $\{c_i^\dagger, c_j\}=\delta_{ij}$ and $\{c_i, c_j\}=0$, and each term in the sums is orthogonal to each other. Note that mutual orthogonality is not necessary for the MMW method to succeed, but reduces $\lVert\theta-\theta_{\mathrm{target}}\rVert$ for a given error in the expectation values~\cite{Anshu_2021}.
The parameter $\mu$ is the chemical potential that describes the density of fermions, $w$ the nearest-neighbour hopping strength, and $U$ the nearest-neighbour density-density interaction strength. In our study we set these parameters to $\mu=1.0$, $\omega=0.5$, and $U=1.2$. We define the target state $\rho_{\mathrm{target}}=e^{-\beta H_{\mathrm{Hubbard}}}/Z$ with $\beta=0.4$ to ensure the purity $\tr[\rho_\mathrm{target}^2]$ is small enough that any deviations in the TPQ state expectation value from the true Gibbs state expectation value are with high probability below our desired target error $\epsilon=0.05$.

\begin{figure}[h!]
\centering
 \includegraphics[width=0.95\linewidth]{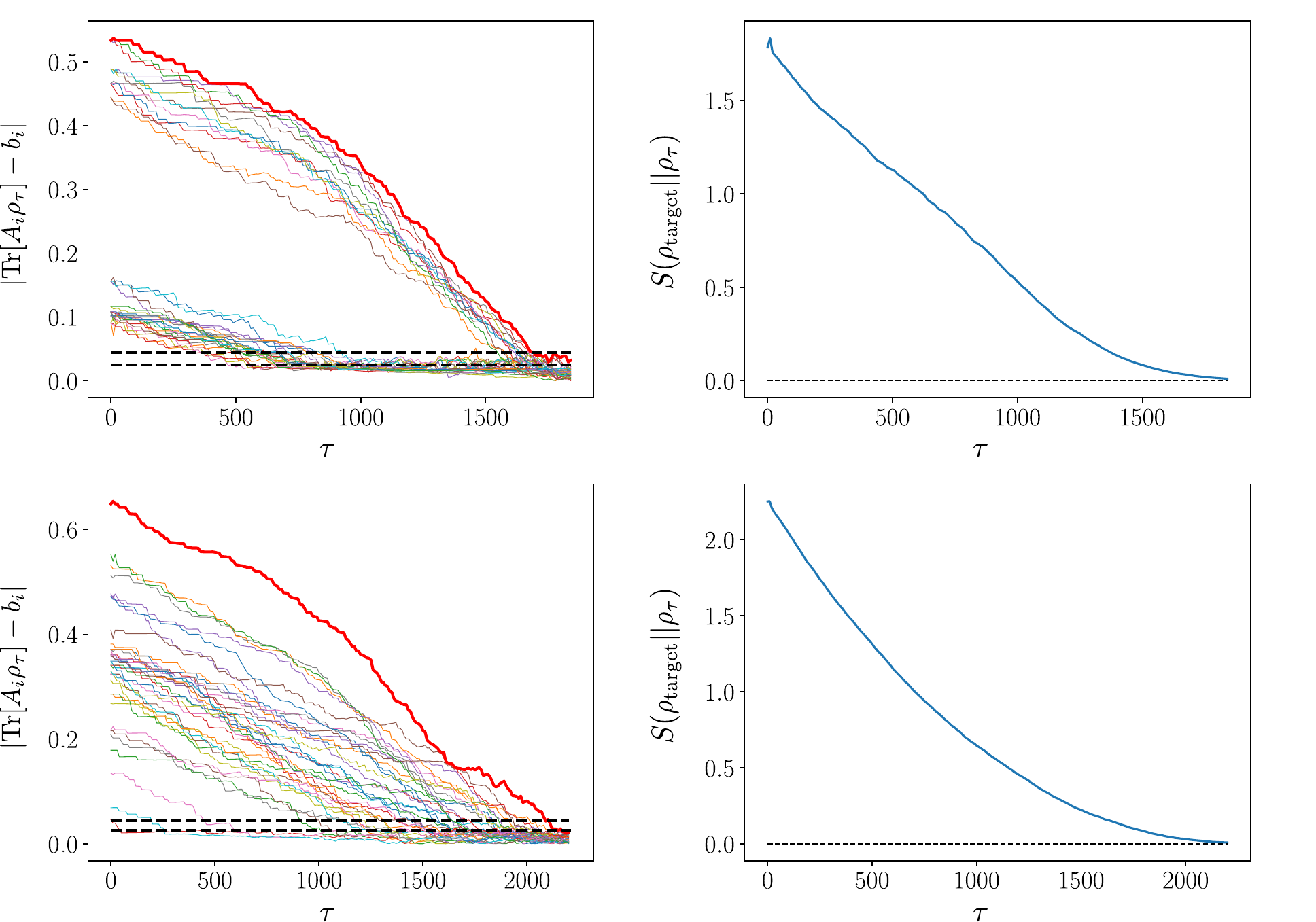}
 \caption{Numerical simulation results of the TPQ-SDP solver applied to the Hamiltonian learning problem. In the top row the target Hamiltonian is the 2D Hubbard model [Eq.~\eqref{eq:spinlesshubbard}] and in the bottom row the 1D XXZ Heisenberg model [Eq.~\eqref{eq:xxz}]. The left panels show the errors in the expectation values as a function the Zero-Sum game update step $\tau$. The maximum error is highlighted in red, and the SDP optimization is stopped once the target SDP accuracy of $\epsilon=0.05$ is reached (bottom dashed black lines). The other dashed line, shows the SDP plus the maximum TPQ error, which is slightly larger. The right plots show the relative entropy $S(\rho_{\mathrm{target}}\Vert\rho_\tau)$ between the target Gibbs state and the Gibbs state of the learned model Hamiltonian. For these simulations we used a system of $n=10$ qubits, and the TPQ state expectation values were computed from a median of means of 3 subsets of 25 TPQ states. At each update step $\tau$ the $m$ constraints were checked in a random order.} 
\label{fig:Numdata}
 \end{figure}
 
In the top row of Fig.~\ref{fig:Numdata}, we show the results of our TPQ-SDP solver for the Hubbard model target Hamiltonian of $n=5\times2=10$ lattice sites. In the left panel, we observe that the difference in the expectation values between the model $\tr[A_j\rho]$ and target $b_j$ gradually decreases during the optimization. After $\tau\approx1700$ iterations the target SDP accuracy of $\epsilon=0.05$ is reached. This demonstrates that our method is able to solve the Hamiltonian learning feasibility problem up to the TPQ error. In the right panel, we see that simultaneously the relative entropy $S(\rho_{\mathrm{target}}\Vert\rho)$ between the target and model states decreases monotonically to zero. This shows that the MMW method can be seen as a relative entropy minimization if there exists a state which satisfies all the constraints. 

At the end of the optimization, the mean-squared error in the $m$ parameters is $\frac{1}{m}\lVert\theta_{\mathrm{target}}-\theta\rVert^2_2 = 0.020$. The maximum multiplicative error in the learned parameters is $0.21$, which is an order of magnitude smaller than the size of the parameters $\mathcal{O}(10^0)$. As discussed above, this can be further improved by choosing a smaller $\epsilon$ for the SDP target accuracy, which incurs a higher computational cost due to a higher number of SDP iterations. Note, however, that we can not set $\epsilon$ to an arbitrary small value because we require $1-\mathcal{O}(\tr[\rho_{\mathrm{target}}^2]/\epsilon^2)>0$ for a positive success probability (recall Theorem~\ref{thm:total_complexity}). 
For geometrically local models, like the Hubbard model, this is not a problem since there exists a bound $\lVert\theta_{\mathrm{target}}-\theta^T\rVert\leq e^{-\mathcal{O}(\beta)}\mathrm{poly}(n)\epsilon$, which scales as a polynomial of the number of qubits, $n$~\cite{Anshu_2021}. This means that if we set $\epsilon \leq e^{-\beta}/\mathrm{poly(n)}$, we can find the parameters to arbitrary precision when $n\to\infty$. At the same time, $\mathcal{O}(\tr[\rho_{\mathrm{target}}^2]/\epsilon^2)$ decreases exponentially with $n$ for an exponentially vanishing purity of the target, so the success probability of our algorithm remains finite and positive. 

In the next numerical experiment, we focus on a one-dimensional Heisenberg XXZ model~\cite{Heisenberg1928, Franchini_2017} in a disordered external magnetic field, which has Hamiltonian \begin{equation}\label{eq:xxz}
    H_{\mathrm{XXZ}} =\sum_{i=1}^{n-1}J\left(\sigma^{x}_i\sigma^{x}_{i+1}+\sigma^y_i\sigma^y_{i+1}\right)+\Delta \sigma_i^z\sigma^z_{i+1} + \sum_{i=1}^{n}h^z_i \sigma^z_i.
\end{equation} 
Here $J$ is the $XY$ nearest-neighbour coupling strength, $\Delta$ the $Z$ coupling strength, and $h^z_i$ a random local magnetic field in the $z$ direction. 
Due to the disordered fields $\{h_i^z\}$ there are a few more parameters that we need to learn to a high precision. As target Hamiltonian we use $J=1.0$, $\Delta=0.5$, and $h_i^z$ are drawn from independent uniform distributions over~$[-2,2]$. We set the temperature to $\beta=0.4$ and define the target state $\rho_{\mathrm{target}}=e^{-\beta H_{\mathrm{XXZ}}}/\tr{}[e^{-\beta H_{\mathrm{XXZ}}}]$. 

In the bottom row of Fig.~\ref{fig:Numdata}, we show the performance of our TPQ-SDP solver for the XXZ target Hamiltonian. Similarly to the Hubbard model, both the error in the expectation values and the relative entropy decrease during the optimization process. This time, more iterations are required before the target accuracy $\epsilon=0.05$ is reached, which is after $\tau\approx 2200$ steps.
The mean-squared error in the Hamiltonian parameters is $\frac{1}{m}\lVert\theta_{\mathrm{target}}-\theta\rVert^2_2 = 0.015$ when the optimization completes.

%%%%%%%%%%%%%%%%%%%%%%%%%%%%%%%%%%%%%%%%%%%%%%%%%%%%%%%%%%%%%%%%%%%%%%
 
\section{Discussion and Conclusion}
\label{sec:conclusion}

In this work, we provide a novel quantum algorithm for solving primal SDPs of $N\times N$ variables and $m$ constraints up to error $\varepsilon$. The algorithm is based on a quantization of the MMW algorithm and uses TPQ states to estimate Gibbs-state expectation values. We proved that the algorithm requires a maximum of  
\begin{equation}
\label{eq:finalcomplex}
        \mathcal{O}\left(
        \left(
        \sqrt{N}\left(\frac{R}{\varepsilon}\right)^{9/2}\mathcal{T}_{K_\tau}
        +\sqrt{m}\left(
        \mathcal{T}_{\sqrt{\mathcal{A}}}\left(\frac{R}{\varepsilon}\right)^4+\left(\frac{R}{\varepsilon}\right)^5
        \right)
        \right)
        \cdot
        \mathrm{polylog}\left(N, m, \frac{R}{\varepsilon}, \frac{1}{\tilde{\delta}}\right)
        \right),
\end{equation}
quantum gates, and succeeds with probability $1-T\tilde{\delta}^{1-\calO(mC_{\varepsilon/R} e^{-\alpha n})}\log\frac{R}{\varepsilon}$. Here $R$ is an upper bound on the trace of the optimal value of the SDP, $\mathcal{T}_{K^\tau}$ is the gate complexity of block encoding the shifted Hamiltonian $K^\tau$, and $\mathcal{T}_{\sqrt{\mathcal{A}}}$ the gate complexity of block encoding the square root of the shifted constraint matrices $\mathcal{A}$. The classical simulation time of the algorithm is lower bounded by $\tilde{\mathcal{O}}(N+m)$. 

Compared to the best previous quantum SDP solver in Ref.~\cite{vanapeldoorn2019}, which has gate complexity 
\begin{equation}
     \mathcal{O}\left(\left(\left(\sqrt{m}+\sqrt{N}\frac{R}{\varepsilon}\right)s\left(\frac{R}{\varepsilon}\right)^4\right)\cdot\mathrm{polylog}\left(N, m, \frac{R}{\varepsilon}, \frac{1}{\tilde{\delta}}\right)\right),
 \end{equation} 
where $s$ is an upper bound on the row-sparsity of the input Hamiltonian. Our TPQ-SDP solver reduces the power of $\varepsilon$ multiplying $\sqrt{N}$ by a factor $\varepsilon^{-1/2}$. This is due to our assumption of being able to block encode $K^\tau$. If such a block encoding is unavailable, we get an additional factor $\varepsilon^{-1}$ multiplying $\sqrt{N}$. Note that the power of $\varepsilon$ multiplying the $\sqrt{m}$ in our algorithm is worse, but this is because we provide an explicit circuit implementation of the quantum OR lemma which van Apeldoorn et. al. neglect. To our knowledge, this extra factor of $\varepsilon^{-1}$ is necessary when constructing the projectors. Furthermore, our algorithm does not require purifying the Gibbs states in the MMW algorithm, and hence reduces the required number of qubits. For SDPs like the Hamiltonian learning problem this results in an exponential saving in the number of ancillary qubits. This makes our algorithm more useful when one only has access to quantum devices with a limited number of qubits. 

An important aspect of our algorithm is that TPQ states only approximate Gibbs state expectation values up to a finite error, with a probability that depends on the purity of the Gibbs state. For this reason, the success probability of the algorithm depends on $\tilde{\delta}^{1-\mathcal{O}(m C_{\varepsilon}e^{-\alpha \log{N}})}$, where $C_{\varepsilon}e^{-\alpha \log{N}}$ is an upper-bound on the purity of the Gibbs states. In other words, we assume the purity vanishes when the size of the SDP is increased. We argue that this is true for physical Hamiltonians with extensive free energy. In addition, we prove that a specific spectral condition (high density of low energy states, Proposition~\ref{prop:sdp_pur_cond}) guarantees a vanishing purity, and show that random matrices from the Generalized Unitary Ensemble satisfy it. We provide numerical evidence that this condition holds for the one-dimensional Heisenberg XXZ model and the two-dimensional Hubbard model. In future work, it would be interesting to investigate this condition further and see to which type of Hamiltonians our TPQ-SDP solver can be applied.   

Investigating the spectral condition is also interesting with respect to the prospect of proving a potential larger quantum speedup of our algorithm. We show that when the spectral condition is satisfied, the number of rounds of amplitude amplification for the TPQ state preparation scales with $N^{1/4}$ instead of $N^{1/2}$. This shifts the computational bottleneck to finding the ground state energy, which only needs to be performed once per step in the MMW algorithm. Therefore, we see an $\calO(\left(\frac{\varepsilon}{R}\right)^2)$ saving in the asymptotic block-encoding querey complexity. Further, if one can find the ground state energy to high enough accuracy in a time that scales at maximum with $N^{1/4}$, this may imply a quartic speedup for solving SDPs with Hamiltonians that satisfy the condition. For example, in~\cite{Chen2023} the authors argue that ground state energies of sparse random quantum Hamiltonians can be found in $\text{Poly}(n)$ depth.

In order to make this rigorous, one needs to prove a classical lower bound on the complexity in the presence of the condition, and one needs to investigate if the condition can reduce the ground energy estimation complexity. Note that our result is not specific to our TPQ-state approach but also holds for quantum SDP algorithms based on purified Gibbs states, although TPQ states are special in that they require a high density of low energy states to give a good approximation to the Gibbs state expectation values, whereas other Gibbs samplers perform better if the Hamiltonian is gapped i.e. has a low density of low energy states, for example the Gibbs sampler used in~\cite{vanApeldoorn2020quantumsdpsolvers}.

\begin{table}
\begin{center}
\def\arraystretch{1.5}
\begin{tabular}{|c||c|c|c|c |} 
 \hline
 \multirow{2}{*}{\makecell{\textbf{System size} \\ $n_x\times n_y$}} 
 & \multicolumn{2}{c|}{\textbf{TPQ State preparation}}
 & \multicolumn{2}{c|}{\qquad\textbf{Proof-of-concept} \enspace\enspace}
 \\
 \cline{2-5}
 & Toffoli gates & qubits & Toffoli gates & qubits
 \\
 \hline\hline   
 $2\times 2$   
 & $3.28\times10^{7}$ 
 & 18 
 & $1.61\times10^{6}$ 
 & 13 
 \\
 \hline 
 $4\times 4$     
 & $4.62\times10^{10}$
 & 46
 & $1.22\times10^{8}$
 & 29
 \\
 \hline
 $6\times 6$   
 & $3.18\times10^{14}$
 & 90
 & $1.45\times10^{9}$
 & 53
 \\[.5ex] 
 \hline
\end{tabular}
\caption{Resource requirements (Toffoli gates and qubits) of the TPQ state preparation routine for various instances of the Hamiltonian learning problem with $\epsilon=0.05$. The target Hamiltonian is the spinless Hubbard model on a squared lattice of size $n_x\times n_y$. The second column shows the resources for preparation of TPQ states including amplitude amplification to boost the success probability. Asymptotically this has a quadratic speedup with respect to the size of the SDP $N$. The right column shows the resource requirements for a proof-of-concept preparation of a TPQ state without amplitude amplification. This has no theoretical quantum advantage since it requires an unknown amount of post-selection.}
 \label{tbl:resources}
\end{center}
\end{table}

Such a speedup would also reduce the quantum resource requirements for the application of our TPQ-SDP solver to practically relevant problems. In Table~\ref{tbl:resources}, we provide the requirements (number of qubits, and the maximum number of gates in one circuit) of the TPQ state preparation routine of our algorithm applied to learning a two-dimensional Hubbard model of several different system sizes. Instead of checking the constraints on the quantum computer, one could measure the expectation values of these TPQ states with for example classical shadow tomography~\cite{Huang2020, Coopmans_2023}, and check the $m$ constraints on a classical device. This approach has a quadratic speedup with respect to $N$. In the second column of the table, we observe that we require order $10^{14}$ Toffoli gates for a system of $n=36$ qubits. If one could place a tighter bound on the partition function, e.g. by showing the spectral condition applies to this model, this could potentially be reduced to order $10^{11}$ Toffoli gates, which is three orders of magnitude smaller. Moreover, one could reduce the gate count further by trading depth for additional ancilla qubits. For example, by exploiting a different block-encoding method. 

In the fourth and fifth column of Table~\ref{tbl:resources}, we show the hardware requirements if one only cares about a proof-of-concept demonstration, without a theoretically guaranteed speedup. For this, one could prepare the TPQ state by post-selecting on the ancilla qubits being ``00\dots 0'' without amplitude amplification. In this case we require order $10^9$ Toffoli gates for a system of 36 qubits, which is far out of reach for the capabilities of the current generation of quantum devices. However, note that our bounds are derived for generic cases and do not take into account specifics of the Hubbard model. Circuit compilation techniques, and using the spectral properties of the model may significantly reduce the resource requirements. Studying this, and also investigating larger speedups stemming from the spectral condition, is particularly important in light of demonstrating quantum speedups in practice~\cite{Campbell_2019, Babbush_2021}. 

A last important avenue for future work is investigating which types of SDPs our TPQ-SDP solver, and more generally other quantum SDP solvers, can provide a genuine quantum speedup. Our work shows that Hamiltonian learning of quantum many-body Hamiltonians may provide such an example. Reference~\cite{vanapeldoorn2019} argues that shadow tomography, quantum state discrimination, and E-optimal design are other example problems that can be addressed with quantum SDP algorithms. Most of these algorithms have in common that the bound on the optimal solution of the SDP, $R$, does not scale with the problem size $N$, and the rescaling of the SDP does not form a problem. For combinatorial optimization problems, such as the maximum cut problem~\cite{Goemans1995}, this is not the case, and one does not obtain a genuine quantum speedup with most quantum SDP algorithms without access to a QRAM~\cite{Augustino2023}. Finding solutions to this problem or finding other useful examples, is therefore critical for the future success of quantized SDP solvers. 

%%%%%%%%%%%%%%%%%%%%%%%%%%%%%%%%%%%%%%%%%%%%%%%%%%%%%%%%%%%%%%%%%%%%%%

\section*{Acknowledgments}

We thank David Amaro, Eric Brunner, Jingjing Cui, Samuel Duffield, Mattia Fiorentini, Alexandre Krajenbrink, Enrico Rinaldi, and Matthias Rosenkranz for stimulating discussions. We thank Marcello Benedetti, Steven Herbert and Tuomas Laakkonen for feedback on an earlier version of this manuscript. 
%%%%%%%%%%%%%%%%%%%%%%%%%%%%%%%%%%%%%%%%%%%%%%%%%%%%%%%%%%%%%%%%%%%%%%
\appendix

\newpage
\appendix

\renewcommand\thefigure{\thesection \arabic{figure}}
\renewcommand\thetable{\thesection \arabic{table}} 
\begin{center}
	\noindent\textbf{Appendix}\\
	\bigskip
	\noindent\textbf{\large{Quantum Semidefinite Programming with Thermal Pure Quantum States}}
\end{center}
\etocdepthtag.toc{mtappendix}
\etocsettagdepth{mtchapter}{none}
\etocsettagdepth{mtappendix}{subsection}
\tableofcontents
\newpage

\section{Approximate Jaynes Principle}
\label{app:Jaynes}
In this appendix, we review the proof of the approximate Jaynes principle~\cite{Jaynes1957}, from which the convergence of the Zero-Sum approach outlined in the main text follows.

The approximate Jaynes principle, see Lemma 4.6 in Ref.~\cite{Lee2015}, states the following.
\begin{theorem}[Restatement of Lemma 4.6 in Ref.~\cite{Lee2015}]
\label{thm:approx_jaynes}
For every $\epsilon>0$, the following holds. Let $\mathcal{T}$ be a finite set of Hermitian matrices labelled $\{O_j\}$, $i = 1,\dots,m$,\footnote{While a finite set of Hermitian matrices suffice for our quantum SDP algorithm, the statement also holds for any compact set of Hermitian matrices.} and let $\eta$, $\rho_0\in\mathcal{D(H)}$ be density matrices. There exists $O_{j(1)},O_{j(2)},\dots,O_{j(T)}\in\mathcal{T}$ such that
\begin{equation}
\label{form}
    \tilde{\rho} 
    := \frac{\exp\left(\ln \rho_0-\frac{\epsilon}{4\Delta(\mathcal{T})^2}\sum_{t=1}^{T}O_{j(t)}\right)}{\tr\left[\exp\left(\ln \rho_0-\frac{\epsilon}{4\Delta(\mathcal{T})^2}\sum_{t=1}^{T}O_{j(t)}\right)\right]}
    \in\mathcal{D(H)}
\end{equation}
satisfies\footnote{Equation~\eqref{eq:approx_expectations} is corrected from~\cite{Lee2015} which had a sign error.}
\begin{equation}
\label{eq:approx_expectations}
    \tr\left[O(\tilde{\rho}-\eta)\right]
    \leq\epsilon,\quad\forall O\in\mathcal{T},
\end{equation}
where $T\leq\lceil\frac{8}{\epsilon^2}S(\eta\Vert\rho_0)\Delta(\mathcal{T})^2\rceil$, $\Delta(\mathcal{T}) = \max_{O\in\mathcal{T}}\lVert O\rVert$, and the relative entropy between $\rho$ and $\rho_0$ is defined by 
\begin{equation}
    S(\eta\Vert\rho_0):=\tr[\eta(\ln\eta-\ln\rho_0)].
\end{equation} 
\end{theorem}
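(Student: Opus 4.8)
The plan is to prove the statement by the standard matrix-multiplicative-weights regret argument, using the relative entropy $S(\eta\Vert\,\cdot\,)$ as a potential function. First I would fix $\gamma := \frac{\epsilon}{4\Delta(\mathcal{T})^2}$ and set up a greedy process: starting from $\tilde{\rho}_0 = \rho_0$, at each step $\tau$ I ask whether there is some $O\in\mathcal{T}$ with $\tr[O(\tilde{\rho}_\tau-\eta)]>\epsilon$. If such an $O$ exists I append it as $O_{j(\tau+1)}$ and form the updated weight matrix $W_{\tau+1} = \exp\!\big(\ln\rho_0 - \gamma\sum_{t=1}^{\tau+1}O_{j(t)}\big)$ with $\tilde{\rho}_{\tau+1} = W_{\tau+1}/\tr[W_{\tau+1}]$; if no such $O$ exists, the process halts and the current state already satisfies Eq.~\eqref{eq:approx_expectations} by construction. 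The entire content is then to show that this process must halt after at most $\lceil\frac{8}{\epsilon^2}S(\eta\Vert\rho_0)\Delta(\mathcal{T})^2\rceil$ steps.

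Second, I would track the potential $S(\eta\Vert\tilde{\rho}_\tau)$. Since $\ln\tilde{\rho}_\tau = \ln\rho_0 - \gamma\sum_t O_{j(t)} - \ln\tr[W_\tau]$ and $\tr[\eta]=1$, the potential decomposes cleanly as
\[
    S(\eta\Vert\tilde{\rho}_\tau) = S(\eta\Vert\rho_0) + \gamma\sum_{t}\tr[\eta\, O_{j(t)}] + \ln\tr[W_\tau].
\]
The crux is an upper bound on $\ln\tr[W_\tau]$. Here I would apply the Golden--Thompson inequality, $\tr[W_\tau] = \tr\big[e^{\ln W_{\tau-1}-\gamma O_{j(\tau)}}\big] \le \tr\big[W_{\tau-1}\,e^{-\gamma O_{j(\tau)}}\big]$, together with the operator inequality $e^{-\gamma O}\preceq I - \gamma O + \gamma^2 O^2$, obtained by lifting the scalar bound $e^{y}\le 1+y+y^2$ (valid for $y\le 1$, applicable since $\lVert\gamma O_{j(\tau)}\rVert \le \gamma\Delta(\mathcal{T}) = \frac{\epsilon}{4\Delta(\mathcal{T})}\le 1$ in the regime where the principle is used, e.g.\ $\Delta(\mathcal{T})=1$ and $\epsilon<1$) through the spectral calculus. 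Using $\tr[\tilde{\rho}_{\tau-1}O^2]\le\lVert O\rVert^2\le\Delta(\mathcal{T})^2$ and $1+x\le e^x$, the one-step increments telescope to $\ln\tr[W_\tau] \le -\gamma\sum_{t}\tr[\tilde{\rho}_{t-1}O_{j(t)}] + \tau\gamma^2\Delta(\mathcal{T})^2$.

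Third, I would assemble these into a guaranteed per-step decrease. Substituting the bound on $\ln\tr[W_\tau]$ into the potential identity gives
\[
    S(\eta\Vert\tilde{\rho}_\tau) \le S(\eta\Vert\rho_0) - \gamma\sum_{t=1}^{\tau}\tr\big[(\tilde{\rho}_{t-1}-\eta)\,O_{j(t)}\big] + \tau\gamma^2\Delta(\mathcal{T})^2.
\]
Because each selected $O_{j(t)}$ was a violated constraint, every summand exceeds $\epsilon$, so with $\gamma=\frac{\epsilon}{4\Delta(\mathcal{T})^2}$ the right-hand side is at most $S(\eta\Vert\rho_0) - \tau\gamma\big(\epsilon-\gamma\Delta(\mathcal{T})^2\big) = S(\eta\Vert\rho_0) - \tau\frac{3\epsilon^2}{16\Delta(\mathcal{T})^2}$. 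Since relative entropy is nonnegative, $0\le S(\eta\Vert\tilde{\rho}_\tau)$ forces $\tau \le \frac{16}{3\epsilon^2}S(\eta\Vert\rho_0)\Delta(\mathcal{T})^2 \le \frac{8}{\epsilon^2}S(\eta\Vert\rho_0)\Delta(\mathcal{T})^2$, which is the claimed bound on $T$; the halting state is the desired $\tilde{\rho}$.

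I expect the main obstacle to be the careful treatment of matrix non-commutativity. The potential identity needs $\ln\tilde{\rho}_\tau$ to split exactly, and the Golden--Thompson step is precisely where $\ln W_{\tau-1}$ and $-\gamma O_{j(\tau)}$ cannot be treated as commuting; verifying the operator inequality $e^{-\gamma O}\preceq I-\gamma O+\gamma^2 O^2$ from the scalar estimate — which requires the eigenvalue bound $\gamma\Delta(\mathcal{T})\le 1$ — is the one genuinely non-routine ingredient. The remaining work, namely the telescoping sum and the constant-chasing that yields the factor $8$, is mechanical.
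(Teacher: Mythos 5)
Your proof is correct, but it takes a genuinely different route from the one in the paper. The paper works in continuous time: it defines a trajectory $\rho(\tau)=e^{X(\tau)}/\tr[e^{X(\tau)}]$ with piecewise-constant generator $\Lambda_\tau$, computes $\frac{d}{d\tau}S(\eta\Vert\rho(\tau))=\tr[\Lambda_\tau(\eta-\rho(\tau))]$, invokes the bound $\frac{d^2}{d\tau^2}S(\eta\Vert\rho(\tau))\le 2\lVert X'(\tau)\rVert^2$ (cited from Lee et al.\ rather than re-derived), and integrates to show the violation persists at level $\epsilon/2$ over each interval of length $\frac{\epsilon}{4\Delta(\mathcal{T})^2}$, so that $S$ decreases by at least $\frac{\epsilon^2}{8\Delta(\mathcal{T})^2}$ per step. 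You instead run the standard discrete MMW regret argument: the exact decomposition $S(\eta\Vert\tilde{\rho}_\tau)=S(\eta\Vert\rho_0)+\gamma\sum_t\tr[\eta O_{j(t)}]+\ln\tr[W_\tau]$, Golden--Thompson, and the operator inequality $e^{-\gamma O}\preceq I-\gamma O+\gamma^2O^2$. Both hinge on a single non-commutative second-order estimate (the paper's second-derivative bound versus your Golden--Thompson-plus-quadratic step); yours has the advantage of being fully self-contained from a named textbook inequality, and it even yields the slightly sharper constant $\frac{16}{3}$ in place of $8$. One point you should make airtight rather than hedge: the hypothesis $\gamma\Delta(\mathcal{T})\le 1$ needed for the scalar bound $e^{y}\le 1+y+y^2$ is not an extra assumption, because for $\epsilon>2\Delta(\mathcal{T})$ the conclusion \eqref{eq:approx_expectations} holds vacuously for any pair of states (as $|\tr[O(\tilde{\rho}-\eta)]|\le 2\lVert O\rVert$), and in the remaining regime $\epsilon\le 2\Delta(\mathcal{T})$ one has $\gamma\Delta(\mathcal{T})=\frac{\epsilon}{4\Delta(\mathcal{T})}\le\frac{1}{2}$.
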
 

The theorem states that, if a density matrix $\eta$ with expectation values $\tr[\eta O_j]\leq b_j$ exists for a given set of matrices $\{O_j\}$, $j=1,\dots,m$, then there also exists a Gibbs state $\tilde{\rho}$ with expectation values $\tr[\tilde{\rho}O_j]\leq b_j+\epsilon$ for all $j=1,\dots,m$. Further, the Hamiltonian of the Gibbs state $\tilde{\rho}$ is a linear combination of the $m$ matrices in $\{O_j\}$. This is an extension of the original Jaynes Principle~\cite{Jaynes1957}, which states that if a state $\eta$ with expectation values $\tr[\eta O_j] = b_j$ exists, then there also exists a Gibbs state $\tilde{\rho}$ with expectation values $\tr[\tilde{\rho}O_j]= b_j$, with a potentially infinite inverse temperature $\beta$. Again, the Gibbs state Hamiltonian is some linear combination of the operators $O_j$.
As it becomes clear in the proof, $\rho_0$ serves as an initial guess of the density matrix $\rho$.
For completeness, we adapt here the proof of this statement from Ref.~\cite{Lee2015}. We will make further remarks about $\rho_0$ after the proof.

\begin{proof}
Defining $\rho(\tau) = e^{X(\tau)}/\tr[e^{X(\tau)}]$ for a Hermitian matrix $X(\tau)$ parametrized by $\tau\in\mathbb{R}$, we upper bound the second derivative of the relative entropy between $\rho_\tau$ and the target state $\eta$ as follows:
\begin{equation}
\begin{aligned}
    \frac{d^2}{d\tau^2}S(\eta\Vert\rho(\tau)) 
    &= -\frac{d^2}{d\tau^2}\tr[\eta\log\rho(\tau)]
    \\
    &= -\frac{d^2}{d\tau^2}\tr\left[\eta\left(X(\tau)-\log \tr[e^{X(\tau)}]\right)\right]
    \\
    &= -\tr[\eta X''(\tau)]+\frac{d}{d\tau}\frac{\tr[X'(\tau)e^{X(\tau)}]}{\tr[e^{X(\tau)}]},
\end{aligned}
\end{equation}
where $X':= dX/d\tau$ and $X'':= d^2X/d\tau^2$. Assuming that $X''(\tau)= 0$, which will be the case later on, Lee et al. derived an upper-bound on the second derivative~\cite{Lee2015}
\begin{equation}
\label{eq:rel_entropy_upper}
    \frac{d^2}{d\tau^2}S(\eta\Vert\rho(\tau)) 
    = \tr\left[X'(\tau)\frac{d}{d\tau}\frac{e^{X(\tau)}}{\tr[e^{X(\tau)}]}\right]\leq 2\lVert X'(\tau)\rVert^2. 
\end{equation}
%\lc{Where we used Eq.8} \lc{I think this can be shortened by just 2 equations, combining 8 and 9. I can do this later if needed.}
Consider the density matrix
\begin{equation}
    \rho(\tau) 
    = \frac{\exp\left(\ln\rho_0-\int_0^\tau\Lambda_sds\right)}{\tr\left[\exp\left(\ln\rho_0-\int_0^\tau\Lambda_sds\right)\right]}
\end{equation}
for $\tau\geq 0$. We have introduced the matrix $\Lambda_s\in\mathcal{T}$ parametrized by $s\in\mathbb{R}$ and the arbitrary density matrix $\rho_0$ (to keep the logarithm well defined, we restrict to $\rho_0\succ 0$).
Taking the first derivative of the relative entropy $S(\rho\Vert\rho(\tau))$, we obtain %(correcting the sign error in~\cite{Lee2015})
\begin{equation}
\label{eq:first derivative rel entropy}
\begin{aligned}
    \frac{d}{d\tau}S(\eta\Vert\rho(\tau)) 
    &= -\tr\left[\eta\frac{d}{d\tau}\ln\rho(\tau)\right]
    \\
    &= -\tr\left[-\Lambda_\tau\eta-\eta\frac{d}{d\tau}\ln\tr\left[\exp\left(\ln\rho_0-\int_0^\tau\Lambda_sds\right)\right]\right]
    \\
    &= \tr[\Lambda_\tau \eta]+\tr[\eta]\frac{\frac{d}{d\tau}\tr\left[\exp\left(\ln\rho_0-\int_0^\tau\Lambda_sds\right)\right]}{\tr\left[\exp\left(\ln\rho_0-\int_0^\tau\Lambda_sds\right)\right]}
    \\
    &= \tr[\Lambda_\tau(\eta-\rho(\tau))].
\end{aligned} 
\end{equation}
We now define the elements $O_{j(1)},\dots,O_{j(T)}\in\mathcal{T}$ inductively. Define the time $\tau_t := t\frac{\epsilon}{4\Delta(\mathcal{T})^2}$ and $\rho_t:=\rho(\tau_t)$ for each $t\in\{0,1,2,\dots,T\}$. 
Choose a starting density matrix $\rho_0$ and set $\Lambda_0 = 0$. 
If at any point $\tr[O_j(\rho_{t}-\eta)]\leq\epsilon$ for all $j$, i.e., all the constraints are met, then we are done. Otherwise, choose a broken constraint $O_{j(t)}$ such that
\begin{equation}
    \tr[O_{j(t)}(\rho_{t}-\rho)]>\epsilon.
\end{equation}
Setting $\Lambda_\tau = O_{j(t)}$ for $\tau\in(\tau_t,\tau_{t+1}]$. 
For $\tau\in(\tau_t,\tau_{t+1})$, we see that
\begin{equation}
\label{eq:bound on second derivative}
    -\frac{d^2}{d\tau^2}S(\eta\Vert\rho(\tau)) 
    = \frac{d}{d\tau}\tr[\Lambda_\tau(\rho(\tau)-\eta)] 
    = \frac{d}{d\tau}\tr[O_{j(t)}(\rho(\tau)-\eta)]\geq-2\lVert\Lambda_\tau\rVert^2,
\end{equation}
where the last equality follows from Eq.~\eqref{eq:rel_entropy_upper}.
By integrating both sides between $\tau_t$ and $\tau_{t+1}$ of \eqref{eq:bound on second derivative}, we find
\begin{equation}
\begin{aligned}
    \tr[O_{j(t)}(\rho_{t+1}-\eta)]
    &\geq\tr[O_{j(t)}(\rho_{t}-\eta)]-2\lVert\Lambda_\tau\rVert^2(\tau_{t+1}-\tau_t)
    \\
    &\geq\tr[O_{j(t)}(\rho_{t}-\eta)]-\frac{\epsilon}{2}
    \\
    &>\frac{\epsilon}{2}.
\end{aligned}
\end{equation}
Let $\tau_\mathrm{max} = T\epsilon/4\Delta(\mathcal{T})^2$. Suppose we repeat this step $T$ times and fail to find a $\tilde{\rho}$ which satisfies all the conditions~\eqref{eq:approx_expectations}. Then for all $\tau\in[0, \tau_\mathrm{max}]$ we have a $\Lambda_\tau\in\mathcal{T}$ such that
\begin{equation}
    \tr[\Lambda_\tau(\rho_t-\eta)]
    > \frac{\epsilon}{2}
\end{equation}
which, combined with Eq.~\eqref{eq:first derivative rel entropy}, implies that $\frac{d}{d\tau}S(\eta\Vert\rho(\tau))<-\epsilon/2$ for all $\tau\in[0,\tau_\mathrm{max}]$. Therefore, we can write
\begin{equation}
    S(\eta\Vert\rho(\tau_\mathrm{max}))
    < S(\eta\Vert\rho_0)-\frac{\epsilon}{2}\tau_\mathrm{max}
\end{equation}
For $\tau_\mathrm{max}\geq\frac{2}{\epsilon}S(\eta\Vert\rho_0)$, we get a contradiction since the relative entropy between any two density matrices is always greater than or equal to zero. Hence, if a density matrix $\eta$ which satisfies the SDP conditions exists, then we can always find a Gibbs state of the form of equation \eqref{form} with $T\leq\big\lceil\frac{8}{\epsilon^2}S(\eta\Vert\rho_0)\Delta(\mathcal{T})^2\big\rceil$.
\end{proof} 

An immediate consequence of this proof is that if after over $\lceil\frac{8}{\epsilon^2}S(\eta\Vert\rho_0)\Delta(\mathcal{T})^2\big\rceil$ steps one still does not find a density matrix which satisfies all the constraints, then one may conclude that no such density matrix exists.

The proof of Theorem~\ref{thm:approx_jaynes} with our added constraints $-I\preceq A_j\preceq I$ and $\rho_0 = I/N$ gives the Zero-Sum Algorithm~\ref{alg:zerosum}.
Note that if one is not informed about the properties of the target state $\eta$, the best initial state is the maximally mixed state, $\rho_0 = I/N$. Then
\begin{equation}
    S(\eta\Vert\rho_0) 
    = -\tr[\eta\ln\rho_0]-S(\eta) 
    = \ln N-S(\eta)
    \leq \ln N.
\end{equation}
However, a better initialization $\rho_0$ potentially reduces the integer $T$.
For example, if one were able to solve the mean-field Schr\"{o}dinger equation in a Hamiltonian learning problem, it is possible to start with a state $\rho_0$ closer to the target state $\eta$~\cite{coopmans2023sample}. In either case, $S(\eta\Vert\rho_0)$ is upper bounded by $\ln N$. By rescaling, we also know that $\Delta(\mathcal{T})\leq 1$. 

%%%%%%%%%%%%%%%%%%%%%%%%%%%%%%%%%%%%%%%%%%%%%%%%%%%%%%%%%%%%%%%%%%%%%%
 
\section{Thermal Pure Quantum States with Unitary Designs}
\label{app:tpq_deriv}

In this appendix, we show that the state
\begin{align}
\label{eq:tpq_app}
    \ket{\psi_\beta} 
    = \frac{e^{-\beta H/2}U\ket{0}}{\sqrt{\bra{0}U^\dagger e^{-\beta H}U\ket{0}}},
\end{align} with $U$ a unitary $k(\ge2)$-design [Eq.~\eqref{eq:psi_beta}],
is a thermal pure quantum (TPQ) state (Definition~\ref{def:TPQ_state}) for the Gibbs state $\sigma_\beta = e^{-\beta H}/\tr[e^{-\beta H}]$ if $\tr[\sigma_\beta^2] \propto e^{-\alpha n}$. Here $\beta\in\mathbb{R}^+$ is a positive constant and $H$ an arbitrary Hermitian matrix (Hamiltonian). We build upon the derivation presented in Ref.~\cite{Coopmans_2023}, which was based on Refs.~\cite{Sugiura2013, Lin2020} for a random Haar unitary~$U$. The bounds presented here are, however, fully exact and not just approximate. At the end of this section, by substituting the SDP Hamiltonian $H=H^\tau$ and constraint matrices $A_j$, we arrive at Eq.~\eqref{eq:TPQ_mse} for the mean-squared error in the main text. 

We start by expanding the mean-squared error as,
\begin{equation}
\label{eq:TPQ_mse}
\begin{aligned}
    &\mathbb{E}_U[(\bra{\psi_\beta}T_j\ket{\psi_\beta}-\tr[\sigma_\beta T_j])^2]
    \\
    &= (\mathbb{E}_U[\bra{\psi_\beta}T_j\ket{\psi_\beta}] - \tr[\sigma_\beta T_j])^2
    + \text{Var}_U[\bra{\psi_\beta}T_j\ket{\psi_\beta}].
\end{aligned}
\end{equation}
Our goal is to show that Eq.~\eqref{eq:TPQ_mse} vanishes exponentially with $n$. In what follows, we place an upper-bound on Eq.~\eqref{eq:TPQ_mse}, which is proportional to the purity of the Gibbs state~$\sigma_\beta$. 

For notational convenience, we introduce
\begin{align}
	&f := \bra{\psi_\beta} e^{-\beta H/2}T e^{-\beta H/2}\ket{\psi_\beta},
	\qquad
	g := \bra{\psi_\beta} e^{-\beta H}\ket{\psi_\beta}.
\end{align}
By averaging over the unitary $k$-design, we obtain the means,
\begin{align}
	\bE_U[g] = \frac{\tr[e^{-H}]}{2^n},
	\qquad
	\bE_U[f] = \frac{\tr[T e^{-H}]}{2^n},
\end{align}
and the variances,
\begin{align}
	\mathrm{Var}_U[g] 
	&= 
	\frac{2^n\tr[e^{-2\beta H}]-(\tr[e^{-\beta H}])^2}{2^{2n}(2^n+1)}
    = 
	(\tr[e^{-\beta H}])^2\frac{2^n\tr[\sigma_\beta^2]-1}{2^{2n}(2^n+1)}
    \nonumber\\
    &\le 
	(\tr[e^{-\beta H}])^2\frac{\tr[\sigma_\beta^2]}{2^{2n}},
	\\
	\mathrm{Var}_U[f] 
	&= 
	\frac{2^n\tr[(Te^{-\beta H})^2]-(\tr[T e^{-\beta H}])^2}{2^{2n}(2^n+1)}
	\le
	(\tr[e^{-\beta H}])^2\frac{\tr[(T\sigma_\beta)^2]}{2^{n}(2^n+1)}
	\nonumber\\
	&\le
	(\tr[e^{-\beta H}])^2\frac{\tr[\sigma_\beta^2]\frac{\tr[(T e^{-\beta H})^2]}{\tr[ e^{-2\beta H}]}}{2^{n}(2^n+1)}
	\le
	\lVert T\rVert^2(\tr[e^{-\beta H}])^2\frac{\tr[\sigma_\beta^2]}{2^{n}(2^n+1)}
    \nonumber\\
    &\le
	\lVert T\rVert^2(\tr[e^{-\beta H}])^2\frac{\tr[\sigma_\beta^2]}{2^{2n}}.
\end{align}
From these we obtain two Chebyshev inequalities,
\begin{align}
\begin{split}
	&\text{Pr}\left[\Big|\frac{g-\bE_U[g]}{\bE_U[g]}\Big| > \delta \right] 
	< \frac{\mathrm{Var}_U[g]}{\delta^2\bE_U[g]^2}
	\le \frac{\tr[\sigma_\beta^2]}{\delta^2},
	\\
	&\text{Pr}\left[\Big|\frac{f-\bE_U[f]}{\bE_U[g]}\Big| > \delta \right] 
	< \frac{\mathrm{Var}_U[f]}{\delta^2\bE_U[g]^2}
	\le \lVert T\rVert^2\frac{\tr[\sigma_\beta^2]}{\delta^2}.
\end{split}
\end{align}
Accordingly, we define the sets of unitary operators,
\begin{align}
	G := \left\{U \left|\, \Big|\frac{g-\bE[g]}{\bE[g]}\Big| < \delta \right.\right\},
	\qquad
	F := \left\{U \left|\, \Big|\frac{f-\bE[f]}{\bE[g]}\Big| < \delta \right.\right\}, 
\end{align}
and their respective compliments $\bar{G}$ and $\bar{F}$.

We first bound the bias in the mean,
\begin{equation}
\label{eq:bias}
\begin{aligned}
    &\bE_U\left[\frac{f}{g} - \frac{\bE_U[f]}{\bE_U[g]}\right]
    \\
    &=
    \text{Pr}[F\cap G]\bE_{U|_{F\cap G}}
    \left[\frac{f}{g} - \frac{\bE_U[f]}{\bE_U[g]}\right]
    + \text{Pr}[\bar{F}\cup\bar{G}]\bE_{U|_{\bar{F}\cup\bar{G}}}
    \left[\frac{f}{g} - \frac{\bE_U[f]}{\bE_U[g]}\right]
    \\
    &\le
    \bE_{U|_{F\cap G}}\left[\frac{f}{g} - \frac{\bE_U[f]}{\bE_U[g]}\right]
    + 
    \frac{(1+\lVert T\rVert^2)\tr[\sigma_\beta^2]}{\delta^2}2\lVert T\rVert,
\end{aligned}
\end{equation}
where $\bE_{U|_{F\cap G}}[\cdot]$ stands for the average of $U$ over the intersection between $k$-design and $F\cap G$, and $\bE_{U|_{\bar{F}\cup \bar{G}}}[\cdot]$ is defined similarly.
The first term is converted to
\begin{equation}
\label{eq:bias_first}
\begin{aligned}
    &\bE_{U|_{F\cap G}}\left[\frac{f}{g} - \frac{\bE_U[f]}{\bE_U[g]}\right]
    \\
    &=
    \bE_{U|_{F\cap G}}\left[\frac{\bE_U[f]}{\bE_U[g]}\left(1+\frac{f-\bE_U[f]}{\bE_U[f]}\right)
    \left(1+R\left(\frac{g-\bE_U[g]}{\bE_U[g]}\right)\right)
    - \frac{\bE_U[f]}{\bE_U[g]}\right].
\end{aligned}
\end{equation}
where we defined $R(x):= (1+x)^{-1}-1$.
We note that the magnitude of $R(x)$ for $x\in[-\delta,\delta]\subset[-1,1]$ is bounded as
\begin{align}
\label{eq:remainder}
    \left|R(x)\right| \le \frac{\delta}{(1-\delta)^2}.
\end{align}
As specified later, $\delta$ is taken to satisfy $0<\delta\le1/2$.
Therefore, the magnitude of \eqref{eq:bias_first} is bounded as
\begin{equation}
%\label{eq:bias_first}
\begin{aligned}
    &\bE_{U|_{F\cap G}}\left[\frac{f}{g} - \frac{\bE_U[f]}{\bE_U[g]}\right]
    \le
    \delta + (\lVert T\rVert + \delta)\frac{\delta}{(1-\delta)^2}.
\end{aligned}
\end{equation}
Plugging this into \eqref{eq:bias}, we find
\begin{align}
\begin{split}
	\left|\bE_U\left[\frac{f}{g} - \frac{\bE_U[f]}{\bE_U[g]}\right]\right|
	&\le 
	\delta +\frac{\delta}{(1-\delta)^2}\left( \lVert T\rVert + \delta\right)
	+
	\frac{(1+\lVert T\rVert^2)\tr[\sigma^2]}{\delta^2}2\lVert T\rVert.
\end{split}
\end{align}
Setting $\delta=(\tr[\sigma^2])^{1/3}/2$, we have
\begin{align}
\label{eq:tpqexperrbound}
\begin{split}
	\left|\bE_U\left[\frac{f}{g} - \frac{\bE_U[f]}{\bE_U[g]}\right]\right|
	&\le 
	(\tr[\sigma^2])^{1/3}
    \left(
    \frac{3}{2} 
    + 10\lVert T\rVert 
    + 8\lVert T\rVert^3
    \right),
\end{split}
\end{align}
where we used $0<\delta\le1/2$.

We turn to the variance.
Using the same technique, we proceed as follows,
\begin{align}
\label{eq:variance}
\begin{split}
    &\mathrm{Var}_U\left[\frac{f}{g}\right]
    \\
    &=
    \bE_U\left[\left(\frac{f}{g} - \bE_U\left[\frac{f}{g}\right]\right)^2 \right]
    \\
    &=
    \mathrm{Pr}[F\cap G]\bE_{U|_{F\cap G}}\left[
    \frac{f^2}{g^2}
    \right]
    + \mathrm{Pr}[\bar{F}\cup \bar{G}]\bE_{U|_{\bar{F}\cup \bar{G}}}\left[
    \frac{f^2}{g^2}
    \right]
    \\
    &\quad -
    \left(\mathrm{Pr}[F\cap G]\bE_{U|_{F\cap G}}\left[
    \frac{f}{g}
    \right]
    + \mathrm{Pr}[\bar{F}\cup \bar{G}]\bE_{U|_{\bar{F}\cup \bar{G}}}\left[
    \frac{f}{g}
    \right]\right)^2
    \\
    &\le
    \mathrm{Pr}[F\cap G]\bE_{U|_{F\cap G}}\left[
    \frac{f^2}{g^2}
    \right]
    -
    \left(\mathrm{Pr}[F\cap G]\bE_{U|_{F\cap G}}\left[
    \frac{f}{g}
    \right]
    \right)^2
    + 
    \mathrm{Pr}[\bar{F}\cup \bar{G}]\cdot 4\lVert T\rVert^2
    \\
    &\le
    \left(
    \bE_{U|_{F\cap G}}\left[\frac{f^2}{g^2}\right]
    - \bE_{U|_{F\cap G}}\left[\frac{f}{g}\right]^2
    \right)
    +
    \mathrm{Pr}[\bar{F}\cup \bar{G}]\cdot 5\lVert T\rVert^2.
\end{split}
\end{align} 
The first term is expanded as,
\begin{align}
\label{eq:first_var}
    &\bE_{U|_{F\cap G}}\left[\frac{f^2}{g^2}\right]
    - \bE_{U|_{F\cap G}}\left[\frac{f}{g}\right]^2
    \nonumber\\
    &=
    \frac{\bE[f]^2}{\bE[g]^2}\bE_{U|_{F\cap G}}\left[
    \left(1 + \frac{f-\bE[f]}{\bE[f]}
    + R + \frac{f-\bE[f]}{\bE[f]}R\right)^2
    \right]
    \nonumber\\
    &\quad - 
    \frac{\bE[f]^2}{\bE[g]^2}\bE_{U|_{F\cap G}}\left[
    1 + \frac{f-\bE[f]}{\bE[f]}
    + R + \frac{f-\bE[f]}{\bE[f]}R
    \right]^2
    \nonumber\\
    &=
    \frac{\bE[f]^2}{\bE[g]^2}\bE_{U|_{F\cap G}}\left[
    \left(\frac{f-\bE[f]}{\bE[f]}\right)^2
    + R^2 + 2\frac{f-\bE[f]}{\bE[f]}R 
    \right]
    \nonumber\\
    &\quad - 
    \frac{\bE[f]^2}{\bE[g]^2}\left(
    \bE_{U|_{F\cap G}}\left[\frac{f-\bE[f]}{\bE[f]}\right]^2
    + \bE_{U|_{F\cap G}}\left[R\right]^2
    +2\bE_{U|_{F\cap G}}\left[\frac{f-\bE[f]}{\bE[f]}\right]
    \bE_{U|_{F\cap G}}\left[R\right]
    \right)
    \nonumber\\
    &\quad + \calO(\delta^3).
\end{align}
Using the inequality~\eqref{eq:remainder}, Eq.~\eqref{eq:first_var} is bounded as
\begin{align}
\begin{split}
    &\bE_{F\cap G}\left[\frac{f^2}{g^2}\right]
    - \bE_{F\cap G}\left[\frac{f}{g}\right]^2
    \le
    2\left(
    \delta^2
    +
    \frac{\delta^2}{(1-\delta)^4}\lVert T\rVert^2
    +
    2 \frac{\delta^2}{(1-\delta)^2}\lVert T\rVert
    \right)
    + \calO(\delta^3).
\end{split}
\end{align}
This leads to an upper bound of variance~\eqref{eq:variance},
\begin{align}
\begin{split}
    \mathrm{Var}_U\left[\frac{f}{g}\right]
    &\le
    2\delta^2\left(
    1 + \frac{\lVert T\rVert}{(1-\delta)^2}
    \right)^2
    +
    \frac{(1+\lVert T\rVert^2)\tr[\sigma_\beta^2]}{\delta^2}5\lVert T\rVert^2
    + \calO(\delta^3).
\end{split}
\end{align}
Setting $\delta=(\tr[\sigma_\beta^2])^{1/4}/2$, we have
\begin{align}
\label{eq:variance_bound}
\begin{split}
    \mathrm{Var}_U\left[\frac{f}{g}\right]
    &\le
    (\tr[\sigma_\beta^2])^{1/2}
    \left(
    \frac{1}{2}
    + 4\lVert T\rVert
    + 28\lVert T\rVert^2
    + 20\lVert T\rVert^4
    \right)
    + \calO((\tr[\sigma_\beta^2])^{3/4}).
\end{split}
\end{align}

Substituting Eqs.~\eqref{eq:bias} and \eqref{eq:variance} into Eq.~\eqref{eq:TPQ_mse}, we find
\begin{equation}
\label{eq:TPQ_mse_bound}
\begin{aligned}
    &\mathbb{E}_U[(\bra{\psi_\beta}T_j\ket{\psi_\beta}-\tr[\sigma_\beta T_j])^2]
    \\
    &\le
    (\tr[\sigma_\beta^2])^{1/2}
    \left(
    \frac{1}{2}
    + 4\lVert T\rVert
    + 28\lVert T\rVert^2
    + 20\lVert T\rVert^4
    \right)
    + \calO((\tr[\sigma_\beta^2])^{2/3}).
\end{aligned}
\end{equation}
     
Thus, $\ket{\psi_\beta}$ is a TPQ state if the purity vanishes exponentially with system size, $\tr[\sigma_\beta^2]= \calO(2^{-\alpha n})$. Setting $H= H^\tau$, $\beta= \beta^\tau$, and substituting the SDP constraint matrices, $\lVert T\rVert= \lVert A_j\rVert\leq 1$ in Eqs.~\eqref{eq:tpqexperrbound} and \eqref{eq:TPQ_mse_bound}, we arrive at Eqs.~\eqref{eq:TPQ_mse_main} and~\eqref{eq:sdptpq} in the main text.

\section{A Spectral Condition for Exponentially Vanishing Purity}\label{app:exponential decaying purity}
In this appendix, we formulate a condition on the eigenvalue spectrum of the Hamiltonian $H$ which ensures an exponentially vanishing purity. Therefore, if the SDP Hamiltonian $H^\tau$ satisfies this condition, $\ket{\psi_\tau}$ in Eq.~\eqref{eq:TPQ} is indeed a valid TPQ state (Definition~\ref{def:TPQ_state}) and can be used to solve SDPs. We start by giving a derivation of the condition in Section~\ref{sec:derivspec}. In Section~\ref{sec:app_cond_hams} we show some analytical and numerical examples of Hamiltonians that satisfy the condition. In Section~\ref{sec:free-energy} we also provide an alternative argument for a vanishing purity based on the free energy. 

\subsection{Derivation of the Spectral Condition}
\label{sec:derivspec}
We first upper bound the purity of the Gibbs state by
\begin{equation}
\begin{aligned}
\label{eq:21}
    \tr[\sigma_\beta^2] = 
    \tr\left[\frac{e^{- 2\beta H}}{\left(\tr[e^{- \beta H}]\right)^2}\right] \leq
    \frac{2^n e^{-2\beta\lambda_\mathrm{min}}}{(\tr [e^{-\beta H}])^2}
    =
    \frac{2^{n}}{\left(\tr\left[ e^{-\beta(H-\lambda_\mathrm{min})}\right]\right)^2},
\end{aligned}
\end{equation} where $\lambda_{\mathrm{min}}$ is the smallest eigenvalue of $H$. 

Now let $\Pi_\nu$ be a projector onto the eigenspace of $H/n$ with eigenvalues less than $\frac{\lambda_\mathrm{min}}{n}+\nu$ for some positive constant $\nu$. Then we can write
\begin{equation}
\begin{aligned}
\label{eq:22}
    \big(\tr[ e^{-\beta(H-\lambda_\mathrm{min})}]\big)^2
    =
    \Big(\tr[ e^{-\beta n\big(\frac{H}{n}-\frac{\lambda_\mathrm{min}}{n}\big)}]\Big)^2
    \geq
    \big(\tr[\Pi_\nu e^{-\beta n\nu}]\big)^2 = e^{-2\beta n\nu} (\tr[\Pi_\nu])^2. 
\end{aligned}
\end{equation} 

By performing the trace in the eigenbasis of $H$ we see that $\tr[\Pi_\nu]$ counts the number of eigenvalues of $H$ that are within $n\nu$ of the smallest eigenvalue $\lambda_{\mathrm{min}}$. Combining Eq.~\eqref{eq:21} with Eq.~\eqref{eq:22} we thus arrive at 
\begin{equation}
\label{eq:purity}
    \tr[{\sigma_\beta}^2] \leq \frac{2^{n+\frac{2n\beta\nu}{\ln{2}}}}{(\tr[\Pi_\nu])^2}. 
\end{equation}

In order to make the right hand side vanish exponentially with $n$, we require that the $\tr[\Pi_\nu]$ increases exponentially with $n$ (the denominator should be bigger than the numerator for all $n$). Therefore, by introducing the condition $\tr[\Pi_\nu] \geq c 2^n$ for some constant $c$ independent of $n$ we find 
\begin{equation}
    \tr [\sigma_\beta^2]\leq\frac{2^{-\left(1-\frac{2\beta\nu}{\ln2}\right)n}}{c^2}, 
\end{equation} which decreases exponentially for $0\leq\nu < \ln{2}/(2\beta)$. In other words, we have shown that if the number of eigenvalues of $H$ in the interval $[\lambda_{\mathrm{min}}, \lambda_{\mathrm{min}} + \nu n ]$ is a constant fraction of the total number of eigenvalues, i.e. is larger than $c2^n$, the purity of the Gibbs state $\sigma$ vanishes with $n$.\footnote{Note that an equivalent condition on the rescaled Hamiltonian is that $H/n$ has $c2^n$ eigenvalues in the range $[\lambda_{\mathrm{min}}/n, \lambda_{\mathrm{min}}/n +\nu]$} This proves the following proposition. 

\begin{proposition}[a spectral condition for vanishing purity]
\label{prop:gen_pur_cond}
Given a Hermitian matrix $H$ of size $2^n\times2^n$ which has $c2^n\leq 2^n$ eigenvalues in the range $[\lambda_{\mathrm{min}}, \lambda_{\mathrm{min}} + \nu n ]$ for some constants $c$, $\beta$ and $0\leq\nu\leq \ln{2}/(2\beta)$. The purity of the Gibbs state $\sigma = e^{-\beta H}/\tr[e^{-\beta H}]$ can be upper bounded by $\tr[\sigma_\beta^2]\leq\frac{2^{-\left(1-\frac{2\beta\nu}{\ln2}\right)n}}{c^2}$.
\end{proposition}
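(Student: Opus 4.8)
The plan is to bound the purity $\tr[\sigma_\beta^2] = \tr[e^{-2\beta H}]/(\tr[e^{-\beta H}])^2$ by controlling the numerator and denominator separately in terms of the spectrum of $H$, and then to invoke the hypothesis on the density of low-lying eigenvalues. First I would bound the numerator crudely: since every eigenvalue $\lambda$ of $H$ satisfies $\lambda \geq \lambda_{\mathrm{min}}$, each Boltzmann factor obeys $e^{-2\beta\lambda} \leq e^{-2\beta\lambda_{\mathrm{min}}}$, and summing over the $2^n$ eigenvalues gives $\tr[e^{-2\beta H}] \leq 2^n e^{-2\beta\lambda_{\mathrm{min}}}$. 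Factoring $e^{-2\beta\lambda_{\mathrm{min}}}$ out of both numerator and denominator then reduces the task to lower-bounding the shifted partition function $\tr[e^{-\beta(H-\lambda_{\mathrm{min}})}]$, leaving $\tr[\sigma_\beta^2] \leq 2^n/(\tr[e^{-\beta(H-\lambda_{\mathrm{min}})}])^2$.

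The second step is the crux: I would lower-bound the shifted partition function by restricting the trace to the low-energy window guaranteed by the hypothesis. Letting $\Pi_\nu$ project onto the eigenspace of $H$ with eigenvalues in $[\lambda_{\mathrm{min}}, \lambda_{\mathrm{min}}+n\nu]$, I discard all other (nonnegative) terms and use $\lambda - \lambda_{\mathrm{min}} \leq n\nu$ on the retained window to obtain $\tr[e^{-\beta(H-\lambda_{\mathrm{min}})}] \geq e^{-\beta n\nu}\,\tr[\Pi_\nu]$. Combining with the first step gives $\tr[\sigma_\beta^2] \leq 2^n e^{2\beta n\nu}/(\tr[\Pi_\nu])^2$, and rewriting $e^{2\beta n\nu} = 2^{2\beta n\nu/\ln 2}$ collects the numerator into the single power $2^{n(1+2\beta\nu/\ln 2)}$.

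Finally I would feed in the spectral hypothesis $\tr[\Pi_\nu] \geq c 2^n$, which replaces the denominator by $c^2 2^{2n}$ and yields $\tr[\sigma_\beta^2] \leq c^{-2}\,2^{-(1-2\beta\nu/\ln 2)n}$, the claimed bound. I would close by observing that the restriction $0 \leq \nu \leq \ln 2/(2\beta)$ is exactly what keeps the exponent $1-2\beta\nu/\ln 2$ nonnegative, so the bound decays (strictly exponentially when $\nu < \ln 2/(2\beta)$).

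I do not expect a genuine obstacle, since each inequality is an elementary spectral estimate; the only points requiring care are the bookkeeping of the two competing exponential factors—the $2^n$ from the crude numerator bound against the $2^{2n}$ arising from squaring the density of states—so as to confirm the hypothesis $\tr[\Pi_\nu]\gtrsim 2^n$ is strong enough to win, and the base-$e$ to base-$2$ conversion so that the threshold $\nu < \ln 2/(2\beta)$ emerges correctly. The real substance lies in choosing the window width $n\nu$ to grow linearly in $n$, which makes the retained Boltzmann weight $e^{-\beta n\nu}$ only exponentially (rather than super-exponentially) small and thus matchable against the $2^n$ numerator.
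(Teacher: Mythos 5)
Your proposal is correct and follows essentially the same route as the paper's proof in Appendix~\ref{app:exponential decaying purity}: the crude bound $\tr[e^{-2\beta H}]\leq 2^n e^{-2\beta\lambda_{\mathrm{min}}}$, the lower bound $\tr[e^{-\beta(H-\lambda_{\mathrm{min}})}]\geq e^{-\beta n\nu}\tr[\Pi_\nu]$ via the low-energy window, and the substitution $\tr[\Pi_\nu]\geq c2^n$ are exactly the three steps used there. No gaps.
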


The condition holds for generic Hamiltonians, hence generic Gibbs states. In order to apply it to the SDP Hamiltonian $H^\tau$ and arrive at the (more specialized) Proposition~\ref{prop:sdp_pur_cond} in the main text we take a few more steps. First, note that the condition needs to be satisfied for all update steps $\tau\leq T=8n/\epsilon^2 $ of the MMW method. As the purity is an increasing function with respect to $\beta$ we therefore focus on largest SDP re-scaling parameter $\beta^T=2n /\epsilon$. Importantly, $\beta^T$ is proportional to $n$, whereas the condition assumes a constant $\beta$. For this reason, we set $\beta_{\mathrm{max}}=2/\epsilon$ and $H=nH^T$ to make the connection \begin{equation}
    \rho_T = \frac{e^{-\beta_{\mathrm{max}} H^T}}{\tr[e^{-\beta_{\mathrm{max}}H^T}]} = \frac{e^{-\beta H}}{\tr[e^{-\beta H}]}.
\end{equation} Now we apply the generic Proposition \ref{prop:gen_pur_cond} and see that $H=nH^T$ needs to have $c2^n\leq 2^n$ eigenvalues in the range $[\lambda_{\mathrm{min}}, \lambda_{\mathrm{min}} + \nu n ]$. In other words, $H^T$ needs to have $c2^n\leq 2^n$ eigenvalues in the range $[\lambda_{\mathrm{min}}/n, \lambda_{\mathrm{min}}/n + \nu ]$ for $0\leq\nu\leq \epsilon\ln{2}/4$. Using $\tr[\rho_\tau^2]\leq\tr[\rho_T^2]$  then leads to Proposition~\ref{prop:sdp_pur_cond} in the main text. 

\subsection{Example classes of Hamiltonians that satisfy the spectral condition}
\label{sec:app_cond_hams}

Here we investigate the purity of the Gibbs state $\sigma_\beta$ for various example Hamiltonians. First we show that the purity of the Gibbs state of random Hamiltonian from the Generalized Unitary Ensemble (GUE) vanishes exponentially with $n$. Afterward, we verify numerically that the purity of the XXZ Heisenberg model and the two-dimensional spinless Hubbard model also vanishes exponentially with $n$. 

\subsubsection{GUE Random Matrices and Wigner's Semicircle law}

We first write the Gibbs state as \begin{equation}
    \sigma_\beta = \frac{e^{-\beta H}}{\tr[e^{-\beta H}]} = \frac{e^{-n\tilde{\beta} h}}{\tr[e^{-n\tilde{\beta} h}]}, 
\end{equation} where we have defined $\tilde{\beta}$ such that the rescaled Hamiltonian $h$ has eigenvalues between $-2$ and $2$. We assume $\tilde{\beta}$ is independent of $n$, which can be done in the SDP setting. We now assume that the rescaled Hamiltonian $h$ is a random GUE matrix and attempt to show that the purity vanishes with system size.  

Random matrices in the Gaussian Unitary Ensemble (GUE) are defined as follows. 
\begin{definition}[GUE random matrix, Ref.~\cite{anderson2010}]
    The $N\times N$ Gaussian unitary Ensemble is a family of complex Hermitian matrices specified by \begin{equation}
        h_{ij} = \frac{g_{ij}+ig_{ij}}{\sqrt{2N}} \quad \mathrm{if} j>i, 
    \end{equation}
    \begin{equation}
        h_{ii} = \frac{g_{ii}}{\sqrt{N}} \quad \mathrm{if} i=j,
    \end{equation} where $g_{ii}, g_{ij}, g'_{ij}$ for all $i,j$ are independent standard Gaussian random variables with mean zero and standard deviation $1$. 
\end{definition}
The empirical spectral density of a random matrix $h^{\mathrm{GUE}}$ is given by 
\begin{equation}
    \rho(E) := \frac{1}{N}\sum_{i=1}^N\delta[E-\lambda_i(h^{\mathrm{GUE}})],
\end{equation} where $\delta(\cdot)$ is the Dirac delta function and $\lambda_i$ the $i-$th eigenvalue of $h^{\mathrm{GUE}}$. Note that due to the randomness this empirical spectral density changes for every random instance.

The number of eigenvalues in the interval $[\lambda_0, \lambda_0+\nu]$ of the rescaled Hamiltonian $h$ is therefore given by
\begin{equation}
    \tr[\Pi_\nu] = N\int_{\lambda_0}^{\lambda_0+\nu} \rho(E)\mathrm{d}E.
\end{equation} In order to bound this integral we make use of the following fact. 
\begin{lemma}[spectral density bound, Refs.~\cite{erdos2017,Chen2023}]
For a random instance $h^{\mathrm{GUE}}$ of the GUE ensemble with dimension $N$, with empirical spectral density $\rho(E)$, there is an absolute constant $d$ such that we have 
\begin{equation}\label{eq:fact}
    \underset{E}{\mathrm{sup}} 
    \Big|\int_{-\infty}^E[\rho(E')-\rho_{\mathrm{sc}}(E')]\mathrm{d}E'\Big|
    \leq \frac{d}{\sqrt{N}}
\end{equation} with probability at least $1-\frac{1}{N}$. Here $\rho_{\mathrm{sc}}(E)$ is the spectral density of the Wigner's semi-circle law given by 
\begin{equation}
    \rho_{\mathrm{sc}}(E) := \frac{\sqrt{4-E^2}}{2\pi}.
\end{equation} 
\end{lemma}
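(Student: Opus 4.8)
The plan is to recognize the integral $\int_{-\infty}^E[\rho(E')-\rho_{\mathrm{sc}}(E')]\mathrm{d}E' = F_N(E)-F_{\mathrm{sc}}(E)$ as the difference between the empirical cumulative distribution $F_N(E) = \frac{1}{N}\#\{i:\lambda_i\le E\}$ of the eigenvalues of $h^{\mathrm{GUE}}$ and the cumulative semicircle distribution $F_{\mathrm{sc}}(E) = \int_{-\infty}^E\rho_{\mathrm{sc}}(E')\mathrm{d}E'$. The quantity to control is therefore the Kolmogorov distance $\sup_E|F_N(E)-F_{\mathrm{sc}}(E)|$, and the natural tool is the local semicircle law developed in Ref.~\cite{erdos2017}, which controls the Stieltjes transform $m_N(z)=\frac{1}{N}\tr[(h^{\mathrm{GUE}}-z)^{-1}]$ on increasingly fine scales.

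First I would introduce the spectral parameter $z = E + i\eta$ and invoke the local law: for $\eta$ in the range $[N^{-1+\kappa},1]$ with small $\kappa>0$, one has $|m_N(z)-m_{\mathrm{sc}}(z)|\le (\log N)^C/(N\eta)$ with probability at least $1-N^{-D}$ for any fixed $D$, where $m_{\mathrm{sc}}$ is the Stieltjes transform of the semicircle law solving $m_{\mathrm{sc}}^2+zm_{\mathrm{sc}}+1=0$. The proof of this input is the technical core: it rests on the self-consistent quadratic equation satisfied by $m_N$ up to a random error, a concentration estimate for $m_N$ about its mean that for the GUE follows directly from Gaussian concentration of measure (the resolvent being a Lipschitz function of the matrix entries), and a stability analysis of the quadratic equation that transfers control of the error back to $m_N-m_{\mathrm{sc}}$.

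Next I would convert resolvent control into control of the eigenvalue counting function. Using the Helffer--Sj\"{o}strand representation, one writes the counting function $\mathcal{N}(E)=\#\{i:\lambda_i\le E\}$ as a contour integral of $m_N$ against a smooth cutoff, so that the bound on $|m_N-m_{\mathrm{sc}}|$ on the contour $\eta\sim N^{-1+\kappa}$ yields eigenvalue rigidity, $\sup_E|\mathcal{N}(E)-NF_{\mathrm{sc}}(E)|\le (\log N)^{C'}$ with probability at least $1-N^{-D}$. Dividing by $N$ gives $\sup_E|F_N(E)-F_{\mathrm{sc}}(E)|\le (\log N)^{C'}/N$, which for $N$ large is bounded by $d/\sqrt{N}$ after absorbing the logarithm into the constant $d$, and the failure probability $N^{-D}$ is comfortably below $1/N$.

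The main obstacle is the behaviour near the spectral edges $E=\pm2$, where $\rho_{\mathrm{sc}}$ vanishes like a square root, the local law degrades, and the rigidity estimates pick up edge-dependent factors. One must therefore treat the bulk and the $\calO(N^{-2/3})$-neighbourhoods of the edges separately, verifying that the edge contributions to the cumulative difference stay within the target $d/\sqrt{N}$. The comparatively weak $N^{-1/2}$ rate claimed here, as opposed to the sharp $N^{-1}\log N$ rigidity bound, is exactly what makes this balancing routine, and it is all that Proposition~\ref{prop:gen_pur_cond} subsequently requires.
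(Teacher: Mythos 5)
The paper does not actually prove this lemma: it is imported verbatim as a known result, with the reader pointed to Refs.~\cite{erdos2017,Chen2023} for its justification. Your proposal therefore cannot be compared to an in-paper argument, but as a reconstruction of the proof living inside those references it is essentially faithful and correct: identifying $\int_{-\infty}^E[\rho(E')-\rho_{\mathrm{sc}}(E')]\,\mathrm{d}E'$ with the Kolmogorov distance between the empirical and semicircle distribution functions, feeding a local semicircle law for the Stieltjes transform through the Helffer--Sj\"{o}strand formula to obtain eigenvalue rigidity $\sup_E|\mathcal{N}(E)-NF_{\mathrm{sc}}(E)|\le(\log N)^{C'}$ with probability $1-N^{-D}$, and then dividing by $N$ gives a bound of order $(\log N)^{C'}/N$, which is far stronger than the claimed $d/\sqrt{N}$ (small $N$ being absorbed into the absolute constant $d$ since the Kolmogorov distance is trivially at most $1$). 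Your remarks about the edge regime and the failure probability are also on target. The one place where your sketch leans on a black box is the local law itself: for GUE, naive Gaussian--Lipschitz concentration of the resolvent trace gives fluctuations of order $1/(N\eta^{2})$ rather than the optimal $1/(N\eta)$, so the concentration input at scale $\eta\sim N^{-1+\kappa}$ genuinely requires the self-consistent/bootstrapping machinery of Ref.~\cite{erdos2017} and not just an off-the-shelf measure-concentration argument. That said, for the weak $N^{-1/2}$ rate actually needed here (and all that Proposition~\ref{prop:gen_pur_cond} uses), a local law on the much coarser scale $\eta\sim N^{-1/2}$ already suffices, and at that scale the simple concentration argument you describe does close the gap; so your outline is not only correct but identifies exactly why the stated bound is far from the technological frontier.
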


Using $ -2\leq \lambda_0 \leq 2$ and integrating the semi-circle gives
\begin{align}
    \int_{-\infty}^{\lambda_0+\nu} \rho_{\mathrm{sc}}(E)\mathrm{d}E &= \int_{-2}^{\lambda_0+\nu} \frac{\sqrt{4-E^2}}{2\pi} \mathrm{d}E \\
    &= \frac{1}{4\pi}\sqrt{4-(\lambda_0+\nu)^2}(\lambda_0+\nu) + \sin^{-1}[\frac{\lambda_0+\nu}{2}] +\frac{1}{2} 
\end{align}

Thus, combining with Eq.~\eqref{eq:fact} we find
\begin{align}
    \tr[\Pi_\nu] &=  N\int_{-\infty}^{\lambda_0+\nu} \rho(E)\mathrm{d}E \\ &\geq N \int_{-\infty}^{\lambda_0+\nu} \rho_{\mathrm{sc}}(E)\mathrm{d}E - \sqrt{N}d \\
    &\geq \frac{N}{4\pi}\sqrt{4-(\lambda_0+\nu)^2}(\lambda_0+\nu) + \frac{N}{\pi}\sin^{-1}[\frac{\lambda_0\nu}{2}] +\frac{N}{2} - \sqrt{N}d \\
    &=: (\alpha N -\sqrt{N}d)
\end{align} with probability $1-\frac{1}{N}$. Here we introduced the constant $\alpha$, and we choose $\nu$ such that $-2<\lambda_0+\nu<2$ and $0\leq\nu\leq \frac{\ln 2}{2\tilde{\beta}}$. Most importantly $\nu$ is independent of $n$.  

Plugging this into Eq.~\eqref{eq:purity} we see that with probability $1-2^{-n}$, 
\begin{equation}
    \tr[{\sigma_{\tilde{\beta}}}^2] \leq \frac{2^{n+\frac{2n\tilde{\beta}\nu}{\ln{2}}}}{(\alpha 2^n - d 2^{n/2})^2} = \frac{2^{n+\frac{2n\tilde{\beta}\nu}{\ln{2}}}}{\alpha^2 2^{2n} - 2\alpha d 2^{3n/2} + d^2 2^{n}} = \frac{2^{-\left(1-\frac{2\tilde{\beta}\nu}{\ln2}\right)n}}{\alpha^2 -2\alpha d 2^{-n/2} + d^2 2^{-n}},
\end{equation} which decreases exponentially with system size $n$ for $0\leq\nu\leq \frac{\ln 2}{2\tilde{\beta}}$ and $\tilde{\beta}$ independent of $n$.

\subsubsection{Numerical Verification for Geometrically Local Hamiltonians}

In the main text, we demonstrate our TPQ-SDP solver to the Hamiltonian learning problem of the XXZ Heisenberg model [Eq.~\eqref{eq:xxz}] and the two-dimensional spinless Hubbard model [Eq.~\eqref{eq:spinlesshubbard}]. In order to guarantee the success of our algorithm for solving these type of SDPs, we verify that the purities of the Gibbs states of these Hamiltonians vanish exponentially with system size. In the top row of Fig.~\ref{fig:verifpur}, we show the purity of these Hamiltonians for system sizes of up to 13 qubits. We clearly observe an exponential decay, with the rate determined by the inverse temperature $\beta$. In the bottom row of the same figure, we also show the number of eigenvalues in the range $[\lambda_{\mathrm{min}}, \lambda_{\mathrm{min}}+\nu n]$ with $\nu= (1-10^{-5})\ln{2}/2\beta$. We see that the number of eigenvalues increases exponentially with the number of qubits, which is in accordance with our spectral condition (Proposition~\ref{prop:gen_pur_cond}). 

\begin{figure}[h!]
\centering
 \includegraphics[width=\linewidth]{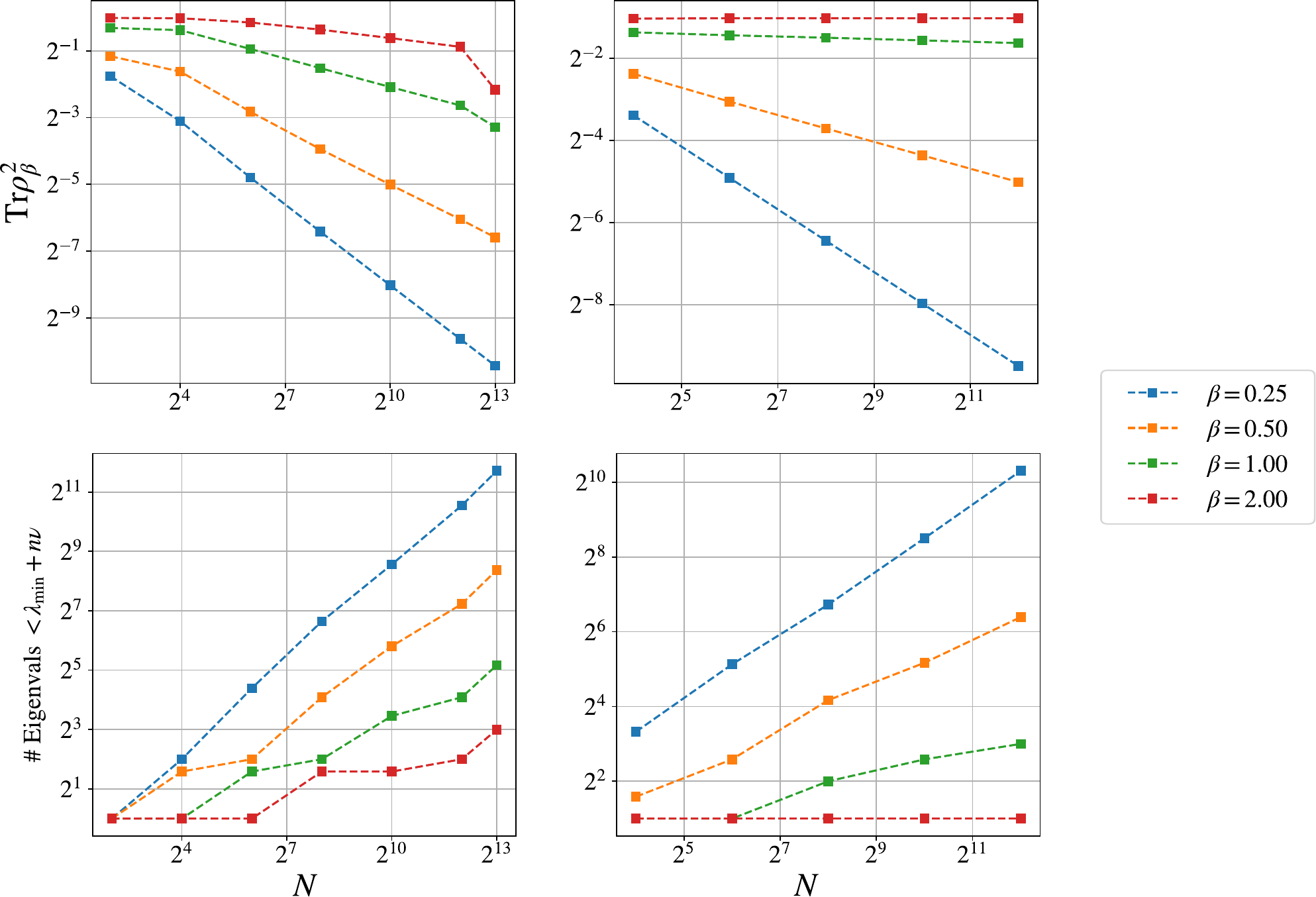}
 \caption{Scaling of the purity (top row) and number of eigenvalues (bottom row) within a constant $\nu$ of the re-scaled ground state energy $\lambda_{\mathrm{min}}$ for the XXZ Heisenberg model (left column) and the spinless two-dimensional Hubbard model (right column).} 
\label{fig:verifpur}
 \end{figure}

\subsection{Free energy argument for exponentially vanishing purity.}
\label{sec:free-energy}

As an alternative to the arguments above, here we present another argument for a vanishing purity by writing it in the terms of the free energy. The free energy of the Gibbs state is defined to be 
\begin{equation}
    F_\beta = -\frac{\ln{\tr [e^{-\beta H}]}}{\beta},
\end{equation} for some constant inverse temperature $\beta$. Then the purity can be written as
\begin{equation}
   \tr[\sigma_\beta^2] =\frac{\tr[e^{-2\beta H}]}{(\tr[e^{- H}])^2} = e^{-2\beta(F_{2\beta } -F_\beta)}.
\end{equation}
Since $F_{2\beta}>F_\beta$, the right hand side vanishes if $F_\beta \propto n$. In other words, the purity vanishes exponentially with system size if the free energy of the Hamiltonian is an extensive quantity (grows with the system size). As argued in \cite{Sugiura2013}, many physical (thermodynamic) systems have extensive free energy. Hence, we expect this argument to be applicable to a large class of Hamiltonians. 

\section{Proof of Theorem~\ref{thm:tpq_complexity}}
\label{app:tpq_complexity}

In this appendix, we prove the gate complexity of the approximate circuit implementation of the TPQ states used in the main text. We first upper bound the error between an expectation value of the exact TPQ state, $\ket{\psi_\tau}$, and the approximate TPQ state $\ket{\tilde{\psi}_\tau}$. Then we bound the success probability of preparing $\ket{\tilde{\psi}_\tau}$. By combining these results we arrive at Theorem~\ref{thm:tpq_complexity} in the main text.

%-----------------------------%
\subsection{Bounding the error in the expectation values}
The circuit implementation of the TPQ state is given by 
\begin{equation}
    \ket{\tilde{\psi}_\tau} 
    =
    (\bra{0^r}\otimes I)
    \big( U_{P^\mathrm{exp}_{\beta^\tau}(K^\tau)/2} U \ket{0^n}\big)
    (\ket{0^r}\otimes I) 
    =
    \frac{ P^\mathrm{exp}_{\beta^\tau/2}(K^\tau) U\ket{0}}{\sqrt{\bra{0}U^\dag P^\mathrm{exp}_{\beta^\tau/2}(K^\tau)^2 U\ket{0}}},
\end{equation}
where $U_{P^\mathrm{exp}_{\beta^\tau}(K^\tau)/2}$ is the block encoding of the matrix exponential obtained with QET (Lemma~\ref{lem:qet_matrixexp}) acting on the $n$-qubit system and $r$-qubit ancillary registers. $U$ is a unitary $k\geq2$-design, e.g. a random Clifford circuit ($k=3$).

We now derive an upper bound on the QET polynomial approximation error,
\begin{equation}
\label{eq:exp_diff}
\begin{aligned}
    &\big|\bra{\psi_{\tau}} T_j \ket{\psi_{\tau}} 
    - \bra{\tilde{\psi}_{\tau}} T_j \ket{\tilde{\psi}_{\tau}}\big|
    \\
    &=\left| 
    \frac{\bra{0}U^\dag 
    e^{-\frac{\beta^\tau}{2} (K^\tau+I)}  T_j e^{-\frac{\beta^\tau}{2} (K^\tau+I)}U\ket{0}}{\bra{0}U^\dag e^{-\beta^\tau (K^\tau+I)} U\ket{0}} 
    - 
    \frac{\bra{0}P^\mathrm{exp}_{\beta^\tau/2}(K^\tau) T_j P^\mathrm{exp}_{\beta^\tau/2}(K^\tau) U\ket{0}}{\bra{0}U^\dag P^\mathrm{exp}_{\beta^\tau/2}(K^\tau)^2 U\ket{0}} 
    \right|,
\end{aligned}
\end{equation} 
of the TPQ expectation value of some bounded operator $\lVert T_j\rVert\leq1$. Recall that we wish to make this error $\mathcal{O}(\xi)$ since the exact TPQ state expectation value is within $\xi$ from the corresponding Gibbs state expectation value. We can then determine the parameter $\mu$ in the degree of the polynomial in Lemma \ref{lem:qet_matrixexp} in the main text. 

Let us drop the subscript $\tau$ for the moment and define,
\begin{equation}
\label{eq:pdef}
\begin{array}{ll}
    q
    :=\bra{0}U^\dag e^{-\beta (K+I)/2}  T_j e^{-\beta (K+I)/2}U\ket{0},
    \quad
    &
    \tilde{q} 
    :=\bra{0}U^\dag P^\mathrm{exp}_{\beta/2}(K) T_j P^\mathrm{exp}_{\beta/2}(K) U\ket{0},
    \\
    p
    :=\bra{0}U^\dag e^{-\beta(K+I)}  U\ket{0},
    &
    \tilde{p}
    :=\bra{0}U^\dag P^\mathrm{exp}_{\beta/2}(K)^2 U\ket{0}.
\end{array}
\end{equation}
Then we bound 
\begin{align}
\label{eq:bound_numerator}
    &|q-\tilde{q}|
    \nonumber\\
    &\le 
    |\bra{0}U^\dag e^{-\frac{\beta}{2}(K+I)}T_j (e^{-\frac{\beta}{2}(K+I)}-P^\mathrm{exp}_{\beta/2})U\ket{0}|
    +
    |\bra{0}U^\dag (e^{-\frac{\beta}{2}(K+I)}-{P^\mathrm{exp}_{\beta/2}}^\dag)T_j P^\mathrm{exp}_{\beta/2} U\ket{0}|
    \nonumber\\
    &\le 
    \sqrt{\bra{0}U^\dag e^{-\frac{\beta}{2}(K+I)} U\ket{0}} 
    \sqrt{\bra{0}U^\dag
    (e^{-\frac{\beta}{2}(K+I)}-{P^\mathrm{exp}_{\beta/2}}^\dag)
    T_j^\dag T_j
    (e^{-\frac{\beta}{2}(K+I)}-P^\mathrm{exp}_{\beta/2})U\ket{0}}
    \nonumber\\
    & + 
    \sqrt{\bra{0}U^\dag {P^\mathrm{exp}_{\beta/2}}^\dag P^\mathrm{exp}_{\beta/2} U\ket{0}} 
    \sqrt{\bra{0}U^\dag(e^{-\frac{\beta}{2}(K+I)}-{P^\mathrm{exp}_{\beta/2}}^\dag)
    T_j^\dag T_j
    (e^{-\frac{\beta}{2}(K+I)}-P^\mathrm{exp}_{\beta/2})U\ket{0}}
    \nonumber\\
    &\overset{\eqref{eq:QET_error}}{\le} 
    \mu(\sqrt{p} +\sqrt{\tilde{p}})\lVert T_j\rVert
    \le 2\mu,
\end{align}
using the triangle and Cauchy-Schwarz inequalities and inserting Lemma~\ref{lem:qet_matrixexp}. 
Similarly, the difference between $p$ and $\tilde{p}$ is bounded as,
\begin{equation}
\label{eq:bound_denominator}
\begin{aligned}
    &|p-\tilde{p}|
    \le \mu(\sqrt{p} +\sqrt{\tilde{p}})
    \le 2\mu.
\end{aligned}
\end{equation}
With these inequalities, Eq.~\eqref{eq:exp_diff} is reduced to,
\begin{equation}
\label{eq:error_expectaion}
\begin{aligned}
    &\big|\bra{\psi_{\tau}} T_j \ket{\psi_{\tau}} 
    - \bra{\tilde{\psi}_{\tau}} T_j \ket{\tilde{\psi}_{\tau}}\big|
    =\left|\frac{q}{p}-\frac{\tilde{q}}{\tilde{p}}\right| 
    \leq \left|\frac{q\tilde{p}-\tilde{q} \tilde{p}}{\tilde{p}p}\right|
    +\left|\frac{\tilde{q}\tilde{p}-\tilde{q}p}{\tilde{p}p}\right|
    \\
    &\overset{\eqref{eq:bound_numerator},\eqref{eq:bound_denominator}}{\leq}
    \frac{2\mu(\tilde{q}+\tilde{p})}{p\tilde{p}}
    \leq\frac{4\mu}{p}.
\end{aligned}
\end{equation}
In the next subsection [see Eq.~\eqref{eq:p_probability_explicit}], we show that $p\leq \frac{2^{-n}e^{-1/2}}{2}$ with probability $4\tr[\rho_{\tau}^2]$. Hence, if we take $\mu=\frac{\xi}{8}\frac{2^{-n}e^{-1/2}}{2}$, then $\big|\bra{\psi_{\tau}} T_j \ket{\psi_{\tau}} - \bra{\tilde{\psi}_{\tau}} T_j \ket{\tilde{\psi}_{\tau}}\big|<\xi/2$ with probability at least $1-4\tr[\rho_{\tau}^2]$. Combining with the mean-squared error of the exact TPQ state $\ket{\psi_\tau}$ [Eq.~\eqref{eq:sdptpq}] in the main text, we can then probabilistically bound the QSP approximation error 
\begin{equation}
\label{eq:purethermalstate_tilde}
\begin{aligned}
    &\mathrm{Pr}\big[|\bra{\tilde{\psi}_\tau}T_j\ket{\tilde{\psi}_\tau} 
    - \tr[T_j\sigma_\tau]|\ge \xi\big]
    \\
    &\leq \mathrm{Pr}\big[|\bra{\tilde{\psi}_\tau}T_j\ket{\tilde{\psi}_\tau} 
    - \bra{\psi_\tau}T_j\ket{\psi_\tau}| 
    +|\bra{\psi_\tau}T_j\ket{\psi_\tau} 
    - \tr[T_j\rho_\tau]|
    \ge \xi\big]
    \\
    &\leq \mathrm{Pr}\big[|\bra{\tilde{\psi}_\tau}T_j\ket{\tilde{\psi}_\tau} 
    - \bra{\psi_\tau}T_j\ket{\psi_\tau}| 
    \ge \xi/2\big]+\mathrm{Pr}[|\bra{\psi_\tau}T_j\ket{\psi_\tau} 
    - \tr[T_j\sigma_\tau]|
    \ge \xi/2\big]
    \\
    &\overset{\eqref{eq:p_probability_explicit}, \eqref{eq:sdptpq}}{\leq} 4\tr[\rho_{\tau}^2]
    + \frac{\frac{105}{2}(\tr[\rho_{\tau}^2])^{1/2}
    + \calO\big[(\tr[\rho_{\tau}^2])^{2/3})\big]}{\xi^2}.
\end{aligned}
\end{equation} 
Thus, $\ket{\tilde{\psi}_\tau}$ is a TPQ state when the spectral condition (Prop.~\ref{prop:sdp_pur_cond}) is satisfied, i.e., when $\tr[\rho_\tau^2]=\mathcal{O}(e^{-\alpha n})$. 

%-----------------------------%
\subsection{Lower bounding the success probability and circuit complexity}
\label{app:prob_lower_bound}

We first lower bound the success probability of the QET circuit implementation of the TPQ state. By combining this with amplitude amplification and Eq.~\eqref{eq:purethermalstate_tilde}, we prove Theorem~\ref{thm:tpq_complexity} and Corollary~\ref{cor:tpqcomplex} for the circuit complexity in the main text. 

The probability of successfully preparing $\ket{\tilde{\psi}_\tau}$ is given by
\begin{equation}
\begin{aligned}
    p_{\mathrm{exp}} 
    &=
    \frac{1}{4}\bra{0}U^\dag P^\mathrm{exp}_{\beta/2}(K^\tau)^2 U\ket{0} 
    =
    \frac{\tilde{p}}{4},
\end{aligned} 
\end{equation} with $\tilde{p}$ from Eq.~\eqref{eq:pdef}. Using  Eq.~\eqref{eq:bound_denominator}, we get $\tilde{p} \geq p-2\mu$.
Therefore, lower bounding $p_{\mathrm{exp}}$ is reduced to lower bounding $p$. Since $p$ depends on the random circuit $U$, we bound $p$ probabilistically using Chebyshev's inequality. 

Averaging $p$ over the unitary $k$-design gives 
\begin{equation}
     \mathbb{E}_U\big[p\big]=\mathbb{E}_U\big[\bra{0}U^\dagger e^{-\beta^\tau( H^\tau-\Xi^\tau I)} U\ket{0}\big] 
    = \frac{\tr[e^{-\beta^\tau( H^\tau-\Xi^\tau)}]}{2^n},
\end{equation}
with variance
\begin{equation}
\begin{aligned}
    \mathrm{Var}_U[p]
    &= \mathrm{Var}_U\big[\bra{0}U^\dagger e^{-\beta^\tau( H^\tau-\Xi^\tau)} U\ket{0}\big] 
    \\
    &= \frac{2^n\tr[e^{-2\beta^\tau( H^\tau-\Xi^\tau)}]
    -(\tr[e^{-\beta^\tau( H^\tau-\Xi^\tau)}])^2}{2^{2n}(2^n+1)}.
\end{aligned}
\end{equation}
By plugging these equations into Chebyshev's inequality we find
\begin{align}
\label{eq:pcheby}
\begin{split}
    &\mathrm{Pr}\big[p\leq \mathbb{E}_U[p](1-\delta)\big]
    \leq 
    \frac{\mathrm{Var}_U [p]}{\delta^2 (\mathbb{E}_U[p])^2} 
    \\
    &\Longleftrightarrow
    \mathrm{Pr}\left[p\leq \frac{\tr[e^{-\beta^\tau( H^\tau-\Xi^\tau)}]}{2^n} \left(1-\delta\right)\right]
    \leq 
    \frac{\tr[\rho_\tau^2]}{\delta^2}.
\end{split}
\end{align}
Using the fact that
\begin{equation}
\label{eq:succeseig}
    \tr[e^{-\beta^\tau( H^\tau-\Xi^\tau)}] = e^{-\beta^\tau(\lambda_{\mathrm{min}}^\tau-\Xi^\tau)}\tr[e^{-\beta^\tau( H^\tau-\lambda_{\mathrm{min}}^\tau)}] \geq e^{-1/2},
\end{equation} 
since $\lambda_{\mathrm{min}}^\tau-\Xi^\tau\leq\frac{1}{2\beta^\tau}$. 
Setting $\delta=1/2$, we arrive at
\begin{equation}
\label{eq:p_probability_explicit}
    \mathrm{Pr}\left[p\leq  2^{-n}\frac{e^{-1/2}}{2} \right]
    \leq 
    4\tr[\rho_\tau^2].
\end{equation}

Under our vanishing purity assumption, $\tr[\rho_\tau^2]=e^{-\alpha n}$, using $\mu=\frac{\xi}{8}\frac{2^{-n}e^{-1/2}}{2}$, the success probability, $p_{\mathrm{exp}}\geq p-2\mu$, is larger than $\mathcal{O}(2^{-n})$ with probability $1-4e^{-\alpha n}$. We can boost the success probability to $\mathcal{O}(1)$ with amplitude amplification using the number of rounds $\mathcal{O}(p_{\mathrm{exp}}^{-1/2})=\mathcal{O}(2^{n/2})=\mathcal{O}(N^{1/2})$, which succeeds with probability $1-\mathcal{O}(e^{-\alpha n})$~\cite{Martyn2021grand}.
We call the operator which implements $P^\mathrm{exp}_{\beta^\tau}(K^\tau)$ with $\calO(p_{\mathrm{exp}}^{-1/2})$ rounds of amplitude amplification $V_{\beta^\tau}$. 
The new (redefined) circuit implementation of the TPQ state is therefore given by 
\begin{equation}
    \ket{\tilde{\psi}_\tau} = V_{\beta^\tau}U\ket{0},
\end{equation} 
which prepares the TPQ state $\ket{\psi_\tau}$ to high enough accuracy with probability $\mathcal{O}(1)$. Combining this with $\beta^\tau \leq \frac{2 \log{N}}{\epsilon}$,  Lemma~\ref{lem:qet_matrixexp}, and 
$\mu=\frac{\xi 2^{-n}e^{-1/2}}{16}$, we arrive at
\begin{align}
    \calO(p_{\mathrm{exp}}^{-1/2}) 
    \times \calO\left(\sqrt{\beta^\tau}\log{\frac{1}{\mu}}\right)
    \times \mathcal{T}_{K^\tau} 
    = 
    \mathcal{O}\left(N^{1/2}\left(\frac{\log{N}}{\epsilon}\right)^{1/2}\log\left(\frac{N}{\xi}\right) 
    \cdot \mathcal{T}_{K^\tau}\right).
\end{align} 
for the circuit complexity of $V_{\beta^\tau}$.
Here $\mathcal{T}_{K^\tau}$ is the depth of the block encoding, which is assumed to be $\mathcal{O}(\log{N})$. This proves Theorem~\ref{thm:tpq_complexity} in the main text.

If we additionally assume the spectral condition~(Lemma \ref{prop:sdp_pur_cond}) is satisfied we can obtain a tighter bound. First, we write
\begin{equation}
     e^{-\beta^\tau(\lambda_{\mathrm{min}}^\tau-\Xi^\tau)}\tr[e^{-\beta^\tau( H^\tau-\lambda_{\mathrm{min}}^\tau)}] \geq e^{-\beta^\tau(\lambda_{\mathrm{min}}^\tau-\Xi^\tau)} e^{-\beta^\tau n \nu}\tr[\Pi_\nu],
\end{equation} 
where $\Pi_\nu$ is a projector on the eigenstates of $H^\tau$ with eigenvalues in the range $[\lambda_{\mathrm{min}} +n\nu]$ [compare Eq.~\eqref{eq:22}]. 
Using the spectral condition $\tr[\Pi_\nu]=c 2^n$ for $0\leq\nu\leq \ln{2}/(2\beta^\tau)$, and the fact that $\lambda_{\mathrm{min}}^\tau-\Xi^\tau\leq\frac{1}{2\beta^\tau}$, we get 
\begin{equation}
\label{eq:pbound}
    e^{-\beta^\tau(\lambda_{\mathrm{min}}^\tau-\Xi^\tau)}\tr[e^{-\beta^\tau( H^\tau-\lambda_{\mathrm{min}}^\tau)}] \geq ce^{-1/2}2^{-n/2}2^n, 
\end{equation} 
for $\nu = \ln 2/2\beta^\tau$. Substituting this into Eq.~\eqref{eq:pcheby} with $\delta=1/2$ gives 
\begin{equation}
\label{eq:p_probability_explicit_spectral}
    \mathrm{Pr}\left[p\leq \frac{c  2^{-n/2}}{2\sqrt{e}}\right]
    \leq
    4\tr[\rho_\tau^2].
\end{equation} 
Thus, $p_{\mathrm{exp}}$ is larger than $\mathcal{O}(2^{-n/2})$ with probability at least $1-\calO(e^{-\alpha n})$. The complexity of $V_{\beta^\tau}$ in the presence of the spectral condition is therefore $ \tilde{\mathcal{O}}\big(\frac{N^{1/4}}{\epsilon^{1/2}}\big)$, which proves Corollary~\ref{cor:tpqcomplex} in the main text. 

%-----------------------------------%
\subsection{Alternative implementation of the matrix exponential with QET}
\label{app:alternative_QETexp}

In the main text, we assumed we have access to a block encoding of the shifted Hamiltonian $K^\tau = H^\tau-(1+\Xi^\tau)I$, which has a spectrum in the range $[-1, 1]$. In practice, it is unclear how one can construct such a block encoding when given access to a block encoding for $H^\tau$. For example, an LCU block encoding requires sub-normalization which implies that the lowest eigenvalue of $K^\tau$ is not necessarily close to $-1$. This negatively affects the success probability~\cite{vanapeldoorn2019,vanApeldoorn2020quantumsdpsolvers}. Here we show an alternative approach for implementing $e^{-\beta^\tau H^\tau}$ with QET, which does not have this issue.  

Instead of using $K^\tau$, we introduce a rescaled Hamiltonian 
\begin{equation}
\label{eq:block_encoding_hamiltonian2}
    \tilde{H}^\tau = \left(1-\frac{1}{4\beta^T}\right)\frac{H^\tau-\Xi^\tau I }{2} + \frac{1}{4\beta^T}, 
\end{equation} 
which has all its eigenvalues in the interval $[1/(4\beta^T),1]$.
Recall that $\Xi$ is chosen so that $\lambda_\text{min}(H^\tau)-\Xi\le 1/(2\beta^T)$ as discussed in Section~\ref{subsec:prepTPQ}. $\tilde{H}^\tau$ can be block encoded with LCU without shifting the smallest eigenvalue. In order to construct a polynomial approximation to $e^{-\beta^\tau \tilde{H}^\tau}$ we use fixed parity QET. 

\begin{lemma}[Quantum eigenvalue transformation of fixed parity, restatement of Corollary 18 in Ref.~\cite{Gilyen2019}]\label{lem:QETfixedparity}
    Let $U_A$ be a block encoding of a Hermitian matrix $A$, which requires $a$ ancillary qubits. Let $P\in \mathbb{R}[x]$ be a real parity-$(d$ mod $2)$ polynomial of degree $d$ such that $\max_{x\in[-1,1]}|P(x)|\le 1$. Then there exists a block encoding $U^{\vec{\phi}}_\mathrm{QET}$ of $P(A)$ that uses $d$ queries to $U_A$ and $U_A^\dag$, a single application of controlled-$U_A$, $a+1$ ancillary qubits, and $\mathcal{O}((a+1)d)$ other elementary gates. Moreover, a set of $d+1$ angles $\vec{\phi}=\{\phi_0,\dots,\phi_{d}\}$ parameterising $U^{\vec{\phi}}_\mathrm{QET}$ can be computed classically efficiently.
\end{lemma}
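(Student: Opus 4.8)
The plan is to derive the statement from the quantum signal processing (QSP)/qubitization machinery that underlies the indefinite-parity result already recorded in Lemma~\ref{lem:QET}; in fact the fixed-parity version is the more primitive building block, and the factor-of-two saving in query count ($d$ rather than $2d$) is exactly what one gains by \emph{not} having to form a linear combination of an even and an odd polynomial. First I would diagonalize the Hermitian matrix $A=\sum_\lambda \lambda\,\ket{\lambda}\bra{\lambda}$ and use the qubitization structure of its block encoding: together with the reflection $2\Pi-I$, where $\Pi:=I\otimes\ket{0^a}\bra{0^a}$, the operator $U_A$ leaves invariant each two-dimensional subspace $\mathrm{span}\{\ket{\lambda}\ket{0^a},\ket{\perp_\lambda}\}$, on which it acts as a rotation through the angle $\arccos\lambda$. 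This is the standard reduction that turns a matrix function into a single-qubit problem.

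Next I would define the QET circuit as the alternating phase-modulation sequence
\begin{equation}
U^{\vec\phi}_\mathrm{QET} = \Phi_{\phi_0}\, W_1\, \Phi_{\phi_1}\, W_2\,\cdots\, W_d\, \Phi_{\phi_d},
\qquad
\Phi_\phi := e^{i\phi(2\Pi-I)},
\end{equation}
where each $W_k$ alternates between $U_A$ and $U_A^\dagger$ according to the parity of $k$, and each projector-controlled phase $\Phi_{\phi_k}$ is realized with one extra ancilla qubit (so $a+1$ ancillas in total) and $\mathcal{O}(a+1)$ elementary gates, with a single controlled-$U_A$ used to synchronize the reflections coherently. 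Restricting this circuit to one invariant subspace collapses it exactly to a length-$d$ single-qubit QSP sequence in the signal $\lambda$, so the $(0^a,0^a)$-block of $U^{\vec\phi}_\mathrm{QET}$ equals $p(A)$ for the polynomial $p$ produced by those phases.

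The core step is then to invoke the QSP characterization (completeness) theorem: for any real $P$ of degree $d$, definite parity $d\bmod 2$, and $\max_{x\in[-1,1]}|P(x)|\le 1$, there exist phases $\vec\phi=\{\phi_0,\dots,\phi_d\}$ realizing $p=P$ in the block. This is precisely Corollary~18 of Ref.~\cite{Gilyen2019}, so I would reproduce only the reduction to the single-qubit statement and cite that reference for the achievability half. The resource count---$d$ calls to $U_A/U_A^\dagger$, one controlled-$U_A$, $a+1$ ancillas, and $\mathcal{O}((a+1)d)$ other gates---then follows by inspecting the circuit, and the $d+1$ angles match the $d+1$ phase factors.

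The hard part is the completeness direction and the efficient classical computation of $\vec\phi$. Completeness is the nontrivial trigonometric-polynomial/Fej\'er--Riesz argument at the heart of QSP, while efficient angle-finding relies on the numerically stable algorithms developed in and after Ref.~\cite{Gilyen2019}. Since the claim is an exact restatement of their Corollary~18, I would delegate both to the cited work rather than re-derive them, and confine the new content to the bookkeeping---parity conventions, ancilla count, and gate count---needed to match the form used elsewhere in this appendix.
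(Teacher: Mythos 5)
Your proposal is correct and matches the paper's treatment: the paper does not prove this lemma at all but imports it verbatim as Corollary~18 of Ref.~\cite{Gilyen2019}, and your sketch is the standard qubitization-plus-QSP derivation with the genuinely hard steps (completeness of the phase characterization and efficient angle finding) correctly delegated to that same reference. One minor imprecision: the single controlled-$U_A$ and the extra ancilla (giving $a+1$ in total) arise not from ``synchronizing the reflections'' but from taking the real part of the complex QSP polynomial, i.e.\ coherently averaging the phase sequences $\vec{\phi}$ and $-\vec{\phi}$ to realize $(P+P^*)/2$; this does not affect the resource counts you state.
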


Since $e^{-\beta x} = e^{-\beta |x|}\quad\forall x\in[0,1]$, and $e^{-\beta |x|}$ has even parity, we propose a polynomial approximation to $e^{-\beta |x|}$ which combined with Lemma~\ref{lem:QETfixedparity} provides a circuit implementation of an approximation to $e^{-\beta^\tau \tilde{H}^\tau}$. 
Let $P^{\mathrm{exp}}_{\beta/2}(x)$ be the polynomial approximation of $e^{-\frac{\beta}{2}(x+1)}$ from Lemma~\ref{lemma:exp_poly_approx} in the main text, and let  $P^\mathrm{sgn}_{\zeta,\Delta}$ be a polynomial approximation to the sign function defined in the following Lemma. 

\begin{lemma}[Polynomial approximation to sign function~\cite{Gilyen2019}]
\label{lemma:sign_poly_approx}
Let $0<\zeta<1$ and $0<\Delta<1$. There exists a polynomial $P^\mathrm{sgn}_{\zeta,\Delta}$ of parity odd and degree $\mathcal{O}(\frac{1}{\Delta}\log\frac{1}{\zeta})$, such that
\begin{enumerate}
    \item $|P^\mathrm{sgn}_{\zeta,\Delta}(x)|\leq 1$ for $x\in[-1,1]$.
    \item $|\mathrm{sgn}(x)-P^\mathrm{sgn}_{\zeta,\Delta}(x)|\leq\zeta$ for $x\in[-1,1]\backslash(-\frac{\Delta}{2},\frac{\Delta}{2})$.
\end{enumerate}
where $\mathrm{sgn}(x)$ is the sign function, i.e., $\mathrm{sgn}(x)=-1~(x<0)$, $0~(x=0)$, and $1~(x>0)$.
\end{lemma}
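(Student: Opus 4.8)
The plan is to build the required odd polynomial in two stages: first replace the discontinuous $\mathrm{sgn}$ by a smooth odd surrogate whose error is controlled outside the gap $(-\Delta/2,\Delta/2)$, and then approximate that surrogate by a polynomial of the stated degree. The natural surrogate is the scaled error function $\mathrm{erf}(kx)=\frac{2}{\sqrt{\pi}}\int_0^{kx}e^{-t^2}\,dt$, which is odd, bounded by $1$ in magnitude on all of $\mathbb{R}$, and saturates to $\pm1$ at a Gaussian rate.

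First I would fix the scale $k$. Using the tail bound $1-\mathrm{erf}(y)\le e^{-y^2}$ for $y\ge 0$, one gets $|\mathrm{sgn}(x)-\mathrm{erf}(kx)|\le e^{-(k\Delta/2)^2}$ for every $x$ with $|x|\ge\Delta/2$. Choosing $k=\Theta\big(\tfrac{1}{\Delta}\sqrt{\log\tfrac{1}{\zeta}}\big)$ makes this at most $\zeta/2$ on $[-1,1]\setminus(-\tfrac{\Delta}{2},\tfrac{\Delta}{2})$, which supplies one half of condition~2 of the lemma; the bound $|\mathrm{erf}(kx)|\le 1$ handles condition~1 up to a final rescaling.

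The second and main stage is to $\tfrac{\zeta}{2}$-approximate $\mathrm{erf}(kx)$ on $[-1,1]$ by a polynomial. I would write $\mathrm{erf}(kx)=\frac{2k}{\sqrt{\pi}}\int_0^{x}e^{-k^2 s^2}\,ds$ and approximate the Gaussian integrand. Since the substitution $u=s^2$ turns $e^{-k^2 s^2}$ into $e^{-k^2 u}$, an exponential of a \emph{linear} argument, I can invoke the already-available Lemma~\ref{lemma:exp_poly_approx} to obtain a polynomial in $u=s^2$, hence an even polynomial in $s$, that approximates $e^{-k^2 s^2}$ uniformly; integrating this polynomial term by term yields an odd polynomial $\tilde P$ approximating $\mathrm{erf}(kx)$. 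Parity is preserved throughout (even integrand, odd antiderivative), and a final multiplication by $(1-\zeta)$ restores $|\tilde P(x)|\le 1$ on $[-1,1]$ while only inflating the error by $O(\zeta)$. Combining the two stages by the triangle inequality then gives both properties with $P^{\mathrm{sgn}}_{\zeta,\Delta}:=(1-\zeta)\tilde P$.

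The hard part is the degree accounting. The crude route above produces a Gaussian-integrand polynomial of degree $O\!\big(k\log\tfrac{1}{\zeta}\big)$ and hence a sign approximation of degree $O\!\big(\tfrac{1}{\Delta}(\log\tfrac{1}{\zeta})^{3/2}\big)$, a $\sqrt{\log(1/\zeta)}$ factor worse than the claimed $O\!\big(\tfrac{1}{\Delta}\log\tfrac{1}{\zeta}\big)$. Recovering the sharper bound requires exploiting that $\mathrm{erf}(kx)$ is entire of order two rather than merely analytic in a fixed strip: its Chebyshev coefficients decay faster than any geometric rate, so a truncated Chebyshev expansion (equivalently, a tighter tail estimate on the Gaussian/Jacobi--Anger coefficients) attains error $\zeta$ already at degree $O\!\big(k\sqrt{\log(1/\zeta)}\big)=O\!\big(\tfrac{1}{\Delta}\log\tfrac{1}{\zeta}\big)$. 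This sharp coefficient-decay estimate for the error function is the one genuinely delicate step; the rest is bookkeeping with the triangle inequality and the rescaling that enforces $|P|\le1$.
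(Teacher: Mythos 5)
The paper does not actually prove this lemma --- it is imported verbatim from Gily\'en et al.~\cite{Gilyen2019} --- so the comparison here is against the standard proof in that reference (which in turn rests on Low and Chuang's ``uniform spectral amplification'' construction). Your route is exactly that standard one: replace $\mathrm{sgn}$ by $\mathrm{erf}(kx)$ with $k=\Theta\big(\tfrac{1}{\Delta}\sqrt{\log\tfrac{1}{\zeta}}\big)$ using $\mathrm{erfc}(y)\le e^{-y^2}$, then uniformly approximate $\mathrm{erf}(kx)$ on $[-1,1]$ by an odd polynomial and rescale to enforce the sup-norm bound. The first stage and the parity/normalization bookkeeping are fine.

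The gap is that the entire quantitative content of the lemma sits in the step you defer. What you actually carry out --- substituting $u=s^2$ and invoking Lemma~\ref{lemma:exp_poly_approx} for $e^{-k^2u}$ --- yields, as you note, only degree $\mathcal{O}\big(\tfrac{1}{\Delta}(\log\tfrac{1}{\zeta})^{3/2}\big)$ (in fact $\mathcal{O}\big(k\log\tfrac{k}{\zeta}\big)$ once the factor $\tfrac{2k}{\sqrt{\pi}}$ in front of the integral is propagated into the tolerance for the integrand). The claimed degree $\mathcal{O}\big(\tfrac{1}{\Delta}\log\tfrac{1}{\zeta}\big)$ requires the sharper statement that $\mathrm{erf}(kx)$ admits a $\zeta$-uniform odd polynomial approximant on $[-1,1]$ of degree $\mathcal{O}\big(\sqrt{(k^2+\log\tfrac{1}{\zeta})\log\tfrac{1}{\zeta}}\big)$, which is not a generic consequence of ``entire of order two'' but is proved in the cited works by expanding the Gaussian in Chebyshev polynomials, $e^{-k^2x^2}=e^{-k^2/2}\big[I_0(k^2/2)+2\sum_{j\ge1}(-1)^jI_j(k^2/2)T_{2j}(x)\big]$, bounding the Bessel-coefficient tail, and integrating the truncation term by term. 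As written, your argument establishes the lemma only with the weaker $(\log\tfrac{1}{\zeta})^{3/2}$ degree; to match the statement you must supply (or explicitly cite, as the paper does) that coefficient-decay estimate rather than assert it.
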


Then, the parity-even polynomial $Q_\beta (x) := P_{\beta/2}(2P^\mathrm{sgn}_{\zeta,\Delta}(x)x-1)$ satisfies
\begin{equation}
\begin{aligned}
    &\max_{x\in[\frac{\Delta}{2},1]} |Q_\beta (x)-e^{-\beta|x|}|
    \\
    &\leq \max_{x\in[\frac{\Delta}{2},1]} |Q_\beta (x)-e^{-\beta P^\mathrm{sgn}_{\zeta,\Delta}(x) x}|
    +\max_{x\in[\frac{\Delta}{2},1]} |e^{-\beta P^\mathrm{sgn}_{\zeta,\Delta}(x) x}-e^{-\beta|x|}|
    \\
    &\leq\mu+(1-e^{-\beta\zeta}),
\end{aligned}
\end{equation} 
where in the second inequality we have used Lemmas~\ref{lemma:exp_poly_approx} and \ref{lemma:sign_poly_approx} to bound the errors. 
Setting $\zeta = \frac{1}{\beta}\ln(\frac{1}{1-\mu})$,  we arrive at $\max_{x\in[\frac{\Delta}{2},1]} |Q_\beta (x)-e^{-\beta|x|}|\leq 2\mu$. 

Note that the interval on which this inequality holds is $[\Delta/2, 1]$. The smallest eigenvalue of $\tilde{H}^\tau$ must be smaller than $\mathcal{O}(1/2\beta^\tau)$ for a finite success probability [recall Eq.~\eqref{eq:succeseig}], which is indeed the case for the Hamiltonian given in Eq.~\eqref{eq:block_encoding_hamiltonian2}.
Since the spectral range of $\tilde{H}^\tau$ is in $[1/(4\beta^T),1]$, we set $\Delta=1/(4\beta^T)$.
The polynomial transformation of this Hamiltonian satisfies
\begin{equation}
\label{eq:poly_approx}
    \big\lVert Q_{\tilde{\beta}^\tau} (\tilde{H}^\tau)-e^{-\tilde{\beta}^\tau \tilde{H}^\tau} \big\rVert 
    \leq 2\mu,
\end{equation}
where $\tilde{\beta}^\tau = \frac{\beta^\tau}{1-1/(4\beta^T)}$, and it can be implemented with the query complexity at most
\begin{align}
\label{eq:poly_degree2}
    \mathcal{O}\left((\beta^\tau)^{3/2}\left(\log\frac{1}{\mu}\right)\log\left(\frac{\beta^\tau}{\mu}\right)\right).
\end{align}

Compared to the degree of the polynomial of $K^\tau$ in Lemma~\ref{lem:qet_matrixexp}, we observe that Eq.~\eqref{eq:poly_degree2} has a higher power for $\beta^\tau$. This means that the final TPQ state circuit complexity in Theorem~\ref{thm:tpq_complexity} in the main text would have  dependence $\epsilon^{-3/2}$. This is slightly worse but avoids the potential sub-normalization problem of the block encoding $K^\tau$. In addition, our construction is fully explicit so we can combine it with LCU to obtain the resource estimates in Table~\ref{tbl:resources}.

%%%%%%%%%%%%%%%%%%%%%%%%%%%%%%%%%%%%%%%%%%%%%%%
\section{OR Lemma and Broken Constraint Check}

%---------------------------%
\subsection{Proof of quantum OR lemma}
\label{app:orlemma}
The test stated in the quantum OR Lemma (Lemma~\ref{lemma:fastOR_QET}) can be implemented as follows~\cite{Gilyen2019}. 
For a projector $\Pi$ ($\Pi^2=\Pi$), the projector-controlled-NOT operator in the Lemma is given by 
\begin{equation}
    \mathrm{C}_{\Pi}\mathrm{NOT} 
    = \Pi \otimes X + (I-\Pi) \otimes I,
\end{equation} 
and $V:=\sum_{j=1}^{k}\mathrm{C}_{\Pi_j}\mathrm{NOT} \otimes \ket{j}\bra{j}$.
Then
\begin{align}
\label{eq:or_proj}
    (I\otimes \bra{0}\otimes\bra{0^{\log k}}H^{\otimes \log k})V(I\otimes\ket{0}\otimes H^{\otimes \log k}\ket{0^{\log k}})
    = \frac{1}{k}\sum_{j=1}^{k}(I-\Pi_j),
\end{align}
where $H$ is the Hadamard gate.
For $\lambda:=\frac{1-\upsilon}{2k}$ it was proved in~\cite{Harrow2017} that $\tr[\eta\Pi_{\le 1-\lambda}]\ge(1-\upsilon)^2/4$ in case $(i)$ and $\tr[\eta\Pi_{\le 1-4\lambda/5}] \le 5k\phi$ in case $(ii)$. Here, $\Pi_{\le y}$ for $y\in\mathbb{R}$ is a projector onto the space spanned by the eigenvectors of Eq.~\eqref{eq:or_proj} with eigenvalues less than $y$. The discrimination of eigenvalues larger than $1-\lambda$ or smaller than $1-4\lambda/5$ can be done with the eigenvalue discrimination (Lemma~\ref{lemma:eigenvalue_disc}).
Therefore, the eigenvalue discrimination with $b=1-4\lambda/5$, $a=1-\lambda$, and $\delta'=\zeta$ in Lemma~\ref{lemma:eigenvalue_disc} leads to the stated complexity in Lemma~\ref{lemma:fastOR_QET} in the main text.

\subsection{Proof of the eigenvalue Projection Lemma}
\label{app:eigendisclemma}
In this appendix, we prove Lemma~\ref{lemma:eigenvalue_disc_density} in the main text. 
To this end, we first introduce the following Lemma.
\begin{lemma}[Eigenvalue discrimination~\cite{Gilyen2019}]
\label{lemma:eigenvalue_disc}
    Let $b>a>0$, $0<\delta'<1$, and $U_\Lambda$ be a block encoding of a Hermitian matrix $\Lambda$. Let $\ket{\psi}$ be a quantum state given either by $\ket{\psi}=\sum_{\lambda\ge b}\alpha_\lambda\ket{\lambda}$ or $\ket{\psi}=\sum_{\lambda\le a}\beta_\lambda\ket{\lambda}$ for a set of eigenvalues and eigenstates $\{\lambda, \ket{\lambda}\}$ of $\Lambda$.
    Then, we can discriminate the two cases with error probability at most $\delta'$ with $\mathcal{O}((\max[b-a,\sqrt{1-a^2}-\sqrt{1-b^2}])^{-1}\log\frac{1}{\delta'})$ uses of $U_\Lambda$ and other elementary gates.
\end{lemma}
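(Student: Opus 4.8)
The plan is to reduce the promised discrimination to a phase-discrimination problem on a qubitized walk operator built from $U_\Lambda$. Following the qubitization construction of \cite{Gilyen2019}, the block encoding $U_\Lambda$ together with the reflection $2(I\otimes\ket{0^a}\bra{0^a})-I$ about the ancilla subspace decomposes the space into mutually invariant two-dimensional blocks, one for each eigenvalue $\lambda$ of $\Lambda$, on which the associated walk operator $W$ acts as a rotation with eigenphases $\pm\arccos\lambda$. Since $\ket{\psi}\otimes\ket{0^a}$ lies entirely in the span of these blocks, and $\ket{\psi}$ is supported either on eigenvalues all $\ge b$ or all $\le a$, discriminating the two promised cases is equivalent to discriminating whether $|\arccos\lambda|\le\arccos b$ or $|\arccos\lambda|\ge\arccos a$. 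Here $\arccos$ is decreasing on $[-1,1]$, so $\lambda\ge b\Leftrightarrow\arccos\lambda\le\arccos b$ and $\lambda\le a\Leftrightarrow\arccos\lambda\ge\arccos a$, and the two phase intervals are separated by the gap $\Delta\theta:=\arccos a-\arccos b>0$.

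Next I would construct, by quantum signal processing applied to the walk operator $W$ rather than to $\Lambda$ itself, a transformation approximating the step function that separates eigenphases $\le\arccos b$ from $\ge\arccos a$. Gapped phase discrimination of this form (\cite{Gilyen2019}, analogous in degree scaling to the sign polynomial of Lemma~\ref{lemma:sign_poly_approx}) requires a polynomial of degree $\mathcal{O}(\frac{1}{\Delta\theta}\log\frac{1}{\delta'})$ in the eigenphase, hence $\mathcal{O}(\frac{1}{\Delta\theta}\log\frac{1}{\delta'})$ applications of $W$, each costing $\mathcal{O}(1)$ queries to $U_\Lambda$ and $U_\Lambda^\dagger$ plus $\mathcal{O}(1)$ elementary gates. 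Applying the resulting operator to $\ket{\psi}\otimes\ket{0^a}$ and measuring a single flag qubit then returns the correct label with failure probability at most $\delta'$. Crucially, working in the phase $\theta$ rather than in $\lambda=\cos\theta$ is exactly what replaces the crude separation $b-a$ one would get from applying Lemma~\ref{lem:QET} directly to $\Lambda$ by the generally larger angular gap $\arccos a-\arccos b$.

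It then remains to convert the angular gap into the stated form. Writing it as an integral, $\arccos a-\arccos b=\int_a^b\frac{dx}{\sqrt{1-x^2}}$, I would lower bound it two ways on $[a,b]\subset(0,1)$. Since $\sqrt{1-x^2}\le1$ the integrand is at least $1$, giving $\arccos a-\arccos b\ge b-a$; and since $x\le1$ we have $\frac{1}{\sqrt{1-x^2}}\ge\frac{x}{\sqrt{1-x^2}}$, whose integral is exactly $\sqrt{1-a^2}-\sqrt{1-b^2}$. Hence $\arccos a-\arccos b\ge\max[b-a,\sqrt{1-a^2}-\sqrt{1-b^2}]$, so the query count $\mathcal{O}(\frac{1}{\arccos a-\arccos b}\log\frac{1}{\delta'})$ is bounded by $\mathcal{O}((\max[b-a,\sqrt{1-a^2}-\sqrt{1-b^2}])^{-1}\log\frac{1}{\delta'})$, as claimed.

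The hard part will be the qubitization bookkeeping: verifying that the two-dimensional blocks are genuinely $W$-invariant, that $W$ carries the claimed $\arccos\lambda$ eigenphases, and that the flag measurement after the phase transformation maps cleanly onto the two cases with the stated one-sided error. This is precisely the content imported from \cite{Gilyen2019}, so I would invoke it rather than reprove it. The one genuinely self-contained computation is the integral inequality above, which is elementary but is exactly what produces the $\max$ in the stated complexity; a sign polynomial applied directly to $\Lambda$ would only reproduce the weaker $1/(b-a)$ scaling.
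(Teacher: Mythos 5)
Your proposal is correct, but it is worth noting that the paper itself offers no proof of this lemma at all: it is imported verbatim as a restatement from Ref.~\cite{Gilyen2019} and used as a black box in the proof of Lemma~\ref{lemma:eigenvalue_disc_density}. What you have done is reconstruct a proof of the cited result, and your route differs slightly from the one in \cite{Gilyen2019}: there, the $\max$ arises from a case split, applying a sign-polynomial (as in Lemma~\ref{lemma:sign_poly_approx}) via singular value transformation either to the block-encoded matrix directly (exploiting the gap $b-a$) or to its complementary block whose singular values $\sqrt{1-\lambda^2}$ have gap $\sqrt{1-a^2}-\sqrt{1-b^2}$, whichever is larger. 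Your angular formulation subsumes both branches at once: writing $x=\cos\theta$ gives $b-a=\int_{\theta_b}^{\theta_a}\sin\theta\,d\theta$ and $\sqrt{1-a^2}-\sqrt{1-b^2}=\int_{\theta_b}^{\theta_a}\cos\theta\,d\theta$, so besides your lower bound $\arccos a-\arccos b\ge\max[b-a,\sqrt{1-a^2}-\sqrt{1-b^2}]$ one also has $\arccos a-\arccos b\le (b-a)+(\sqrt{1-a^2}-\sqrt{1-b^2})\le 2\max[\cdot]$ (using $\sin\theta+\cos\theta\ge1$ on $[0,\pi/2]$, valid since $a>0$), i.e., the phase-gap and $\max$-gap formulations are equivalent up to a factor of $2$, and your single phase-discrimination argument buys a cleaner unified picture where the reference needs two cases. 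Two minor points: your qubitization step needs the standard caveat that a generic block encoding requires one extra ancilla and a controlled-$U_\Lambda$ to be qubitized, and since the walk eigenphases come in pairs $\pm\arccos\lambda$ your step function must be even in the phase (equivalently, a polynomial in $\lambda=\cos\theta$), which is also what makes the degree-$\mathcal{O}\big(\frac{1}{\Delta\theta}\log\frac{1}{\delta'}\big)$ claim the correct trigonometric-approximation statement; also the error guarantee is two-sided (at most $\delta'$ in each promised case), not one-sided as you write, though your construction in fact delivers this. Neither affects correctness.
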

This Lemma allows us to construct an approximate eigenvalue discrimination operator $P^\text{ED}(\Lambda)$ using $\mathcal{O}((\max[b-a,\sqrt{1-a^2}-\sqrt{1-b^2}])^{-1}\log\frac{1}{\delta'})$ uses of $U_\Lambda$ and other gates.

Let us move on to the proof of Lemma~\ref{lemma:eigenvalue_disc_density}.
We repeat the Lemma for completeness. 
\begin{lemma}[Eigenvalue projection]
\label{lemma:eigenvalue_disc_density2}
    Let $b>a>0$, $0<\delta'<1$, and $U_\Lambda$ be a block encoding of a Hermitian matrix $\Lambda$. Suppose a quantum state~$\rho$ is promised to satisfy either $\tr[\rho \Pi_{\lambda\ge b}]\ge p_b$ or $\tr[\rho \Pi_{\lambda\le a}]\ge p_a$ for eigenvalues $\lambda$ of $\Lambda$.
    Then, we can construct an algorithm to accept $\rho$ with probability $p_b (1-\delta')^2$ in case $\tr[\rho \Pi_{\lambda\ge a}]\ge p_b$ holds, and accept it with probability $\delta'^2 + (1-p_a)$ in case $\tr[\rho \Pi_{\lambda\le a}]\ge p_a$ holds. The algorithm is implemented with $\mathcal{O}((\max[b-a,\sqrt{1-a^2}-\sqrt{1-b^2}])^{-1}\log\frac{1}{\delta'})$ uses of $U_\Lambda$ and other elementary gates.
\end{lemma}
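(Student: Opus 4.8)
The plan is to reduce this mixed-state statement to the pure-state eigenvalue discrimination primitive (Lemma~\ref{lemma:eigenvalue_disc}) by exploiting the fact that the acceptance probability only sees the diagonal of $\rho$ in the eigenbasis of $\Lambda$. Concretely, I would first invoke Lemma~\ref{lemma:eigenvalue_disc} to build, via QET, a block encoding $U_{P^{\text{ED}}(\Lambda)}$ of the approximate step function $P^{\text{ED}}(\Lambda)$ satisfying $P^{\text{ED}}(x)\ge 1-\delta'$ on $[b,1]$, $|P^{\text{ED}}(x)|\le \delta'$ on $[-a,a]$, and $|P^{\text{ED}}(x)|\le 1$ throughout $[-1,1]$. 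The test accepts $\rho$ precisely when, after applying $U_{P^{\text{ED}}(\Lambda)}$ to $\rho\otimes\ket{0^y}\bra{0^y}$, the ancilla register is measured in $\ket{0^y}$. The query complexity is inherited directly from Lemma~\ref{lemma:eigenvalue_disc}, namely $\calO\big((\max[b-a,\sqrt{1-a^2}-\sqrt{1-b^2}])^{-1}\log\tfrac{1}{\delta'}\big)$ uses of $U_\Lambda$.

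The core of the argument is to express the acceptance probability in closed form. Since $P^{\text{ED}}(\Lambda)$ is a real polynomial in the Hermitian matrix $\Lambda$, it is itself Hermitian and diagonal in the eigenbasis $\{\ket{\lambda}\}$ of $\Lambda$. Writing out the block-encoding measurement, the probability of accepting is
\[
    p_{\text{acc}}
    = \tr\big[(I\otimes\ket{0^y}\bra{0^y})\, U_{P^{\text{ED}}(\Lambda)}(\rho\otimes\ket{0^y}\bra{0^y})U_{P^{\text{ED}}(\Lambda)}^\dagger\big]
    = \tr[\rho\, P^{\text{ED}}(\Lambda)^2]
    = \sum_\lambda P^{\text{ED}}(\lambda)^2\,\bra{\lambda}\rho\ket{\lambda}.
\]
The key observation, which is what lets the mixed-state case go through exactly as the pure-state one, is that only the diagonal matrix elements $\bra{\lambda}\rho\ket{\lambda}$ enter, so the coherences of $\rho$ across different eigenspaces of $\Lambda$ drop out entirely.

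With this formula the two bounds follow by a direct case split on the support of the sum. In the case $\tr[\rho\,\Pi_{\lambda\ge b}]\ge p_b$, I would restrict the sum to $\lambda\ge b$ and use $P^{\text{ED}}(\lambda)^2\ge(1-\delta')^2$ to obtain $p_{\text{acc}}\ge(1-\delta')^2\sum_{\lambda\ge b}\bra{\lambda}\rho\ket{\lambda}=(1-\delta')^2\tr[\rho\,\Pi_{\lambda\ge b}]\ge p_b(1-\delta')^2$. In the case $\tr[\rho\,\Pi_{\lambda\le a}]\ge p_a$, I would split the sum at $a$, bounding $P^{\text{ED}}(\lambda)^2\le\delta'^2$ on $\{\lambda\le a\}$ and $P^{\text{ED}}(\lambda)^2\le 1$ on $\{\lambda>a\}$, which yields $p_{\text{acc}}\le\delta'^2\,\tr[\rho\,\Pi_{\lambda\le a}]+\tr[\rho\,\Pi_{\lambda>a}]\le\delta'^2+(1-p_a)$, where the final step uses $\tr[\rho\,\Pi_{\lambda\le a}]\le 1$ together with $\tr[\rho\,\Pi_{\lambda>a}]=1-\tr[\rho\,\Pi_{\lambda\le a}]\le 1-p_a$.

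The main obstacle is conceptual rather than computational: a mixed $\rho$ can carry weight in the uncontrolled gap $(a,b)$ and off-diagonal coherences, yet the bounds must remain clean. This is resolved entirely by the diagonalization observation above, because the accept operator is a function of $\Lambda$ alone: the gap region contributes only a nonnegative term bounded by its total weight, which is harmlessly absorbed into $1-p_a$ in the second case and only improves the lower bound in the first. A minor point I would flag is the evident typo in the hypothesis ``$\tr[\rho \Pi_{\lambda\ge a}]\ge p_b$'', which should read $\tr[\rho\,\Pi_{\lambda\ge b}]\ge p_b$; I would state the bound consistently in terms of $\Pi_{\lambda\ge b}$.
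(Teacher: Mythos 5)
Your proposal is correct and follows essentially the same route as the paper's proof in Appendix~\ref{app:eigendisclemma}: build the approximate step-function polynomial $P^{\text{ED}}(\Lambda)$ via Lemma~\ref{lemma:eigenvalue_disc}, note that the acceptance probability equals $\tr[\rho\,P^{\text{ED}}(\Lambda)^2]$, and bound it by a case split over the eigenspaces of $\Lambda$, with the query complexity inherited directly from the eigenvalue discrimination lemma. Your write-up is in fact somewhat more explicit than the paper's (the diagonal-sum formula and the handling of the gap region $(a,b)$ are spelled out, and you correctly flag the $\Pi_{\lambda\ge a}$ vs.\ $\Pi_{\lambda\ge b}$ typo), but there is no substantive difference in approach.
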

\begin{proof}
    According to Lemma~\ref{lemma:eigenvalue_disc}, when $\tr[\rho \Pi_{\lambda\ge b}]\ge p_b$ is satisfied, we have
    \begin{equation}
        \tr[P^\text{ED}(\Lambda) \rho P^\text{ED}(\Lambda)]
        \ge
        p_b P^\text{ED}(\lambda\ge b)^2
        \ge p_b (1-\delta')^2.
    \end{equation}
    In case $\tr[\rho \Pi_{\lambda\le a}]\ge p_a$ is satisfied, we have
    \begin{equation}
    \begin{aligned}
        \tr[P^\text{ED}(\Lambda) \rho P^\text{ED}(\Lambda)]
        &\le
        P^\text{ED}(\lambda\le a)^2 + (1-p_a)P^\text{ED}(\lambda\ge b)^2
        \\
        &\le 
        \delta'^2 + (1-p_a).
    \end{aligned}
    \end{equation}
    This completes the proof.
\end{proof}
%---------------------------%
\subsection{Block-encoding of the Majority Vote Operator}
\label{app:majorblock}

Here, we show how to construct a block-encoding of the operator [Eq.~\eqref{eq:amp_proj}],
\begin{equation}
\label{eq:amp_projector}
    \Lambda:=\frac{1}{x}\sum_{i=0}^{x-1}\Lambda_i,
\end{equation} 
where $\Lambda_i=\ket{0^\ell}\bra{0^\ell}$. 
First, we note that a controlled $\Lambda_i$ reflection operator,
\begin{equation}
\label{eq:Lambda_reflection}
    \sum_{i=0}^{x-1} (2\Lambda_i - I)\otimes\ket{i}\bra{i},
\end{equation}
is implemented by LCU with $\tilde{\calO}(x)$ elementary gates and $y-1=\lceil{\log x}\rceil$ ancillary qubits.
Now, we append a single qubit to turn the reflection into the following operator,
\begin{equation}
    \sum_{i=0}^{x-1} (I\otimes\ket{0}\bra{0} + (2\Lambda_i - I)\otimes \ket{1}\bra{1})\otimes\ket{i}\bra{i}.
\end{equation}
Sandwiching it by $I\otimes H\otimes H^{\otimes \log x}$ and its conjugate, we find
\begin{align}
\label{eq:C_Lambda_NOT}
\begin{split}
    &\sum_{i=0}^{x-1} (I\otimes\ket{+}\bra{+} + (2\Lambda_i - I)\otimes\ket{-}\bra{-})\otimes H^{\otimes\log x}\ket{i}\bra{i}H^{\otimes\log x}
    \\
    &=\sum_{i=0}^{x-1} (\Lambda_i\otimes I + (I-\Lambda_i)\otimes X)\otimes H^{\otimes\log x}\ket{i}\bra{i}H^{\otimes\log x}
    \\
    &=:U_\Lambda.
\end{split}
\end{align}
Upon projecting onto the state $I\otimes \ket{0^{y}}$, we find
\begin{align}
    (I\otimes \bra{0^{y}})U_\Lambda(I\otimes \ket{0^{y}})
    =\frac{1}{x}\sum_{i=0}^{x-1}\Lambda_i
    \overset{\eqref{eq:amp_projector}}{=} \Lambda,
\end{align}
i.e., $U_\Lambda$ is a block encoding of $\Lambda$.

%-------------------------------%
\subsection{Number of copies for amplifying probability gap}
\label{app:copies}

We wish to distinguish between case $(i)$ $p:=\tr[\rho_\tau \calA_j]\ge\frac{2+\epsilon}{4}+\epsilon_\text{gap}$ and case $(ii)$ $p:=\tr[\rho_\tau \calA_j]\le\frac{2+\epsilon}{4}$.
However, by measuring the ancillary registers of the $x$ copies we only have access to random variables $\{x_i\}_{i=1}^{x}$, each of which takes $1$ with probability
\begin{align}
    \tilde{p} 
    := \bra{\psi_\tau} \calA_j \ket{\psi_\tau},
\end{align}
and takes $0$ with probability $1-\tilde{p}$.
The two probabilities $p$ and $\tilde{p}$ are $\frac{\xi}{4}$-close with high probability,
\begin{align}
\label{eq:prob_TPQ}
    \text{Pr}\left[|p-\tilde{p}|\ge\frac{\xi}{4}\right]
    \le C_\xi e^{-\alpha n},
\end{align}
i.e., $\ket{\psi_\tau}$ is a TPQ state associated with the Gibbs state $\rho_\tau$ (Definition~\ref{def:TPQ_state}).

Using the $x$ random variables $\{x_i\}$ sampled from a fixed TPQ state $\ket{\psi_\tau}$, we conclude that the case $(i)$ is satisfied if $\sum_i\frac{x_i}{x} \ge \frac{2+\epsilon}{4}+\frac{\epsilon_\text{gap}}{2}$ and the case $(ii)$ is satisfied if $\sum_i\frac{x_i}{x} < \frac{2+\epsilon}{4}$.
We show that our conclusion is correct with high probability.
In the case $(i)$, the probability of acceptance is
\begin{align}
\begin{split}
    &\text{Pr}\left[
    \sum_i\frac{x_i}{x} 
    \ge \frac{2+\epsilon}{4}+\frac{\epsilon_\text{gap}}{2}
    \right]
    \\
    &\ge
    \text{Pr}\left[
    \sum_i\frac{x_i}{x} - \tilde{p}
    \ge - \left(\frac{\epsilon_\text{gap}}{2} - |p-\tilde{p}|\right)
    \right]
    \\
    &\ge
    \text{Pr}\left[
    \sum_i\frac{x_i}{x} - \tilde{p}
    \ge - \left(\frac{\epsilon_\text{gap}}{2} - |p-\tilde{p}|\right)
    \left|\,
    |p-\tilde{p}|\le \frac{\xi}{4}
    \right.
    \right]
    \text{Pr}\left[
    |p-\tilde{p}|\le \frac{\xi}{4}
    \right]
    \\
    &\ge
    1 
    - e^{-2x\big(\frac{\epsilon_\text{gap}}{2} - \frac{\xi}{4}\big)^2}
    - C_\xi e^{-\alpha n}.
\end{split}
\end{align}
We used $p\ge\frac{2+\epsilon}{4}+\epsilon_\text{gap}$ in the first inequality, and the last inequality follows from Chernoff bound.\footnote{
Chernoff bounds (one-sided Chernoff-Hoeffding inequalities) are given by
\begin{equation}
\label{eq:chernoff}
    \mathrm{Pr}\left(X-\mu \geq \delta\right)\le e^{-2\delta^2/n},
    \qquad
    \mathrm{Pr}\left(X-\mu \geq -\delta\right)\ge 1-e^{-2\delta^2/n}, 
\end{equation} 
where $X$ is the sum of $n$ independent random variables taking values in $[0, 1]$, $\mu$ is the sum's expectation value, and $\delta>0$.
}

Similarly, in the case $(ii)$, we find the probability of acceptance,
\begin{align}
\begin{split}
    &\text{Pr}\left[
    \sum_i\frac{x_i}{x} 
    \ge \frac{2+\epsilon}{4}+\frac{\epsilon_\text{gap}}{2}
    \right]
    \\
    &\le
    \text{Pr}\left[
    \sum_i\frac{x_i}{x} - \tilde{p}
    \ge \frac{\epsilon_\text{gap}}{2} - |p-\tilde{p}|
    \right]
    \\
    &=
    \text{Pr}\left[
    \sum_i\frac{x_i}{x} - \tilde{p}
    \ge - \left(\frac{\epsilon_\text{gap}}{2} - |p-\tilde{p}|\right)
    \left|\,
    |p-\tilde{p}|\le \frac{\xi}{4}
    \right.
    \right]
    \text{Pr}\left[
    |p-\tilde{p}|\le \frac{\xi}{4}
    \right]
    \\
    &+
    \text{Pr}\left[
    \sum_i\frac{x_i}{x} - \tilde{p}
    \ge - \left(\frac{\epsilon_\text{gap}}{2} - |p-\tilde{p}|\right)
    \left|\,
    |p-\tilde{p}|\ge \frac{\xi}{4}
    \right.
    \right]
    \text{Pr}\left[
    |p-\tilde{p}|\ge \frac{\xi}{4}
    \right]
    \\
    &\le
    e^{-2x\big(\frac{\epsilon_\text{gap}}{2} - \frac{\xi}{4}\big)^2}
    + C_\xi e^{-\alpha n}.
\end{split}
\end{align}
where we used $p\le\frac{2+\epsilon}{4}$ in the first inequality.

Setting $\epsilon_\text{gap}=\xi$, we correctly distinguish the two cases with probability at least
$1 - \frac{\delta}{m+1} - C_\xi e^{-\alpha n}$ using the number  of copies, $x=\frac{8}{\xi^2}\ln\frac{m+1}{\delta}$.

%-------------------------------%
\subsection{Projectors}
\label{app:projector_inequality}

We derive Eqs.~\eqref{eq:projector_case1} and \eqref{eq:projector_case2} by calculating $\tr[(\tilde{\rho}\otimes\ket{0^y}\bra{0^y})\Pi_j]$.
Recall that $\tr[(\tilde{\rho}\otimes\ket{0^y}\bra{0^y})\Pi_j]$ is the probability that the projector $\Pi_j$ [Eq.~\eqref{eq:amp_proj}] accepts the state $\tilde{\rho}\otimes\ket{0^y}\bra{0^y}$ with
\begin{equation}
    \tilde{\rho}
    := 
    \bigotimes_{i=0}^{x-1} 
    \big(\ket{\tilde{\psi}_{\tau}}\bra{\tilde{\psi}_{\tau}}
    \otimes \ket{0^\ell}\bra{0^\ell}\big)
\end{equation}
a TPQ state $\ket{\tilde{\psi}}$.

Following the discussion in Appendix~\ref{app:copies}, we set $x=\frac{8}{\xi^2}\ln\frac{m+1}{\delta}$ in order to incorporate the TPQ error $\xi$.
Under this setup, in case~$(i)$ of Lemma~\ref{lemma:fastOR_QET} we have either $\sum_{i=0}^{x-1}\frac{x_i}{x} \ge \frac{1}{2}+\frac{\epsilon+2\xi}{4}$ with probability at least $1-\frac{\delta}{m+1}-C_\xi e^{-\alpha n}$, or $\sum_{i=0}^{x-1}\frac{x_i}{x} < \frac{1}{2}+\frac{\epsilon}{4}$ with probability at most $\frac{\delta}{m+1}+C_\xi e^{-\alpha n}$. 
Applying Lemma~\ref{lemma:eigenvalue_disc_density}, we find
\begin{align}
\begin{split}
    &\tr[(\tilde{\rho}\otimes\ket{0^y}\bra{0^y})\Pi_j]
    \\
    &\ge
    \mathrm{Pr}\left[\sum_{i=0}^{x-1}\frac{x_i}{x}\ge \frac{1}{2}+\frac{\epsilon+2\xi}{4}\right] 
    P^{\text{ED}}\left(\sum_{i=0}^{x-1}\frac{x_i}{x}\ge \frac{1}{2}+\frac{\epsilon+2\xi}{4}\right)^2
    \\
    &\ge
    1-3\delta/(m+1)-C_\xi e^{-\alpha n},
\end{split}
\end{align}
where $\sum_i\frac{x_i}{x}$ is an eigenvalue of the operator $\Lambda$ [Eq.~\eqref{eq:amp_projector}].
On the other hand, in case $(ii)$, we have either $\sum_{i=0}^{x-1}\frac{x_i}{x}\ge \frac{1}{2}+\frac{\epsilon+2\xi}{4}$ with probability at most $\frac{\delta}{m+1}+C_\xi e^{-\alpha n}$, or $\sum_{i=0}^{x-1}\frac{x_i}{x}< \frac{1}{2}+\frac{\epsilon}{4}$ with probability at least $1-\frac{\delta}{m+1}-C_\xi e^{-\alpha n}$, which leads to
\begin{align}
\begin{split}
    &\tr[(\tilde{\rho}\otimes\ket{0^y}\bra{0^y})\Pi_j]
    \\
    &\le 
    \mathrm{Pr}\left[\sum_{i=0}^{x-1}\frac{x_i}{x}\ge \frac{1}{2}+\frac{\epsilon+2\xi}{4}\right] 
    + 
    P^{\text{ED}}\left(\sum_{i=0}^{x-1}\frac{x_i}{x}< \frac{1}{2}+\frac{\epsilon}{4}\right)^2
    \\
    &\le 
    2\delta/(m+1) + C_\xi e^{-\alpha n}.
\end{split}
\end{align}

%-------------------------------%
\subsection{Gap amplification of OR lemma}
\label{app:amplify_OR_gap}

Lemma~\ref{lemma:fastOR_QET} with the $m+1$ projectors given by Eq.~\eqref{eq:amp_proj} states that in case $(i)$, the algorithm accepts with probability at least $P_1=(1-3\delta/(m+1)-C_\xi e^{-\alpha n})^2/4-\zeta$, and in case $(ii)$ the algorithm accepts with probability at most $P_2=10\delta+ 5(m+1)C_\xi e^{-\alpha n}+\zeta$. 
To increase the probability of successfully distinguishing the two cases, we amplify the gap $P_1-P_2$ by repeating the test $K$ times. 
If we set $\delta = 1/184$ and $\zeta = 1/32$, then the original gap $P_1-P_2\ge1/8-\calO(e^{-\alpha n})$ is
\begin{equation}
\label{eq:P12}
\begin{aligned}
    P_1 - P_2 
    &= \frac{(1-3\delta/(m+1))^2-C_\xi e^{-\alpha n}}{4}-\zeta 
    - (10\delta+ 5(m+1)C_\xi e^{-\alpha n}+\zeta)
    \\
    &\ge \frac{1}{8} - \calO(m e^{-\alpha n}).
\end{aligned}
\end{equation}
We assume this is a positive quantity of $\calO(1)$.

We let $X_i = 1$ if the $i^\mathrm{th}$ repetition of the test accepts, and $X_i = 0$ if it does not.  Given the count of acceptances $\sum_{i=1}^{K}X_i$, we conclude that there is a broken constraint if $\sum_{i=1}^K X_i \geq K\frac{P_1+P_2}{2}$, and all the constraints are satisfied if $\sum_{i=1}^K X_i \leq K\frac{P_1+P_2}{2}$. 
Making use of the Chernoff bound [Eq.~\eqref{eq:chernoff}],
we find the success probabilities
\begin{equation}
\begin{aligned}
    \mathrm{Pr}\left(\sum_{i=1}^K X_i\geq K \frac{P_1+P_2}{2}\right)
    \ge
    1 - e^{-(P_1-P_2)^2 K/2}
\end{aligned}
\end{equation}
in case $(i)$, and
\begin{equation}
\begin{aligned}
    \mathrm{Pr}\left(\sum_{i=1}^K X_i\leq K\frac{P_1+P_2}{2}\right)
    \ge
    1 - e^{-(P_1-P_2)^2 K/2}
\end{aligned}
\end{equation}
in case $(ii)$. 
With $K = \calO\big(\ln\frac{\log_2m}{\tilde{\delta}}\big)$, the test yields correct outcome with probability at least $1-\calO\big[\big(\tilde{\delta}/\log_2m\big)^{1-\calO(m e^{-\alpha n})}\big] = 1-\tilde{\calO}\big(\tilde{\delta}^{1-\calO(m e^{-\alpha n})}/\log_2m\big)$. 

%%%%%%%%%%%%%%%%%%%%%%%%%%%%%%%%%%%%%

\newpage 
\bibliographystyle{quantum}
\bibliography{quantum_refs.bib}

\end{document}